\setlist[enumerate,1]{label=(\roman*)}
\newcommand*\bigcdot{\mathpalette\bigcdot@{.6}}
\newcommand*\bigcdot@[2]{\mathbin{\vcenter{\hbox{\scalebox{#2}{$\m@th#1\bullet$}}}}}
\def\E{{\mathbb E}}
\def\F{{\mathcal F}}
\def\G{{\mathcal{G}}}
\def\I{{\cal I}}
\def\P{{\mathbb P}}
\def\Q{{\mathbb Q}}
\def\R{{\mathbb R}}
\def\I\nd{{\mathbb I}}
\renewcommand{\subset}{\subseteq}
\newtheorem{theorem}{Theorem}[section]
\newtheorem{lemma}[theorem]{Lemma}
\newtheorem{proposition}[theorem]{Proposition}
\newtheorem{definition}[theorem]{Definition}
\newtheorem{remark}[theorem]{Remark}
\newtheorem{example}[theorem]{Example}
\renewcommand{\epsilon}{\varepsilon}
\newcommand{\nd}{\text{nd}}
\renewcommand{\langle}{[}
\renewcommand{\rangle}{]}
\newcommand{\id}{\text{Id}}
\newcommand{\cpl}{\mathrm{Cpl}}
\newcommand{\cpla}{\mathrm{Cpl}_{\mathrm{C}}}
\newcommand{\cplba}{\mathrm{Cpl}_{\mathrm{BC}}}
\newcommand{\OB}{Ob{\l}{\'o}j}
\numberwithin{equation}{section}
\title{Adapted Wasserstein Distances and Stability in Mathematical Finance
}
\author{J.\ Backhoff-Veraguas, D.\ Bartl, M.\ Beiglb\"ock, M.\ Eder}
\begin{document}
\begin{abstract}
Assume that an agent models a financial asset through a measure $\Q$ with the goal to price / hedge some derivative or optimize some expected utility. Even if the model $\Q$ is chosen  in the most skilful and sophisticated way, she is left with the possibility that $\Q$ does not provide an  \emph{exact} description of reality.  This leads us to the following question: will the hedge still be somewhat meaningful for models in the proximity of $\Q$?


If we measure proximity with the usual Wasserstein distance (say), the answer is NO. Models which are similar w.r.t.\ Wasserstein distance may provide dramatically different 
information on which to base a hedging strategy.

Remarkably, 
this can be overcome by considering a suitable \emph{adapted} version of the Wasserstein distance which takes the temporal structure of pricing models into account. This adapted Wasserstein distance is most closely related to the nested distance as pioneered by Pflug and Pichler \cite{Pf09,PfPi12,PfPi14}. It allows us to establish Lipschitz properties of hedging strategies for semimartingale models in discrete and continuous time. Notably, these abstract  results are sharp already for Brownian motion and European call options.
\end{abstract} 

\maketitle

{\bf Keywords:}  
Hedging, utility maximization, optimal transport, causal optimal transport, Wasserstein distance, sensitivity, stability.\\
{{\bf AMS subject classifications (2010)} 91G80, 60G42, 60G44, 90C15  }

\section{Introduction}

\subsection{Outline}
Assume that a reference measure $\P$ is used to model the evolution of a financial asset $X$ with the purpose to hedge a financial claim or to maximize some expected utility. We do not expect that the model $\P$ captures reality in an absolutely accurate way. However, supposing that $\P$ is close enough to reality  (described  by a probability $\Q$)  we would still hope that a strategy which is developed for $\P$ leads to reasonable results. 
 
A main goal of this paper  is to establish this intuitive idea rigorously based on a new notion of \emph{adapted Wasserstein distance} $\mathcal{AW}_p$ between semimartingale measures. 
To fix ideas, we provide a first example of the results we are after.

\begin{theorem}
\label{thm:hedging.with.loss.linear.intro}
	Let $\P, \Q$ be continuous semimartingale models for the asset price process $X$,  and assume that $C(X)$ denotes an $L$-Lipschitz payoff of a (pathdependent) derivative $C$.  
		Assume that a predictable trading strategy $H=(H_t)_t$, $|H| \leq k$ and an initial endowment $m\in \R$ constitute a $\mathbb{P}$-superhedge of $C(X)$, i.e.
		\[ C(X) \leq m +(H\bigcdot X)_T, \quad \mathbb{P}\text{-almost surely}. \] 
						Then there is  a predictable $G$ s.t.\ $m,G$ constitute an ``almost'' $\mathbb{Q}$-superhedge:
	\begin{align}\label{eq almost superhedge intro}
	\mathbb{E}_\mathbb{Q}[(C(X)-m-(G\bigcdot X)_T)^+]\leq 6(k+L)\cdot \mathcal{AW}_1(\P, \Q).
	\end{align}
\end{theorem}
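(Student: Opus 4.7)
The plan is to use a near-optimal bicausal coupling $\pi$ between $\P$ and $\Q$ to transport the $\P$-superhedge $H$ into a $\Q$-predictable strategy $G$ via conditional expectation, and then to compare the two stochastic integrals pathwise on the coupling space. Concretely, fix $\eps>0$ and choose $\pi\in\cplba(\P,\Q)$ achieving $\E_\pi[\|X-Y\|_\infty]\leq \AW_1(\P,\Q)+\eps$; on the underlying path-space product, let $X$ and $Y$ denote the canonical coordinate processes (with marginals $\P$ and $\Q$) and $\F^X$, $\F^Y$ the filtrations they generate. Then define
\[
G_t:=\E_\pi\big[H_t\,\big|\,\F^Y_t\big],
\]
which is $\F^Y$-predictable thanks to bicausality of $\pi$ and satisfies $|G|\leq k$ by Jensen.

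Next, using that $C(X)\leq m+(H\bigcdot X)_T$ holds $\pi$-a.s.\ and the Lipschitz property of $C$, one rewrites
\[
C(Y)-m-(G\bigcdot Y)_T \;\leq\; L\,\|X-Y\|_\infty + \big((H\bigcdot X)_T-(G\bigcdot Y)_T\big),
\]
and splits the stochastic-integral difference as
\[
(H\bigcdot X)_T-(G\bigcdot Y)_T \;=\; \big(H\bigcdot(X-Y)\big)_T + \big((H-G)\bigcdot Y\big)_T.
\]
The first summand is controlled pathwise through integration by parts for continuous semimartingales, yielding a bound of order $k\,\|X-Y\|_\infty$. For the second, the bicausal structure of $\pi$ ensures that the canonical semimartingale decomposition of $Y$ under $\Q$ persists under $\pi$ relative to the enlarged filtration $\F^X\vee\F^Y$; combined with $G_t=\E_\pi[H_t\mid\F^Y_t]$, the martingale part of $(H-G)\bigcdot Y$ has zero $\pi$-expectation, while its finite-variation part contributes again at most a constant multiple of $k$ times the coupling cost. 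Taking positive parts and $\E_\pi$, projecting back down to the $\Q$-marginal, and letting $\eps\to0$, the estimates combine to give the bound $6(k+L)\,\AW_1(\P,\Q)$.

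The main obstacle is the stochastic-integral comparison, which is precisely what fails under ordinary (non-adapted) Wasserstein couplings. Two ingredients are needed there: a pathwise bound on $(H\bigcdot(X-Y))_T$ that only uses $\|H\|_\infty\leq k$ and a path-metric controlling $X-Y$; and the non-trivial fact that, under a bicausal coupling, each marginal process retains its natural semimartingale decomposition in the joint filtration. The latter is really the heart of the argument and explains why $\AW_1$, rather than the classical Wasserstein distance, is the correct quantity on the right-hand side.
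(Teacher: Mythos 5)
Your high-level plan (near-optimal bicausal coupling, projection of $H$ onto the $Y$-filtration, decomposition of the integral difference, use of bicausality) is in the right spirit, but several of the concrete steps do not hold up and mark a genuine gap from the paper's actual argument.

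First, the definition of $\mathcal{AW}_1$ you use is not the paper's. You fix $\pi$ so that $\E_\pi[\|X-Y\|_\infty]\le \AW_1(\P,\Q)+\eps$, implicitly treating $\AW_1$ as a bicausal Wasserstein distance for the sup-norm. The paper instead defines $\AW_p$ via the semimartingale decomposition of $X-Y$, namely $\E_\pi[\langle M\rangle_T^{p/2}+|A|_{\mathrm{1\text{-}var}}^p]^{1/p}$. The sup-norm quantity is controlled by this (one direction of BDG, Lemma~\ref{lem:estimating.integrals}), but the two are not the same, and for semimartingales with nontrivial drift a sup-norm cost would be too coarse (this is the whole point of Section~\ref{sec:ChoiceOfCost}). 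Using the correct cost functional is what makes the BDG estimates land.

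Second — and this is the more serious failure — the claim that $(H\bigcdot(X-Y))_T$ can be bounded pathwise of order $k\|X-Y\|_\infty$ ``through integration by parts'' is false. Integration by parts would require $H$ to be a semimartingale, which a generic bounded predictable process is not; and even when $H$ is a semimartingale, the resulting term $((X-Y)\bigcdot H)_T$ is governed by the total variation of $H$, not $\|H\|_\infty\le k$. There is no pathwise sup-norm control of stochastic integrals against a small-sup-norm integrator; this is exactly why the paper works in expectation, splitting $X-Y$ into its martingale and finite-variation parts under $\pi$ and applying the BDG inequality to the martingale part (Lemma~\ref{lem:estimating.integrals}, second estimate). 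This replaces your pathwise step and is not cosmetic.

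Third, your handling of $((H-G)\bigcdot Y)_T$ is also not sound. Noting that ``the martingale part has zero $\pi$-expectation'' does not help once you take positive parts, and the claim that ``its finite-variation part contributes at most a constant multiple of $k$ times the coupling cost'' is wrong: the finite-variation part of $Y$ is $A^Y$, and $|A^Y|_{\mathrm{1\text{-}var}}$ is not part of the coupling cost, which involves $A^X-A^Y$, not $A^Y$ itself. What is actually used in the paper is the projection identity $(G(Y)\bigcdot Y)_T=\E_\pi[(H(X)\bigcdot Y)_T\mid Y]$ (Lemma~\ref{lem:projection.of.strategy}/\ref{lem technical proj}, itself nontrivial in continuous time), which reduces the comparison to $\E_\pi[(H(X)\bigcdot(Y-X))_T\mid Y]$, followed by a Jensen-type inequality for the convex function $(\cdot-m)^+$ (Lemma~\ref{lem:trivial.lemma}) and the weak transport distance $d_1^w$. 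Jensen is the step that legitimately moves the conditional expectation inside the positive part; nothing in your sketch plays that role. Without the projection identity plus Jensen and the BDG-in-expectation estimates, the final bound $6(k+L)\AW_1(\P,\Q)$ does not follow.
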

While the adapted Wasserstein distance will be defined in abstract terms (see \eqref{AWD}), it relates directly to the model parameters for `simple' models. In particular, if $\P, \Q$ are Brownian models with different volatilities, than the distance between these models is just the difference of these volatilities. Moreover, the  bound in \eqref{eq almost superhedge intro} (as well as further Lipschitz bounds given below) are already sharp in such a simple setting and for $C$ a European call option.

Below we will provide a number of results with similar flavour as Theorem \ref{thm:hedging.with.loss.linear.intro}. E.g.\ we will provide versions where the hedging error is controlled in terms of risk measures and we will show that a Lipschitz bound of the type  \eqref{eq almost superhedge intro} applies (with bigger constants) if the same trading strategy $H$ is applied in the model $\P$ as well as in the model $\Q$. Importantly, we establish that comparable results of Lipschitz continuity apply to utility maximization {and utility indifference pricing}.

We emphasize  that familiar concepts such as the L\'evy-Prokhorov metric or the usual Wasserstein distance do not appear suitable to derive results comparable to Theorem \ref{thm:hedging.with.loss.linear.intro}. {E.g.\ in the vicinity of financial meaningful models there are models with \emph{arbitrarily high arbitrage} even for bounded strategies; similar phenomena appear w.r.t.\ completeness / incompleteness.}  Instead we introduce an adapted Wasserstein distance $\mathcal{AW}_p$  which takes the temporal structure of semimartingale models into account. These distances are conceptually closely related to the nested distance as pioneered by Pflug and Pichler \cite{PfPi12,PfPi14,PfPi16}; see \cite{AcBaZa16,GlPfPi17,BiTa19} for first articles which link such a type of distance to finance. 
  We  describe these contributions more closely in Section \ref{sec:literature} below. 
  
  \subsection{Notation and adapted Wasserstein distances}

Throughout we let $$\Omega := \R^T\,\,\text{   or   }\,\,\Omega:= C(0,T).$$ 
The first setting shall be referred to as the discrete time case, and the second as the continuous time case.\footnote{Indeed the arguments in the discrete and the continuous case use the same set of ideas but the presentation is significantly less technical in the discrete case which was an important reason to include the discrete case  in the paper.} {In the first case we denote by $I=\{1,\dots,T\}$ the time-index set, and in the second $I=[0,T]$.} 
Throughout the article we will provide definitions and results without specifying which of the two cases we are referring to: This means that the definitions / results apply in both cases. Only occasionally will we consider one case specifically, and in this situation we will state this explicitly. 

We interpret $\Omega$ as the set of all possible evolutions (in time) of the 1-dimensional asset price. Importantly, mutatis mutandis,  all our results ({except Propositions \ref{lem synchron}, \ref{geoBM}  and Example \ref{ex det diff}}) remain true for multi-dimensional asset price processes (corresponding to $\Omega=(\R^d)^T$ / $\Omega =C([0,T], \R^d)$). We chose to go for the 1-dimensional version to simplify notation.

The mappings $X,Y\colon\Omega\to\Omega$ denote the canonical processes (i.e.~the identity map),  and we make the convention that on $\Omega\times\Omega$ the process $X$ denotes the first coordinate and $Y$ the second one.
The spaces $\Omega$ and $\Omega\times\Omega$ are endowed with the maximum-norm and the corresponding Borel-$\sigma$-field. In continuous time, the space $\Omega$ is endowed with the right-continuous filtration generated by $X$, in discrete time we use the plain filtration generated by $X$. {In any case we denote this filtration by $\mathbb F=(\F_t)_t$  and endow $\Omega\times\Omega$ with the product filtration $(\F_t\otimes \F_t)_t$. Given a $\sigma$-algebra $\G$ and a probability $\P$ on $\G$ we write $\G^\P$ for the $\P$-completion of $\G$.} 
The set $\cpl(\P, \Q)$ of couplings between probability measures $\P, \Q$ consists of all probability measures $\pi$ on $\Omega \times \Omega$ such that $X(\pi)=\P$ and $Y(\pi)= \Q$. A Monge coupling is a coupling that is of the form $\pi= (\id, T)(\P)$ for some Borel mapping $T:\Omega\to \Omega$ that transports $\P$ to $\Q$, i.e.\ satisfies $T(\P)=\Q$.  Given a metric $d$ on $\Omega$ and $p\geq 1$, the $p$-Wasserstein distance of $\P, \Q$ is 
\begin{align}\label{eq:UsualW}
\mathcal{W}_p(\P, \Q)=\inf\left\{\E_\pi[d(X,Y)^p]^{1/p}: \pi \in \cpl(\P, \Q)\right\}.
\end{align}
In many cases of practical interest the infimum in \eqref{eq:UsualW} remains unchanged if one minimizes only over Monge couplings, cf.\ \cite{Pr07b}.

Before defining the adapted Wasserstein distance between measures $\P$ and $\Q$ on $\Omega$, let us hint why distances related to weak convergence are not suitable for the results we have in mind. 
Assume for example that we are interested in a utility maximization problem in two periods and that Figure \ref{fig:usual.wasserstein} describes the laws $\P, \Q$ of two traded assets. 
Clearly they are very close in Wasserstein distance, as follows from considering the obvious Monge coupling induced by $T: \Omega \to \Omega, T(\P)= \Q$ depicted in Figure 1. At the same time, the outcome of utility maximization is certainly very different. Similarly, $\P$ is a martingale measure while $\Q$ allows for arbitrage. The clear reason for that is the different structure of information available at time $1$.
 \begin{figure}[H]
 \centering
\begin{tikzpicture}[scale=0.8]
\draw[thick,->] (0,0)--(2.2,0);
\draw[thick,->] (0,0)--(0,2.2);
\draw[thick,blue] (0,1.02) -- (1,1.02)--(2,2);
\draw[thick,red] (0,1.0) -- (1,1.0)--(2,0); 
\draw[fill] (1,1.01) circle [radius=0.05];
\draw[fill] (2,0) circle [radius=0.05];
\draw[fill] (2,2) circle [radius=0.05];
\draw[thick,->] (5,0)--(7.2,0);
\draw[thick,->] (5,0)--(5,2.2);
\draw[thick,blue] (5,1.02) -- (6,1.12)--(7,2);
\draw[thick,red] (5,1.0) -- (6,0.9)--(7,0); 
\draw[fill] (6,1.12) circle [radius=0.05];
\draw[fill] (6,0.9) circle [radius=0.05];
\draw[fill] (7,0) circle [radius=0.05];
\draw[fill] (7,2) circle [radius=0.05];
\draw[->] (2.5,1) -- (4.5,1);
\node [above] at (3.5,1) {T};
\end{tikzpicture}
 \caption{Map $T$ sends the blue path on the left, to the blue path on the right, and similarly for the red paths. The stochastic processes depicted are close in Wasserstein sense, but very different for utility maximization.}
 \label{fig:usual.wasserstein}
\end{figure}
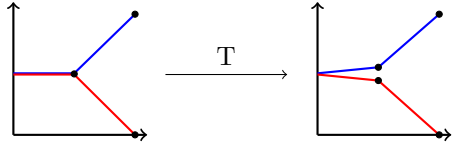
 \vspace{-3mm}
To exhibit why the Wasserstein distance does not reflect this different structure of information, let us review the transport condition $T(\P)= \Q$. We  rephrase it as 
\begin{align}\label{MassPreservation}
(T_1(X_1, X_2), T_2(X_1, X_2)) \quad  \sim  \quad (Y_1, Y_2).
\end{align}
While this condition  is of course perfectly natural in mass transport, \eqref{MassPreservation} almost seems like cheating when viewed from a probabilistic perspective: the map $T_1 $ should not be allowed to consider the future value $X_2$ in order to determine $Y_1$. To define an adapted version of the Wasserstein distance, the `process' $(T_i)_{i=1,2}$ should be taken to be \emph{adapted} in order to account for the different information structures of $\P$ and $\Q$. 

Naturally our official definition of adapted Wasserstein distances will not refer to adapted Monge transports but rather to  couplings which are `adapted' in an appropriate sense. Following Lassalle \cite{Las18}, we call such couplings (bi-)causal. Since the definition below may appear a bit technical at first glance, the following may be reassuring: 
In the discrete time setting and for absolutely  continuous measures $\P$, the weak closure of the set of adapted Monge couplings, i.e.\ $\pi= (\id, T)(\P)$ for $T$ adapted, is precisely the set of all causal couplings, see \cite{La18}.

\begin{definition}[(bi-)causal couplings] 
\label{def:adatped.coupling}
	For a coupling $\pi$ of $\mathbb{P}, \mathbb{Q}\in\mathcal{P}(\Omega)$ denote by $\pi(d\omega,d\eta)=\mathbb{P}(d\omega)\pi_\omega(d\eta)$ a regular disintegration w.r.t.~$\mathbb{P}$. 
	The set $\cpla(\P, \Q)$ of \emph{causal couplings} consists of all $\pi\in \cpl(\P, \Q)$ such that  for all $t\in I$ and $A\in\mathcal{F}_t$
	\[ \omega\mapsto \pi_\omega(A)  \quad\text{is }\mathcal{F}^{\P}_t\text{-measurable}.\]	
	The set of all \emph{bi-causal} couplings $\cplba(\P, \Q)$ consists of all $\pi\in\cpla(\P, \Q)$ such that also $S(\pi) \in \cpla(\Q, \P)$, where $S:\Omega\times \Omega \to \Omega \times \Omega, S(\omega,\eta):= (\eta,\omega)$. 
\end{definition}

In discrete time, a coupling $\pi$ is causal if and only if
\begin{align*} \pi\big( (Y_1,\dots,Y_t)\in A | X\big)& = \pi\big( (Y_1,\dots,Y_t)\in A | X_1,\dots X_t \big) ,
\end{align*} $\P$-a.s.\ 
for every $t$ and Borel set $A\subset \mathbb{R}^t$, that is, at time $t$,  given the past $(X_1, \ldots, X_t) $ of $X$, the distribution of $Y_t$ does not depend on the future $(X_{t+1}, \ldots, X_N)$ of $X$. 

\medskip 

Replacing  couplings by bi-causal couplings in \eqref{eq:UsualW} one arrives at the nested distance as introduced by Pflug and Pichler \cite{Pf09,PfPi12}. Since our goal is to compare also semimartingale models in continuous time we will work with an adapted Wasserstein distance  that is defined slightly differently. (Notably, {it is straightforward that} the two distances are equivalent for probabilities on $\R^N$. We will elaborate in Section \ref{sec:ChoiceOfCost} below, why the definition in \eqref{eq:AdW} is more appropriate for our purposes even in discrete time.)

In continuous time, we denote by $\mathcal{SM}(\Omega)$ the set of all probabilities $\mathbb{P}$ on (the Borel $\sigma$-field of) $\Omega$ under which the canonical process $X$ is a continuous semimartingale. In discrete time, $\mathcal{SM}(\Omega)$ denotes the set of all Borel probabilities $\mathbb{P}$ on $\Omega$ under which $X$ is integrable. In either case we can uniquely decompose $X=M+A$, with $A$ a finite variation predictable process started at zero, and $M$ a local martingale. Indeed, in the first case $X$ is a special semimartingale and in fact $M$ and $A$ are continuous too, and in the second case this is the Doob decomposition of an integrable adapted discrete-time process. For $p\in[1,\infty)$ we denote by $\mathcal{SM}_p(\Omega)$ the subset of $\mathcal{SM}(\Omega)$ for which \[\mathbb{E}_\mathbb{P}[\langle M\rangle_T^{p/2} +|A|_{\text{1-var}}^p]<\infty,\]where $\langle \cdot \rangle$ is the quadratic variation and $|\cdot|_{\text{1-var}}$ the first variation norm. Note also that by the BDG inequality $E_\mathbb{P}[\sup_{s\leq T} |M_s|] < \infty$ for $\mathcal{SM}_p(\Omega)$, hence $M$ is a true martingale.  


\begin{definition}[Adapted Wasserstein distance]
\label{def:adapted.wasserstein.discrete}\label{AWD}
	For  $\mathbb{P},\mathbb{Q}\in\mathcal{SM}_p(\Omega)$, $p\geq 1$ set 
	\begin{align}\label{eq:AdW} 
	\mathcal{AW}_p(\mathbb{P},\mathbb{Q}):=\inf\Big\{ \mathbb{E}_\pi\big[ \langle M^X-M^Y\rangle_T^{p/2} + |A^X-A^Y|_{\text{1-var}}^p\big]^{1/p} : \pi \in \cplba(\P,\Q)\Big\},
	\end{align}
	where  $X=M^X+A^X, Y=M^Y+A^Y$ denote the semimartingale decomposition of $X$  and $Y$ resp.
\end{definition}

It is shown in Lemma \ref{lem:semimartingales.remain.semimartingale} that $\mathcal{AW}_p$ is well-defined (i.e.~that $X-Y$ is a semimartingale under every bi-causal coupling) and in Lemma \ref{lem:metric} that $\mathcal{AW}_p$ in fact defines a metric.
\begin{remark}
In the continuous time setup, the adapted Wasserstein distance can also be computed through 
\begin{align*}
\mathcal{AW}_p(\mathbb{P},\mathbb{Q})=\inf\Big\{ \mathbb{E}_\pi\big[ \langle X-Y\rangle_T^{p/2} + \mbox{MV}_T[|X-Y|^p\big]^{1/p} : \pi \in \cplba(\P,\Q)\Big\}.
\end{align*} Here $\mbox{MV}$ denotes the mean variation, i.e.\ $\mbox{MV}_T [Z]= \sup_\Delta \sum_{t_j\in\Delta}| \E[Z_{t_{j+1}} - Z_{t_j}|\F_{t_j}]|$, where the supremum is taken over all finite partitions $\Delta$ of $[0,T]$.  
\end{remark}

In Section \ref{subsec:RemarksAndExamples} below we will give explicit formulae for the adapted Wasserstein distance in the case of semi-martingale measures described by simple SDEs.

\subsection{Stability of Superhedging}

For the rest of this article, fix some $k\in\mathbb{R}_+$ and let $\mathcal{H}_k$ be the set of all predictable processes $$H\colon\Omega\times I\to[-k,k].$$ 
For every $p\geq 1$, write $b_p$ for the `upper' Burkholder-Davis-Gundy (BDG) constant, cf.\ Remark \ref{BDGremark} below. In particular it is known that $b_1 \leq 6 $ and that $b_2=2$.

Our first main result concerns the stability of superhedging and constitutes a stronger version of Theorem \ref{thm:hedging.with.loss.linear.intro} stated above.

\begin{theorem}
\label{thm:hedging.with.loss.linear}
	Let $\mathbb{P},\mathbb{Q}\in\mathcal{SM}_1(\Omega)$, $H \in \mathcal {H}_k$ and let $C\colon\Omega\to\mathbb{R}$ be Lipschitz with constant $L$.
	Then the hedging error under $\Q$ is bounded by the distance of $\P$ and $\Q$ plus the hedging error under $\P$ in the following sense:
	there exists $G\in\mathcal{H}_k$ such that 
	\begin{align}\label{eq almost superhedge}\tag{WHI}
	\begin{split}
	\mathbb{E}_\mathbb{Q}[(C-m-(G \bigcdot X)_T)^+]
	&\leq \mathbb{E}_\mathbb{P}[(C-m-(H\bigcdot X)_T)^+] \\
	&\quad+ b_1(k+L)\mathcal{AW}_1(\mathbb{P},\mathbb{Q}).
	\end{split}
	\end{align}
	Assume in addition that $H_t\colon\Omega\to\mathbb{R}$ is Lipschitz with constant $\tilde{L}$ for every $t\in I$.
	Then we can take $G=H$ and obtain 
	\begin{align}\label{eq almost superhedge strong}\tag{SHI}
	\begin{split}
	\mathbb{E}_\mathbb{Q}[(C-m-(H \bigcdot X)_T)^+]
	&\leq \mathbb{E}_\mathbb{P}[(C-m-(H\bigcdot X)_T)^+] \\
	&\quad + b_1(k+L)\mathcal{AW}_1(\mathbb{P},\mathbb{Q}) + { \beta}\mathcal{AW}_2(\mathbb{P},\mathbb{Q}),\end{split}
	\end{align}
	where $\beta:=2\sqrt{2}b_1 \tilde L \min\{ \mathcal{AW}_2(\mathbb{P},\delta_0),\mathcal{AW}_2(\mathbb{Q},\delta_0)\}$.
\end{theorem}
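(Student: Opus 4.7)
The plan is to fix, for each $\eps>0$, a bi-causal coupling $\pi\in\cplba(\mathbb{P},\mathbb{Q})$ that is $\eps$-optimal for $\mathcal{AW}_1$ (and, for (SHI), also for $\mathcal{AW}_2$) and to work on the product space $(\Omega\times\Omega,\pi)$. By Lemma~\ref{lem:semimartingales.remain.semimartingale} both $X$ and $Y$ remain semimartingales under $\pi$ with their original decompositions, and bi-causality makes $M^X$ and $M^Y$ martingales with respect to the product filtration.

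For (WHI), I set
\[G_t:=\mathbb{E}_\pi\bigl[H_t\,\big|\,\mathcal{F}^Y_{t-}\bigr],\]
which is bounded by $k$ and $\mathbb{F}^Y$-predictable; by disintegration it is realized as a function of $Y_{[0,t)}$ and thus as an element of $\mathcal{H}_k$. The critical observation is that, because $H_t$ is $\mathcal{F}^X_{t-}$-measurable, bi-causality ($\mathcal{F}^X_{t-}\perp\mathcal{F}^Y_T\mid\mathcal{F}^Y_{t-}$) collapses $\mathbb{E}_\pi[H_t\mid \mathcal{F}^Y_T]$ to $G_t$. Approximating by Riemann sums, this lifts to the pathwise identity
\[(G\bigcdot Y)_T=\mathbb{E}_\pi\bigl[(H\bigcdot Y)_T\,\big|\,\mathcal{F}^Y_T\bigr].\]
Since $C(Y)$ and $m$ are $\mathcal{F}^Y_T$-measurable, Jensen's inequality for the convex map $(\cdot)^+$ gives
\[\mathbb{E}_\pi[(C(Y)-m-(G\bigcdot Y)_T)^+]\leq \mathbb{E}_\pi[(C(Y)-m-(H\bigcdot Y)_T)^+].\]
Writing $C(Y)-m-(H\bigcdot Y)_T=[C(X)-m-(H\bigcdot X)_T]+[C(Y)-C(X)]+(H\bigcdot(X-Y))_T$ and applying $(a+b+c)^+\leq a^++|b|+|c|$, the first term contributes $\mathbb{E}_\mathbb{P}[(C-m-(H\bigcdot X)_T)^+]$. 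The Lipschitz constant $L$ of $C$, the bound $|H|\leq k$, and the $L^1$-BDG inequality applied to $M^X-M^Y$ then bound the remaining two terms by $b_1(k+L)\mathcal{AW}_1(\mathbb{P},\mathbb{Q})+O(\eps)$. Sending $\eps\downarrow 0$ proves (WHI).

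For (SHI), the Lipschitz assumption on $H_t$ lets me directly set $G:=H$, skipping the Jensen step. The same decomposition carries one extra term, $\mathbb{E}_\pi[|((H(X)-H(Y))\bigcdot Y)_T|]$. Splitting $Y=M^Y+A^Y$, the $L^1$-BDG on the martingale part and Cauchy--Schwarz bound this term by a constant times $\tilde L\,\mathbb{E}_\pi[\|X-Y\|_\infty^2]^{1/2}\bigl(\mathbb{E}_\pi[\langle M^Y\rangle_T]^{1/2}+\mathbb{E}_\pi[|A^Y|_{\text{1-var}}^2]^{1/2}\bigr)$. The first factor is controlled by $\mathcal{AW}_2(\mathbb{P},\mathbb{Q})$ via $\|X-Y\|_\infty\leq\|M^X-M^Y\|_\infty+|A^X-A^Y|_{\text{1-var}}$ together with the $L^2$-BDG constant $b_2=2$, and the second by $\mathcal{AW}_2(\mathbb{Q},\delta_0)$. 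Running the symmetric argument with $S(\pi)$ in place of $\pi$ produces the analogous bound involving $\mathcal{AW}_2(\mathbb{P},\delta_0)$, so one can take the minimum in $\beta$.

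The conceptual core is the Jensen/tower step for (WHI): bi-causality is exactly the property that identifies $\mathbb{E}_\pi[H_t\mid\mathcal{F}^Y_T]$ with $\mathbb{E}_\pi[H_t\mid\mathcal{F}^Y_{t-}]$ and thus promotes the $\mathbb{F}^Y$-predictable projection of $H$ into an admissible strategy. The main technical obstacle will be to justify $(G\bigcdot Y)_T=\mathbb{E}_\pi[(H\bigcdot Y)_T\mid\mathcal{F}^Y_T]$ in continuous time beyond simple integrands; this requires approximating $H$ by simple predictable integrands together with the BDG-based $L^1$ bounds to commute limits with conditional expectation, and relies on Lemma~\ref{lem:semimartingales.remain.semimartingale} so that the stochastic integrals against $Y$ on the product space are well-defined in the first place. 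Matching the sharp constant $2\sqrt{2}b_1\tilde L$ in $\beta$ is then a bookkeeping exercise.
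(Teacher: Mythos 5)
Your argument follows the same conceptual route as the paper. Both hinge on (i) the fact (Lemma \ref{lem:projection.of.strategy}/\ref{lem technical proj}) that the predictable projection $G$ of $H$ onto the $Y$-filtration satisfies $(G(Y)\bigcdot Y)_T=\mathbb{E}_\pi[(H(X)\bigcdot Y)_T\mid Y]$, (ii) the Jensen/tower step for the convex function $(\cdot)^+$, (iii) the decomposition
\[
C(Y)-m-(H(X)\bigcdot Y)_T=\bigl[C(X)-m-(H(X)\bigcdot X)_T\bigr]+\bigl[C(Y)-C(X)\bigr]+\bigl(H(X)\bigcdot(X-Y)\bigr)_T,
\]
and (iv) the BDG estimates of Lemma~\ref{lem:estimating.integrals}. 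The paper packages (ii)--(iv) into Lemma~\ref{lem:trivial.lemma} and the contraction inequality of Theorem~\ref{thm:contraction} (phrased via $d_1^{w}$), while you unroll them directly via $(a+b+c)^+\leq a^++|b|+|c|$; this is a presentational, not a substantive, difference. For (SHI), your split $\bigl(H(X)\bigcdot(X-Y)\bigr)_T+\bigl((H(X)-H(Y))\bigcdot Y\bigr)_T$, the Cauchy--Schwarz/BDG bookkeeping, and the symmetrization via $S(\pi)$ are exactly the content of the third estimate in Lemma~\ref{lem:estimating.integrals}, and your constant $2\sqrt{2}b_1\tilde L\min\{\cdot\}$ matches $\beta$.

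There is, however, a genuine gap in your proof of (WHI). The strategy you build depends on the $\eps$-optimal coupling, i.e.\ you obtain a family $G^\eps\in\mathcal{H}_k$ with
\[
\mathbb{E}_\mathbb{Q}\bigl[(C-m-(G^\eps\bigcdot X)_T)^+\bigr]\leq \mathbb{E}_\mathbb{P}\bigl[(C-m-(H\bigcdot X)_T)^+\bigr]+b_1(k+L)\bigl(\mathcal{AW}_1(\mathbb{P},\mathbb{Q})+\eps\bigr),
\]
and ``sending $\eps\downarrow 0$'' does not produce a single $G\in\mathcal{H}_k$ satisfying the limiting bound: the left-hand side is a sequence of numbers indexed by $\eps$, not a fixed functional to which you can pass to the limit. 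The theorem asserts the existence of one $G$, so a compactness argument is needed. The paper closes this precisely: since the $G^\eps$ are uniformly bounded by $k$, after first reducing by localization to $\mathbb{E}_\mathbb{Q}[[Y]_T]<\infty$, one extracts forward-convex combinations converging in $L^2(d\mathbb{Q}\otimes d([Y]+A))$ to a predictable $G$; the convexity of $(\cdot)^+$ ensures that taking convex combinations on the left does not increase the bound, and the $L^2$-convergence of the stochastic integrals then yields the inequality for the limit $G$. This Koml\'os-type step is the one piece genuinely missing from your proposal. For (SHI) it does not arise, since $G=H$ is fixed, so that part of your argument is complete modulo the continuous-time stochastic-integral identity that you correctly flag as requiring the approximation argument of Lemma~\ref{lem technical proj}.
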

Importantly, it is impossible to transfer a superhedge under $\mathbb{P}$ into a superhedge under $\mathbb{Q}$.
This occurs already in a one-period framework and is not a by-product of our definition of adapted Wasserstein distance; see Remark \ref{rem:superhedging.does.not.work}.
A similar reasoning requires to consider only trading strategies bounded by $k$; see Remark \ref{rem:bounded.strategies}.

\medskip

It is worthwhile to compare the inequalities \eqref{eq almost superhedge} and \eqref{eq almost superhedge strong}:
\begin{itemize}
\item[(S)]
In a certain sense the  `strong hedging inequality' \eqref{eq almost superhedge strong}  seems to be the more relevant assertion: after all a  trader does not know that the model $\Q$ (rather than the model $\P$) describes reality and hence she might (somewhat stubbornly) stick to the initial plan of hedging her risk according to the strategy $H$. The inequality \eqref{eq almost superhedge strong} then 
allows to quantify the losses due to this model-error.
\item[(W)] However, the `weak hedging inequality' 
\eqref{eq almost superhedge} also has a particular merit: suppose that a trader $W$ starts with the prior belief that the asset price evolves according to a Black-Scholes model with volatility $\sigma_1$ but soon after time $0$ realizes that a volatility $\sigma_2$ (where $\sigma_2 \neq \sigma_1$) yields a more adequate description of reality. If the witty trader  $W$ makes an accurate guess about the correct model and updates her trading strategy accordingly, her losses can be controlled through the tighter bound in \eqref{eq almost superhedge}.
\end{itemize}

In Theorem \ref{prop:hedging.loss} we provide a version of Theorem \ref{thm:hedging.with.loss.linear}, where $(\cdot)^+$ is replaced by a convex, strictly increasing loss function $l\colon\mathbb{R}\to\mathbb{R}_+$. 

Another way to gauge the effectiveness of an almost superhedge is by means of risk measures. We postpone the general formulation to Theorem \ref{prop:heding.law.invariant} and 
first present a version that appeals to the average value of risk $\mathrm{AVaR}^\mathbb{P}_\alpha$.
Recall that
for a random variable $Z\colon\Omega\to\mathbb{R}$ 
\[\mathrm{AVaR}^\mathbb{P}_\alpha(Z):=\inf_{m\in\mathbb{R}} \mathbb{E}_\mathbb{P}[(Z-m)^+/\alpha +m],\]
is the average value at risk at level $\alpha\in(0,1)$ under model $\mathbb P$. We then have
\begin{theorem}
\label{thm:avar}
	Assume that $C\colon\Omega\to\mathbb{R}$ is Lipschitz with constant $L$.
	Then 
	\begin{align*}
	&\Big|\inf_{H\in\mathcal{H}_k} \mathrm{AVaR}^\mathbb{P}_\alpha(C-(H\bigcdot X)_T)-\inf_{H\in\mathcal{H}_k} \mathrm{AVaR}^\mathbb{Q}_\alpha(C-(H\bigcdot X)_T)\Big| 
	\leq r\, \mathcal{AW}_{1}(\mathbb{P},\mathbb{Q}),
	\end{align*}
	for $r:=b_1(L+k)/\alpha $. If $H\in\mathcal{H}_k$ is such that $H_t\colon\Omega\to[-k,k]$ is Lipschitz with constant {$\tilde L$} for every $t\in I$ and $\beta$ is the constant defined in Theorem \ref{thm:hedging.with.loss.linear}, then
	\begin{align*}
	\big|\mathrm{AVaR}^\mathbb{P}_\alpha(C-(H\bigcdot X)_T)-\mathrm{AVaR}^\mathbb{Q}_\alpha(C-(H\bigcdot X)_T)\big|
	\leq r\, \mathcal{AW}_{1}(\mathbb{P},\mathbb{Q})+\frac\beta\alpha \mathcal{AW}_2(\mathbb{P},\mathbb{Q}).
	\end{align*}
\end{theorem}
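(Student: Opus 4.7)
The plan is to reduce both claims to the hedging-error estimates of Theorem~\ref{thm:hedging.with.loss.linear} via the Rockafellar--Uryasev variational formula
\[\mathrm{AVaR}^\mathbb{P}_\alpha(Z)=\inf_{m\in\mathbb{R}}\Big[\tfrac{1}{\alpha}\mathbb{E}_\mathbb{P}[(Z-m)^+]+m\Big].\]
The adapted Wasserstein distances $\mathcal{AW}_1,\mathcal{AW}_2$ and the constants $r,\beta$ appearing in the bounds are symmetric in $(\mathbb{P},\mathbb{Q})$, so it suffices to derive each inequality in one direction and then swap the roles of $\mathbb{P}$ and $\mathbb{Q}$ to absorb the absolute value.

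For the first inequality I fix $H\in\mathcal{H}_k$ and $m\in\mathbb{R}$. The weak hedging inequality \eqref{eq almost superhedge} of Theorem~\ref{thm:hedging.with.loss.linear} furnishes some $G=G(H,m)\in\mathcal{H}_k$ with
\[\mathbb{E}_\mathbb{Q}[(C-m-(G\bigcdot X)_T)^+]\leq \mathbb{E}_\mathbb{P}[(C-m-(H\bigcdot X)_T)^+]+b_1(k+L)\mathcal{AW}_1(\mathbb{P},\mathbb{Q}).\]
Dividing by $\alpha$, adding $m$, and using the trivial chain
\[\inf_{G'\in\mathcal{H}_k}\mathrm{AVaR}^\mathbb{Q}_\alpha(C-(G'\bigcdot X)_T)\leq \mathrm{AVaR}^\mathbb{Q}_\alpha(C-(G\bigcdot X)_T)\leq \tfrac{1}{\alpha}\mathbb{E}_\mathbb{Q}[(C-m-(G\bigcdot X)_T)^+]+m,\]
the left-hand side becomes independent of $H$ and $m$. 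Taking the infimum over $H\in\mathcal{H}_k$ and $m\in\mathbb{R}$ on the right and recognising the inner infimum over $m$ as $\mathrm{AVaR}^\mathbb{P}_\alpha(C-(H\bigcdot X)_T)$ yields the first inequality.

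For the second inequality the strategy $H$ is fixed and Lipschitz, so I instead invoke the strong hedging inequality \eqref{eq almost superhedge strong}, which holds with the same $H$ on both sides for every $m\in\mathbb{R}$ and carries the additional term $\beta\mathcal{AW}_2(\mathbb{P},\mathbb{Q})$. Dividing by $\alpha$, adding $m$, and taking the infimum over $m$ on both sides directly translates the pointwise bound into the desired comparison of the two AVaRs. The argument is essentially bookkeeping: the only subtlety is that the $G$ provided by \eqref{eq almost superhedge} may depend on both $H$ and $m$, but this is harmless because the outer infimum over $\mathcal{H}_k$ absorbs the choice, and I do not foresee any substantive obstacle beyond a careful application of Theorem~\ref{thm:hedging.with.loss.linear}.
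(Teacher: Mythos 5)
Your proof is correct, but it takes a genuinely different route from the paper's. The paper first proves a standalone Lipschitz estimate $\mathrm{AVaR}_\alpha^\mathbb{P}(Z)-\mathrm{AVaR}_\alpha^{\mathbb{P}'}(Z')\leq d_1^{w}(Z(\mathbb{P}),Z'(\mathbb{P}'))/\alpha$ (an AVaR-specific analogue of Lemma~\ref{lem:trivial.lemma}, established by a coupling/Jensen argument), then selects an optimal $H^\ast$, builds $G^\ast$ from Lemma~\ref{lem:projection.of.strategy}, and closes with the contraction theorem (Theorem~\ref{thm:contraction}). You instead invoke the already-proved weak and strong hedging inequalities \eqref{eq almost superhedge}/\eqref{eq almost superhedge strong} of Theorem~\ref{thm:hedging.with.loss.linear} together with the Rockafellar--Uryasev variational formula, pushing the infimum over $m$ through both sides; the order of infima and the correct handling of the $H$- and $m$-dependence of $G$ are dealt with cleanly. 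Both arguments rest on the same foundational machinery (Lemma~\ref{lem:projection.of.strategy} plus Theorem~\ref{thm:contraction}), but yours recycles Theorem~\ref{thm:hedging.with.loss.linear} as a black box rather than re-running the contraction argument at the level of $d_1^w$; the paper's approach is the one that scales more directly to the other law-invariant risk measures of Proposition~\ref{prop:heding.law.invariant}, where one needs the $d_1^w$ estimate as a hypothesis rather than a post-hoc consequence of a hedging inequality. One small remark: Theorem~\ref{thm:hedging.with.loss.linear} as stated only guarantees the existence of a single $G$ for fixed $(H,m)$, so allowing $G$ to depend on $m$ — as you do — is necessary to be faithful to the statement, although an inspection of its proof shows that $G$ can in fact be chosen independently of $m$.
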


The interpretation of this result is similar to the one of Theorem \ref{thm:hedging.with.loss.linear}:
As $\mathrm{AVaR}^\mathbb{P}_\alpha(\cdot)$ is translation invariant, one has
\begin{align*}
\inf_{H\in\mathcal{H}_k} \mathrm{AVaR}^\mathbb{P}_\alpha(C-(H\bigcdot X)_T)
=\inf\Big\{m\in\mathbb{R} :\begin{array}{l} 
\text{there is } H\in\mathcal{H}_k \text{ such that}\\
\mathrm{AVaR}_\alpha^\mathbb{P}(C-m-(H\bigcdot X)_T)\leq 0 
\end{array}\Big\},
\end{align*}
and the right-hand side constitutes a relaxed version of the superhedging price.
 
\medskip 

 Notably, the explicit calculations of adapted Wasserstein distance given in  Section \ref{subsec:RemarksAndExamples} imply that Theorem \ref{thm:avar} (and similarly Theorem \ref{thm:hedging.with.loss.linear}) are sharp 

\begin{example}[Hedging in a Brownian framework]
\label{ex:hedge.BM}
	Consider a European call option $C(X)=(X_T-K)^+$, where for simplicity $K=0$.
	Moreover, let $\mathbb{P}^\sigma$ be Wiener measure with constant volatility $\sigma\geq0$.
	Then for every $\sigma,\hat{\sigma}\geq0$, $k\geq 1$, and $\alpha\in(0,1)$ it holds that (we defer the proof of this fact to Section \ref{sec proofs intro})
	\begin{align*}
	&\Big|\inf_{H\in\mathcal{H}_k} \mathrm{AVaR}_\alpha^{ \mathbb{P}^{\sigma} }(C-(H\bigcdot X)_T)-\inf_{H\in\mathcal{H}_k} \mathrm{AVaR}_\alpha^{ \mathbb{P}^{\hat{\sigma}} }(C-(H\bigcdot X)_T)\Big|\\
	=&\big|\mathbb{E}_{\mathbb{P}^\sigma}[C] - \mathbb{E}_{ \mathbb{P}^{\hat{\sigma}} }[C] \big| 
	=\frac{1}{\sqrt{2\pi}} T|\sigma-\hat{\sigma}|
	=\frac{1}{\sqrt{2\pi}} \mathcal{AW}_1(\mathbb{P}^{\sigma},\mathbb{P}^{\hat{\sigma}}).
	\end{align*}
	This shows that the estimate in Theorem \ref{thm:avar} is tight (up to constants), in the sense that it is essentially impossible to improve on the probability metric $\mathcal{AW}_1$.
\end{example}

We make the important remark that  Glanzer, Pflug, and Pichler \cite{GlPfPi17} use the nested distance to control acceptability prices in discrete time models in a Lipschitz fashion through the nested distance of these models. Specifically, in a discrete time one-period framework  \cite[Proposition 3]{GlPfPi17}
 and   Theorem \ref{thm:avar} yield almost the same assertion: in this setup, the only difference is that \cite[Proposition 3]{GlPfPi17} does not specify a Lipschitz constant and does not assume uniform boundedness of the admissible hedging strategy. (However, the latter seems to be in conflict with our Remark \ref{rem:bounded.strategies} below.)

\subsection{Stability of Utility Maximization and Utility Indifference Pricing}
We move on to consider the continuity of utility maximization.
Let $U\colon\mathbb{R}\to\mathbb{R},$ be a utility function which is concave, increasing, and denote by $U'$ the left-continuous version of the derivative.
We have

\begin{theorem}
\label{thm:utility}
	Let $C\colon \Omega\to\mathbb{R}$ be Lipschitz continuous and assume that there exists $c\geq 0$ such that $U'(x)\leq c(1+|x|^{p-1})$ for all $x$.
	Then, for every $R\geq 0$ there exists a constant $K$ 
	such that 
	\[\Big| \sup_{H\in\mathcal{H}_k} \mathbb{E}_\mathbb{P}[U(C+(H\bigcdot X)_T)] - \sup_{H\in\mathcal{H}_k} \mathbb{E}_\mathbb{Q}[U(C+(H\bigcdot X)_T)]\Big|
	\leq K \cdot \mathcal{AW}_p(\mathbb{P},\mathbb{Q}),\]
	for all $\mathbb{P},\mathbb{Q}\in\mathcal{SM}_p(\Omega)$ with $\mathcal{AW}_p(\mathbb{P},\delta_0),\mathcal{AW}_p(\mathbb{Q},\delta_0)\leq R$.
\end{theorem}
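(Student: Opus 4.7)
The plan is to reduce the stability of utility maximization to an $L^p$-estimate on the wealth process and then control that estimate via the strategy-transfer machinery behind Theorem~\ref{thm:hedging.with.loss.linear}.

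\emph{Step 1: $L^p$-linearization of $U$.} Because $U$ is concave and $U'$ satisfies $U'(x)\leq c(1+|x|^{p-1})$, the mean value theorem together with monotonicity of $U'$ yields
\begin{align*}
|U(a)-U(b)| \leq c'\bigl(1+(|a|\vee|b|)^{p-1}\bigr)|a-b|
\end{align*}
for some $c'=c'(c,p)$ and all $a,b\in\R$. Applied pointwise under a coupling $\pi\in\cplba(\P,\Q)$ to $a = C(X)+(H\bigcdot X)_T$ and $b=C(Y)+(G\bigcdot Y)_T$, and combined with H\"older's inequality with conjugate exponents $p/(p-1)$ and $p$, this gives
\begin{align*}
\bigl|\E_\P[U(a)] - \E_\Q[U(b)]\bigr|
\leq c''\bigl(1+\E_\pi[|a|^p]+\E_\pi[|b|^p]\bigr)^{(p-1)/p}\,\E_\pi[|a-b|^p]^{1/p}.
\end{align*}

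\emph{Step 2: uniform moment bounds.} Since $|H|,|G|\leq k$, the BDG inequality controls $\|(H\bigcdot X)_T\|_{L^p(\P)}$ and $\|(G\bigcdot Y)_T\|_{L^p(\Q)}$ by $k$ times the quadratic-variation and finite-variation quantities appearing in $\mathcal{AW}_p(\P,\delta_0)$ and $\mathcal{AW}_p(\Q,\delta_0)$, respectively. The Lipschitz bound $|C(X)|\leq |C(0)|+L\|X\|_\infty$ is controlled the same way. Under the hypothesis $\mathcal{AW}_p(\P,\delta_0), \mathcal{AW}_p(\Q,\delta_0)\leq R$ this furnishes a constant $M=M(p,k,L,R,|C(0)|)$ with
\begin{align*}
\bigl(1+\E_\pi[|a|^p]+\E_\pi[|b|^p]\bigr)^{(p-1)/p}\leq M,
\end{align*}
uniformly in $\pi\in\cplba(\P,\Q)$ and $H,G\in\mathcal{H}_k$.

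\emph{Step 3: strategy transfer.} Fix $\eps>0$ and an $\eps$-optimal $H\in\mathcal{H}_k$ for the $\P$-problem. For $\pi\in\cplba(\P,\Q)$, let $G\in\mathcal{H}_k$ be the bi-causal transfer of $H$ constructed exactly as in the proof of \eqref{eq almost superhedge}, i.e.\ the bounded predictable projection of $H$ onto the filtration generated by the second coordinate and then disintegrated under $\Q$. The same argument that yielded \eqref{eq almost superhedge}, lifted from $L^1$ to $L^p$ via the $p$-BDG constant, produces
\begin{align*}
\E_\pi\bigl[|(H\bigcdot X)_T-(G\bigcdot Y)_T|^p\bigr]^{1/p}\leq C_1(k,p)\,\mathcal{AW}_p(\P,\Q),
\end{align*}
and Lipschitz continuity of $C$ combined with BDG applied to $X-Y=(M^X-M^Y)+(A^X-A^Y)$ gives $\E_\pi[|C(X)-C(Y)|^p]^{1/p}\leq C_2(L,p)\,\mathcal{AW}_p(\P,\Q)$. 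Inserting these into Step~1 and taking the infimum over $\pi$ yields
\begin{align*}
\bigl|\E_\P[U(C+(H\bigcdot X)_T)]-\E_\Q[U(C+(G\bigcdot X)_T)]\bigr|\leq M(C_1+C_2)\,\mathcal{AW}_p(\P,\Q);
\end{align*}
sending $\eps\to 0$ and swapping the roles of $\P$ and $\Q$ concludes the proof with $K:=M(C_1+C_2)$.

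The main obstacle is Step~3: extending the $L^1$ strategy transfer behind \eqref{eq almost superhedge} to $L^p$ requires bi-causality to kill the cross terms produced by $(H-G)\bigcdot M^Y$ (so only the $\mathcal{AW}_p$-cost survives) and demands that the projection $G$ inherit the uniform bound $k$ in order for the final constant to depend linearly on $k$; that linearity is necessary in view of the sharpness shown by Example~\ref{ex:hedge.BM}. The remaining ingredients are quantitative bookkeeping once the $\mathcal{AW}_p$ definition is in place.
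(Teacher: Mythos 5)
Your Step 3 contains a genuine gap that the approach cannot recover from. You claim an $L^p(\pi)$ estimate
\[
\E_\pi\bigl[|(H\bigcdot X)_T-(G\bigcdot Y)_T|^p\bigr]^{1/p}\leq C_1(k,p)\,\mathcal{AW}_p(\P,\Q),
\]
where $G$ is the predictable projection of $H$ onto the $Y$-filtration, i.e.\ $(G(Y)\bigcdot Y)_T=\E_\pi[(H(X)\bigcdot Y)_T\mid Y]$. But this bound is false. Writing
\[
(H\bigcdot X)_T-(G\bigcdot Y)_T
=\bigl[(H\bigcdot X)_T-(H\bigcdot Y)_T\bigr]
+\bigl[(H\bigcdot Y)_T-\E_\pi[(H\bigcdot Y)_T\mid Y]\bigr],
\]
the first bracket is indeed controlled by Lemma~\ref{lem:estimating.integrals}, but the second bracket is the fluctuation of $(H(X)\bigcdot Y)_T$ around its $Y$-conditional mean, which is driven by the residual randomness of $X$ given $Y$ and is in general bounded away from zero even when $\mathcal{AW}_p(\P,\Q)=0$ (take $\P=\Q$ and any nontrivial bi-causal coupling). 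The remark following Theorem~\ref{thm:contraction} gives a concrete discrete-time example showing exactly this: the $d_p$-version of the contraction inequality fails without additional regularity on $H$, which is precisely the strong $L^p(\pi)$ estimate you are trying to use.

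The fix is the one the paper uses, and it is not merely bookkeeping: rather than comparing $a=C(X)+(H\bigcdot X)_T$ and $b=C(Y)+(G\bigcdot Y)_T$ pathwise in $L^p(\pi)$, one compares $b$ to the \emph{conditional expectation} $\E_\pi[a\mid Y]$. This is exactly what the weak transport cost $d_p^w$ measures, and Theorem~\ref{thm:contraction} shows $d_p^w(b(\Q),a(\P))\lesssim (k+L)\,\mathcal{AW}_p(\P,\Q)$ without any Lipschitz assumption on $H$. To translate the $d_p^w$ bound into a bound on expected utilities, one must exploit concavity of $U$: Jensen's inequality applied inside the barycenter gives $\int (U(y)-U(x))\,\gamma(dx,dy)\leq\int\bigl(U(\int y\,\gamma^x(dy))-U(x)\bigr)\,\mu(dx)$, and only then do the mean-value bound on $U$ and H\"older produce the factor $d_p^w(\mu,\nu)$. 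Your Step~1 linearizes $U$ pointwise under $\pi$ and never uses concavity via Jensen, so it forces you into the untenable $L^p(\pi)$ estimate. Your Step~2 (moment bounds) and the general outline of transferring strategies via the bi-causal projection are fine, but the argument only closes once the concave-$U$/Jensen/$d_p^w$ mechanism replaces the pathwise $L^p$ comparison.
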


{The failure of usual Wasserstein distances to guarantee stability of utility maximization is illustrated in Remark \ref{ex:wasserstein.does.not.work}.}


A common way of quantifying the value of a claim is via utility indifference pricing:\footnote{{We are grateful to the anonymous referee for pointing out that we could include the stability of utility indifference pricing w.r.t.\ adapted Wasserstein distance.}} given a claim $C$, the utility indifference (bid-) price $v$ is defined as the solution of the following equation
\[ \sup_{H\in\mathcal{H}_k} \mathbb{E}_\mathbb{P}[U(C-v+(H\bigcdot X)_T)]
=\sup_{H\in\mathcal{H}_k} \mathbb{E}_\mathbb{P}[U( (H\bigcdot X)_T)].\]
Continuing in the spirit of the present paper, we are interested in the stability of $\mathbb{P}\mapsto v(\mathbb{P})$, where the latter denotes the utility indifference price associated to the model $\mathbb{P}$.
\begin{theorem}
\label{thm:utility.indifference}
	Let $C\colon \Omega\to\mathbb{R}$ be Lipschitz continuous and  assume that there exists $c\geq 0$ such that $0<U'(x)\leq c(1+|x|^{p-1})$ for all $x$.
	Then, for every $R\geq 0$ there exists a constant $K$ 
	such that 
	\[\Big| v(\mathbb{P})-v(\mathbb{Q})\Big|
	\leq K \cdot \mathcal{AW}_p(\mathbb{P},\mathbb{Q}),\]
	for all $\mathbb{P},\mathbb{Q}\in\mathcal{SM}_p(\Omega)$ with $\mathcal{AW}_p(\mathbb{P},\delta_0),\mathcal{AW}_p(\mathbb{Q},\delta_0)\leq R$.
\end{theorem}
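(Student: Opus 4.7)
The plan is to write $v(\mathbb{P})$ as the unique zero of the strictly decreasing map $v \mapsto u(\mathbb{P},v) - u_0(\mathbb{P})$, where
\[ u(\mathbb{P},v) := \sup_{H\in\mathcal{H}_k} \mathbb{E}_\mathbb{P}[U(C-v+(H\bigcdot X)_T)] \quad\text{and}\quad u_0(\mathbb{P}) := \sup_{H\in\mathcal{H}_k} \mathbb{E}_\mathbb{P}[U((H\bigcdot X)_T)], \]
and to derive the Lipschitz bound via an implicit-function argument: I combine the Lipschitz stability of $u(\cdot,v)$ and $u_0(\cdot)$ in $\mathbb{P}$ (from Theorem \ref{thm:utility}) with a quantitative lower bound on the slope of $u(\mathbb{P},\cdot)$ in $v$. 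Strict monotonicity of $v \mapsto u(\mathbb{P},v)$ is immediate from $U'>0$; existence and continuity of $v(\mathbb{P})$ follow from the growth condition on $U'$ together with boundedness of $H$ and Lipschitzness of $C$.

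I first show that $v(\mathbb{P})$ is uniformly bounded, say $|v(\mathbb{P})| \leq V = V(R)$, on the ball $\{\mathbb{P}: \mathcal{AW}_p(\mathbb{P},\delta_0) \leq R\}$: the growth bound $|U(x)| \leq c'(1+|x|^p)$ together with BDG and $|H|\leq k$ yields $|u_0(\mathbb{P})|,|u(\mathbb{P},0)| \leq K_0(R)$, while taking $H=0$ shows $u(\mathbb{P},v) \geq \mathbb{E}_\mathbb{P}[U(C-v)]$, which via strict monotonicity of $U$ forces $v(\mathbb{P})$ into a bounded interval. Applying Theorem \ref{thm:utility} to the Lipschitz claim $C-v$ (whose Lipschitz constant is $L$ uniformly in $v$) and to the null claim gives
\[ |u(\mathbb{P},v) - u(\mathbb{Q},v)| \leq K_1 \mathcal{AW}_p(\mathbb{P},\mathbb{Q}), \qquad |u_0(\mathbb{P}) - u_0(\mathbb{Q})| \leq K_1 \mathcal{AW}_p(\mathbb{P},\mathbb{Q}), \]
for $|v|\leq V$ and $\mathcal{AW}_p(\mathbb{P},\delta_0),\mathcal{AW}_p(\mathbb{Q},\delta_0) \leq R$, with a constant $K_1 = K_1(R,V)$. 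Using $u(\mathbb{P},v(\mathbb{P})) = u_0(\mathbb{P})$ and $u(\mathbb{Q},v(\mathbb{Q})) = u_0(\mathbb{Q})$, the identity
\[ u(\mathbb{P},v(\mathbb{P})) - u(\mathbb{P},v(\mathbb{Q})) = [u_0(\mathbb{P})-u_0(\mathbb{Q})] + [u(\mathbb{Q},v(\mathbb{Q})) - u(\mathbb{P},v(\mathbb{Q}))] \]
bounds the left-hand side in absolute value by $2K_1 \mathcal{AW}_p(\mathbb{P},\mathbb{Q})$.

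The main obstacle is to obtain a matching lower bound of the form $\delta\,|v(\mathbb{P})-v(\mathbb{Q})|$ for some $\delta > 0$ depending only on $R$. Assume WLOG $v(\mathbb{P}) \leq v(\mathbb{Q})$ and let $H^\epsilon \in \mathcal{H}_k$ be an $\epsilon$-optimizer for $u(\mathbb{P},v(\mathbb{Q}))$. Setting $Z_s := C - s + (H^\epsilon \bigcdot X)_T$ for $s \in [v(\mathbb{P}),v(\mathbb{Q})] \subset [-V,V]$, the fundamental theorem of calculus combined with Fubini gives
\[ u(\mathbb{P},v(\mathbb{P})) - u(\mathbb{P},v(\mathbb{Q})) \geq \int_{v(\mathbb{P})}^{v(\mathbb{Q})} \mathbb{E}_\mathbb{P}[U'(Z_s)] \, ds - \epsilon. \]
The bounds $\mathcal{AW}_p(\mathbb{P},\delta_0) \leq R$, $|H^\epsilon|\leq k$, Lipschitzness of $C$ and BDG imply $\sup_s \mathbb{E}_\mathbb{P}[|Z_s|^p] \leq M = M(R,V,L,k)$; Markov's inequality then yields an $A=A(M)$ such that $\mathbb{P}(|Z_s|\leq A) \geq 1/2$ for every $s$. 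Since $U'$ is non-increasing (as $U$ is concave) and $U'(A) > 0$, one has $\mathbb{E}_\mathbb{P}[U'(Z_s)] \geq U'(A)/2 =: \delta > 0$. Letting $\epsilon\to 0$ gives
\[ u(\mathbb{P},v(\mathbb{P})) - u(\mathbb{P},v(\mathbb{Q})) \geq \delta\,(v(\mathbb{Q})-v(\mathbb{P})), \]
and combining with the upper bound from the previous paragraph yields $|v(\mathbb{P})-v(\mathbb{Q})| \leq (2K_1/\delta) \mathcal{AW}_p(\mathbb{P},\mathbb{Q})$, completing the proof.
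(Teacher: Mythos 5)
Your proof is correct and follows essentially the same route as the paper: first a uniform bound $|v(\mathbb{P})|\le V(R)$ via the growth bound, BDG, and strict monotonicity of $U$; then two applications of Theorem~\ref{thm:utility} (with a constant uniform in $v\in[-V,V]$, which is exactly the remark the paper makes about inspecting the proof of that theorem); and finally a uniform lower bound on the slope of $v\mapsto u(\mathbb{P},v)$ obtained by combining Markov's inequality, the uniform $p$-moment bound on $C+(H\bigcdot X)_T$, and the fact that $U'$ is non-increasing and strictly positive. Your ``implicit-function'' packaging of the sandwiching step is a mild cosmetic change but the decomposition, the key lemmas invoked, and the order of estimates coincide with the paper's argument.
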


\subsection{Structure of the paper}
In Section \ref{sec:literature} we briefly review the literature related to this paper.  In Section \ref{sec:AWSection} we establish some basic properties of the adapted Wasserstein distance, discuss the choice of cost function and give some examples. Moreover we derive a  contraction principle (Theorem \ref{thm:contraction}) which relates adapted Wasserstein distance with a `weak' (in the sense of Gozlan et al \cite{GoRoSaTe17}) transport distance. This result forms the basis for the  proofs of the results mentioned in the introduction, as well as certain extensions of these results, see Section \ref{sec proofs intro}. Finally we conclude with some remarks in Section \ref{sec the end}.

\section{Literature}\label{sec:literature}

{
The articles closest in spirit to ours are \cite{AcBaZa16,BiTa19,GlPfPi17}. Acciaio, Zalashko and one of the present authors consider in \cite{AcBaZa16} an object related to the adapted Wasserstein distance in continuous time in connection with utility maximization, enlargement of filtrations and optimal stopping. Glanzer, Pflug, and Pichler \cite{GlPfPi17} prove a deviation-inequality for the so-called nested distance in a discrete time framework\footnote{Note added in revision: improved convergence rates have been recently obtained in \cite{BaBaBeWi20} for a related sample-based estimator. Together with the results of the present article, this gives statistical consistency for an empirical version of the financial problems considered.}, and consider acceptability pricing  over an ambiguity set described through the nested distance. Bion-Nadal and Talay  \cite{BiTa19} study via PDE arguments a continuous-time optimization problem which is related to the adapted Wasserstein distance.

The concept of causal couplings, and optimal transport over causal couplings, has been recently popularized by Lassalle \cite{Las18} although precursors can be found in the works \cite{YW,Rueschendorf}. This notion is central to the recent articles \cite{AcBaZa16,BaBeLiZa16,BaBeEdPi17,BaBeHuKa17}.

The idea of strengthening weak convergence of measures in order to account for the temporal evolution has some history. Indeed several authors have independently introduced different approaches to address this challenge: 
The seminal unpublished work by Aldous \cite{Al81} introduces the notion of extended weak convergence for the study of stability of optimal stopping problems. The principal idea is not to compare the laws of processes directly, but rather the laws of the corresponding prediction processes.
 Independently,  Hellwig \cite{He96} introduces the information topology for the stability of equilibrium problems in economics. Roughly, two probability measures on a product of finitely many spaces $X_1\times \ldots \times X_N$ are considered to be close if for each $t \leq N$ the projections onto the first $t$ coordinates as well as the corresponding conditional (regular) disintegrations are close. Unrelated to these developments Pflug and Pichler \cite{Pf09,PfPi12,PfPi14} have introduced the nested distances for the stability of stochastic programming in discrete time. The nested distance is the obvious role model for the adapted Wasserstein distances considered in this article and (as mentioned above) for a fixed number of time steps and $p \geq 1$, they are obviously equivalent. Yet another idea to account for the temporal evolution of processes would be to symmetrise the causal transport costs $\mathcal{W}_{c}(\P,\Q)$ defined by  Lassalle \cite{Las18} by taking the maximum or sum of $\mathcal{W}^2_{c}(\P,\Q)$ and $\mathcal{W}^2_{c}(\Q,\P)$; this was pointed out by Soumik Pal.
 
 In parallel work \cite{BaBaBeEd19b}, the four authors of the present article investigate the relations between these concepts in detail. Remarkably, in discrete time \emph{all} of the concepts mentioned above (adapted Wasserstein distances, extended weak convergence, information topology, nested distances, symmetrised causal transport costs) define the \emph{same topology}. {As noted above, this `weak adapted topology' refines the usual weak topology (properly for $T\geq 2$, see also Remark \ref{rem:superhedging.does.not.work}). The articles \cite{BaBeEdPi17,BaBaBeEd19b,Ed19} investigate basic properties of this topology, e.g.\ the weak adapted topology is Polish \cite[Section 5]{BaBeEdPi17}, sets are totally bounded w.r.t.\ to adapted Wasserstein distance / nested distance if and only if they are totally bounded w.r.t.\ usual Wasserstein distance \cite[Lemma 1.6]{BaBaBeEd19b}. 
  For recent applications of these concepts to optimal transport and probabilistic variants thereof we refer to \cite{BaBePa18,BaPa19,Wi19}.}

  {In contrast, fundamental topological properties of the above mentioned concepts in the  continuous time case seem to be much less understood and, as far as the authors are concerned, pose an interesting challenge for future research. Specifically, it is not clear to us whether the topology associated to the adapted Wasserstein distance is  Polish  in the continuous time case. In a similar vein, we expect that  results analogous to the ones of the present article should apply in the case of c\`adl\`ag paths, but this extension is beyond the scope of our current understanding of adapted Wasserstein distances.}

The question of stability in mathematical finance has been studied from different perspectives over the years.
Notably, starting with the articles of Lyons  \cite{Ly95} and  Avellaneda, Levy, Paras \cite{AvLePa95} the area of robust finance has mainly focused on extremal models and hedging strategies which dominate the payoff for every model in a specified class. Following the publication of Hobson's seminal article \cite{Ho98a} connections with the Skorokhod embedding problem have been a driving force of the field, see the surveys of Hobson \cite{Ho11} and \OB\  \cite{Ob04}. Recently this has been complemented by techniques coming from (martingale) optimal transport, early papers which advance this viewpoint include \cite{HoNe12,BeHePe12,GaHeTo13,BeJu16,BoNu13,DoSo12,CaLaMa14,BeCoHu16}.
The literature on `local' misspecification of volatility in a sense more closely related to the present article appears more spare.
  El Karoui, Jeanblanc, and Shreve \cite{ElJeSh98}  establish in a stochastic volatility framework that if the misspecified volatility dominates the true volatility, then the misspecified price of call options dominates the real price; see also the elegant account of 
 Hobson \cite{Ho98c}. 
 More recently, the question of pricing and hedging under uncertainty about the volatility of a reference local volatility model  is studied by  Herrmann, Muhle-Karbe, and Seifried  \cite{HeMuSe17} (see also \cite{HeMu17}). Less plausible models are penalized through a mean square distance to the volatility of the reference model and the authors obtain explicit formulas for prices and hedging strategies in a limit for small uncertainty aversion. 
Becherer and Kentia \cite{BeKe17} derive worst-case  good-deal bounds under model ambiguity which concerns drift as well as volatility. Indeed, discussions with Dirk Becherer motivated us to consider also models with drift in our results on stability of super hedging.
 The behaviour of the superhedging price in a ball (w.r.t.\ various notions of distance) around a reference model is studied in depth by \OB{} and Wiesel \cite{ObWi18} for a $d$-dimensional asset and one time period.

A notable implication of our work is that it yields a coherent way to measure model-uncertainty (in the sense of Cont's influential article \cite{Co06}):  Fix  a subset $M_0$ of the set $M$ of all consistent models, i.e.\ martingale measures which are consistent with  benchmark instruments whose price can be observed on the market.
 Given $M_0$, the model uncertainty associated to a derivative $f$ can be gauged through $$\rho_{M_0}(f):=\sup \{ \mathbb E_{\mathbb Q} f :\mathbb Q\in M_0 \} - \inf \{\mathbb E_{\mathbb Q}
f : \mathbb Q\in M_0 \}.$$ The worst-case approach typically pursued in robust finance then yields  $\rho_{M_0}(f)$ for $M_0=M$, but it appears equally natural to take $M_0$ to be an infinitesimal ball around a reference model. This approach is first carried out by Drapeau, Ob{\l}{\'o}j, Wiesel and one of the present authors \cite{BaDrObWi19} in a one period framework. Our results  indicate that adapted Wasserstein distance provides a way to extend this to a multi-period setup, and we intend to pursue this further in future work.
 
On a different note, much work has been done regarding the convergence of discrete time models to their continuous time analogues. Due to the vastness of this literature we refer the reader to the book \cite{prigent2003weak} for references. Finally, in more recent times and starting from the works of Kardaras and {\v{Z}}itkovi{\'c}, the stability of utility maximization has been studied in \cite{ZitKar,Larsenpref,ZitLar,MWstability,weston2016stability} among others.}

\section{The adapted Wasserstein distance}\label{sec:AWSection}

\subsection{Basic properties of $\mathcal{AW}_p$}
The following Lemma shows that $\mathcal{AW}_p$ is well-defined.

\begin{lemma}
\label{lem:semimartingales.remain.semimartingale}
	Let $\mathbb{P},\mathbb{Q}$ be integrable (semi-)martingale measures for $X,Y\colon \Omega\to\Omega$, respectively, and let $\pi$ be a bi-causal coupling between $\mathbb{P}$ and $\mathbb{Q}$.
	Then $X,Y,X-Y\colon\Omega\times\Omega\to\Omega$ are (semi)-martingales w.r.t.~$\pi$.
	Further, if $X=M+A$ denotes the semimartingale decomposition under $\mathbb{P}$, then up to evanescence $M+A$ is the semimartingale decomposition of $X$ under $\pi$.  
\end{lemma}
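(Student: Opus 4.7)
\smallskip

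\noindent\textbf{Proof plan for Lemma \ref{lem:semimartingales.remain.semimartingale}.}

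The plan is to use the disintegration $\pi(d\omega,d\eta)=\mathbb{P}(d\omega)\pi_\omega(d\eta)$ together with causality to transfer the martingale property of $M$ under $\mathbb{P}$ to a martingale property under $\pi$ with respect to the larger product filtration $(\mathcal{F}_t\otimes\mathcal{F}_t)_t$. I will first handle the discrete time case, where the argument is conceptually clean, and then explain how the same idea adapts to continuous time.

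The central computation is the following. I fix $s>t$ in $I$, bounded $\mathcal{F}_t$-measurable functions $f,g:\Omega\to\mathbb{R}$, and verify
\[
\mathbb{E}_\pi\bigl[M_s(X)\,f(X)\,g(Y)\bigr]=\mathbb{E}_\pi\bigl[M_t(X)\,f(X)\,g(Y)\bigr].
\]
Using the disintegration, the left-hand side equals
\[
\int M_s(\omega)f(\omega)\Bigl(\int g(\eta)\,\pi_\omega(d\eta)\Bigr)\mathbb{P}(d\omega).
\]
The causality condition says precisely that $\omega\mapsto\int g(\eta)\,\pi_\omega(d\eta)=:\tilde g(\omega)$ is $\mathcal{F}^{\mathbb{P}}_t$-measurable whenever $g$ is $\mathcal{F}_t$-measurable. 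Hence $f\cdot\tilde g$ is $\mathcal{F}^{\mathbb{P}}_t$-measurable, and the $\mathbb{P}$-martingale property of $M$ (extended to the completed filtration in the usual way) yields $\mathbb{E}_{\mathbb{P}}[M_s f\tilde g]=\mathbb{E}_{\mathbb{P}}[M_t f\tilde g]$, which after undoing the disintegration gives the desired equality. A standard monotone class argument upgrades this to all bounded $(\mathcal{F}_t\otimes\mathcal{F}_t)$-measurable test functions, so $M$ is a $\pi$-martingale with respect to the product filtration. The finite-variation predictable process $A$ depends only on $\omega$, and since $A_t$ is $\mathcal{F}_t$-measurable (resp.\ $\mathcal{F}_t$-predictable in continuous time) it is automatically measurable (resp.\ predictable) for the product filtration. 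By uniqueness of the semimartingale/Doob decomposition, $X=M+A$ is the decomposition of $X$ under $\pi$. Applying the same reasoning to $S(\pi)$, which is causal from $\mathbb{Q}$ to $\mathbb{P}$ by the bi-causality hypothesis, gives the analogous statement for $Y=M^Y+A^Y$. Then $X-Y=(M^X-M^Y)+(A^X-A^Y)$ is manifestly a semimartingale under $\pi$.

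In continuous time one repeats the above for dyadic times $s,t$ and passes to the limit using right-continuity of the paths and of the filtration (generated by $X$ on the first factor and by $Y$ on the second); the argument that $\tilde g=\int g\,d\pi_\cdot$ lies in $\mathcal{F}^{\mathbb{P}}_t$ for $\mathcal{F}_t$-measurable $g$ is unchanged, and one applies it to generating families such as $g=\prod_i \phi_i(Y_{r_i})$ with $r_i\leq t$. Continuity of $A$ (hence $\mathcal{F}_t\otimes\mathcal{F}_t$-predictability) and of $M$ (hence true martingale by the BDG hypothesis on $\mathcal{SM}_p$) are automatic under the standing assumptions.

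The main obstacle I anticipate is purely technical: upgrading the identity from pairs $f(X)g(Y)$ to all bounded $(\mathcal{F}_t\otimes\mathcal{F}_t)$-measurable functions, and in the continuous time case, making sure that the resulting $\pi$-martingale property is stated with respect to a filtration which is right-continuous and contains the $\pi$-null sets, so that uniqueness of the semimartingale decomposition applies without further adjustment. Both points are handled by routine monotone class / completion arguments, so no genuinely new idea is needed beyond the causality identity above.
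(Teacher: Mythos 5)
Your proposal is correct and follows essentially the same route as the paper: disintegrate $\pi$ with respect to $\mathbb{P}$, use causality to show that $\omega\mapsto\int g(\eta)\,\pi_\omega(d\eta)$ is $\mathcal{F}^{\mathbb{P}}_t$-measurable, upgrade from product test functions to general $\mathcal{F}_t\otimes\mathcal{F}_t$-measurable ones by monotone class, and deduce that $M$ remains a $\pi$-martingale, whence $X=M+A$ is still the decomposition by uniqueness. The paper states the argument once in a way that covers discrete and continuous time uniformly (the filtration on $\Omega$ is already taken right-continuous in continuous time, so no separate dyadic limiting step is needed), and leaves the symmetric statement for $Y$ via $S(\pi)$ and the conclusion for $X-Y$ implicit; your spelling these out is fine but adds nothing beyond what the paper intends.
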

\begin{proof}
	Let $X=M+A$ be the semimartingale decomposition under $\mathbb{P}$ and consider $M$ and $A$ as processes on $\Omega\times\Omega$ via $M(\omega,\eta):=M(\omega)$ and $A(\omega,\eta):=A(\omega)$.
	Further let $\pi=\mathbb{P}(d\omega)\pi_\omega(d\eta)$ be a bi-causal coupling between $\mathbb{P}$ and $\mathbb{Q}$.
	To show that $X=M+A$ remains the semimartingale decomposition under $\pi$, it is enough to show that $M$ is a martingale under $\pi$.
	To that end, let $0\leq s\leq t$ and let $Z\colon\Omega\times\Omega\to\mathbb{R}$ be $\mathcal{F}_s\otimes\mathcal{F}_s$-measurable and bounded. {(Recall that $\mathbb F=(\F_t)_t$ denotes the right-continuous filtration generated by $X$ and that we endow $\Omega \times \Omega$ with the filtration $(\F_t \otimes \F_t)_t$.)}
	Then the random variable $Z'\colon\Omega\to\mathbb{R}$ defined by 
	\[Z'(\omega):=\int Z(\omega,\eta)\,\pi_\omega(d\eta) \quad\text{is $\mathcal{F}^\P_s$-measurable,}\]
	 and clearly bounded.
	Indeed, if $Z(\omega,\eta)=Z^1(\omega)Z^2(\eta)$ for $\mathcal{F}_s$-measurable bounded functions $Z^1$ and $Z^2$, then it follows from the definition of bi-causality that $Z'$ is $\mathcal{F}^\P_s$-measurable; the general statement then follows from a monotone class argument.
	Therefore
	\begin{align*}
	\mathbb{E}_\pi[(M_t-M_s)Z]
	&=\int (M_t(\omega)-M_s(\omega))\int Z(\omega,\eta)\,\pi_\omega(d\eta)\,\mathbb{P}(d\omega)\\
	&=\mathbb{E}_\mathbb{P}[(M_t-M_s)Z']\\
	&=0,
	\end{align*}
	by the martingale property of $M$ under $\mathbb{P}$.
	This shows that $M$ is a martingale under $\pi$ and therefore that $X=M+A$ is the semimartingale decomposition under $\pi$.
\end{proof}

\begin{lemma}
\label{lem:metric}
	$\mathcal{AW}_p$ defines a metric on the set $\mathcal{SM}_p(\Omega)$.
\end{lemma}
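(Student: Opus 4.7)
The task is to verify the four metric axioms for $\mathcal{AW}_p$ on $\mathcal{SM}_p(\Omega)$. Non-negativity is immediate, and symmetry follows from Definition \ref{def:adatped.coupling}: if $\pi\in\cplba(\P,\Q)$ then $S(\pi)\in\cplba(\Q,\P)$, and the cost in \eqref{eq:AdW} is manifestly symmetric in the roles of $X$ and $Y$. So the real work is the identity of indiscernibles and the triangle inequality.

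For the identity of indiscernibles, one direction is handled by taking $\pi$ to be the pushforward of $\P$ under $\omega\mapsto(\omega,\omega)$; this is clearly bi-causal and its cost vanishes, giving $\mathcal{AW}_p(\P,\P)=0$. For the converse, suppose $\mathcal{AW}_p(\P,\Q)=0$ and select a sequence $(\pi_n)\subset\cplba(\P,\Q)$ along which the cost tends to zero. Combining the BDG inequality (to pass from $\langle M^X-M^Y\rangle_T^{p/2}$ to $\sup_{t\in I}|M^X_t-M^Y_t|^p$) with the trivial bound $\sup_{t\in I}|A^X_t-A^Y_t|^p\le|A^X-A^Y|_{\text{1-var}}^p$ shows $\mathbb{E}_{\pi_n}[\sup_{t\in I}|X_t-Y_t|^p]\to 0$. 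For any bounded continuous $f\colon\Omega\to\mathbb{R}$ (continuity w.r.t.\ the maximum norm), dominated convergence then yields $\int f\,d\P-\int f\,d\Q=\mathbb{E}_{\pi_n}[f(X)-f(Y)]\to 0$, so $\P=\Q$.

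The triangle inequality is the main obstacle. Given $\P_1,\P_2,\P_3$ and $\epsilon>0$, pick near-optimal $\pi_{12}\in\cplba(\P_1,\P_2)$ and $\pi_{23}\in\cplba(\P_2,\P_3)$, and glue them via a regular disintegration: define $\pi_{123}(d\omega_1,d\omega_2,d\omega_3):=\pi_{12}(d\omega_1,d\omega_2)\,\pi_{23,\omega_2}(d\omega_3)$ on $\Omega^3$, where $\pi_{23,\omega_2}$ is the $\P_2$-a.s.\ defined conditional of $\pi_{23}$ given its first coordinate. The key observation is that the projection $\pi_{13}$ onto coordinates $(1,3)$ lies in $\cplba(\P_1,\P_3)$: this gluing property for (bi-)causal couplings is standard (see e.g.\ Lassalle \cite{Las18}), and is verified by checking, for each $t\in I$ and $A\in\mathcal{F}_t$, that $\omega_1\mapsto\pi_{13,\omega_1}(A)$ is $\mathcal{F}_t^{\P_1}$-measurable using the causality of $\pi_{12}$ and $\pi_{23}$ together with a monotone class argument; the symmetric condition follows by interchanging roles.

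It remains to combine this with a pointwise triangle inequality for the cost. Writing $u(X,Y):=\langle M^X-M^Y\rangle_T^{1/2}$ and $v(X,Y):=|A^X-A^Y|_{\text{1-var}}$, both $u$ and $v$ satisfy the pointwise triangle inequality, since $\sqrt{\langle\cdot\rangle_T}$ is a seminorm on martingales (expand $\langle M+N\rangle_T$ and apply Kunita--Watanabe) and $|\cdot|_{\text{1-var}}$ is a seminorm on finite-variation processes. The per-$\omega$ cost $c_p:=(u^p+v^p)^{1/p}=\|(u,v)\|_{\ell^p}$ therefore satisfies $c_p(X^1,X^3)\le c_p(X^1,X^2)+c_p(X^2,X^3)$ by monotonicity of the $\ell^p$-norm and Minkowski in $\mathbb{R}^2$. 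Using Lemma \ref{lem:semimartingales.remain.semimartingale} on each pairwise projection of $\pi_{123}$ to identify the semimartingale decompositions under $\pi_{123}$ with those under the corresponding marginals, a final application of Minkowski in $L^p(\pi_{123})$ gives
\[\mathcal{AW}_p(\P_1,\P_3)\le\|c_p(X^1,X^3)\|_{L^p(\pi_{123})}\le\|c_p(X^1,X^2)\|_{L^p(\pi_{12})}+\|c_p(X^2,X^3)\|_{L^p(\pi_{23})},\]
which is within $2\epsilon$ of $\mathcal{AW}_p(\P_1,\P_2)+\mathcal{AW}_p(\P_2,\P_3)$. Letting $\epsilon\downarrow 0$ completes the proof. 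The subtle point to double-check is that Lemma \ref{lem:semimartingales.remain.semimartingale} applied on $\Omega^3$ with the product filtration $(\mathcal{F}_t^{\otimes 3})_t$ preserves each $X^i=M^i+A^i$ as a semimartingale decomposition; this is essentially the same monotone class / disintegration argument as in the bivariate case.
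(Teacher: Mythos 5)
Your proposal follows essentially the same route as the paper: symmetry and non-negativity are immediate, the identity of indiscernibles is obtained exactly as in the paper by bounding $\E_\pi[\|X-Y\|_\infty^p]$ via BDG together with $\|\cdot\|_\infty\le|\cdot|_{\text{1-var}}$ (so that the usual Wasserstein distance is dominated), and the triangle inequality uses the same gluing $\Pi(d\omega,d\eta,d\gamma)=\pi_\eta(d\omega)\tilde\pi_\eta(d\gamma)\,\Q(d\eta)$ together with the same measurability/monotone-class check that the two-coordinate projection is bi-causal, invoking Lemma~\ref{lem:semimartingales.remain.semimartingale} (extended to $\Omega^3$) to identify the decompositions. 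The only variation is in the last step: you argue pointwise via Kunita--Watanabe and $\ell^p$-Minkowski in $\mathbb R^2$ and then apply Minkowski in $L^p(\Pi)$, whereas the paper invokes the known fact that $M\mapsto\E_\Pi[\langle M\rangle_T^{p/2}]^{1/p}$ and $A\mapsto\E_\Pi[|A|_{\text{1-var}}^p]^{1/p}$ are norms and combines them — this is the same calculation in a different order, not a genuinely different argument. One small omission: you never verify that $\mathcal{AW}_p(\P,\Q)<\infty$ for $\P,\Q\in\mathcal{SM}_p(\Omega)$; this is needed for a genuine metric and is easy (e.g.\ use the product coupling $\P\otimes\Q$, or, as the paper does, note $\mathcal{AW}_p(\P,\delta_0)=\E_\P[\langle M\rangle_T^{p/2}+|A|_{\text{1-var}}^p]^{1/p}<\infty$ and apply the triangle inequality).
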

{We note that very similar arguments could be used to show that $\mathcal{AW}_p$ defines a metric for semimartingales with infinite time horizon $\mathbb N$ or $[0,\infty)$.}
\begin{proof}[Proof of Lemma \ref{lem:metric}]
	It is clear that $\mathcal{AW}_p(\mathbb{P},\mathbb{Q})=\mathcal{AW}_p(\mathbb{Q},\mathbb{P})\geq0$ for all $\mathbb{P},\mathbb{Q}\in\mathcal{SM}_p(\Omega)$.
	Suppose that $\mathcal{AW}_p(\mathbb{P},\mathbb{Q})=0$. 
	As $\|\cdot\|_\infty\leq |\cdot|_{\text{1-var}}$, it is immediate that if $\pi$ participates in the infimum defining $\mathcal{AW}_p(\mathbb{P},\mathbb{Q})$, and $X-Y=M+A$, then
	\begin{align*}
	\mathbb{E}_{\pi}[ \|X-Y\|_\infty^p ]
	&\leq  2^{p-1} \mathbb{E}_\pi[ \| M\|_\infty^p + |A|_{\text{1-var}}^p]\\
	&\leq 2^{p-1} b_p  \mathbb{E}_\pi[ \langle M\rangle_T^{p/2} + |A|_{\text{1-var}}^p]
	\end{align*}
	where $b_p$ denotes the BDG constant  and we used the BDG inequality for the martingale $M$.
	Hence the usual Wasserstein distance between $\mathbb{P}$ and $\mathbb{Q}$ (defined w.r.t.~the $\|\cdot\|_\infty$-norm) is dominated from above by $\mathcal{AW}_p(\mathbb{P},\mathbb{Q})$, and so $\mathbb{P}=\mathbb{Q}$. 
	 		
	We now prove the triangle inequality. 
	Let $\mathbb{P},\mathbb{Q},\mathbb{R}$ given.
	We fix $\varepsilon>0$ and assume $\pi$ is bi-causal $\varepsilon$-optimal for $\mathcal{AW}_p(\mathbb{P},\mathbb{Q})$ and $\tilde{\pi}$ is bi-causal $\varepsilon$-optimal for $\mathcal{AW}_p(\mathbb{Q},\mathbb{R})$.
	{In the next couple of lines, $\omega$ will always denote the first coordinate of a vector in $\Omega^3$, $\eta$ the second, and $\gamma$ the last.}
	Let
	\[\pi(d\omega,d\eta)=\pi_\eta(d\omega)\,\mathbb{Q}(d\eta) \quad\text{and}\quad \tilde{\pi}(d\eta,d\gamma)=\tilde{\pi}_\eta(d\gamma)\mathbb{Q}(d\eta)\]
	be disintegrations, and define $\Pi\in\mathcal{P}(\Omega^3)$ by
	\[	\Pi(d\omega,d\eta,d\gamma)= \pi_\eta(d\omega)\,\tilde{\pi}_\eta(d\gamma)\,\mathbb{Q}(d\eta).\]
	If $\overline{\pi}(d\omega,d\gamma):=\int_\Omega \Pi(d\omega,d\eta,d\gamma)$ is the projection of $\Pi$ onto the first and third components, then it is clear that the first and second marginals of $\overline{\pi}$ are $\mathbb{P}$ and $\mathbb{R}$ respectively.
	Moreover, a disintegration of $\overline{\pi}=\overline{\pi}_\omega(d\gamma)\,\mathbb{P}(d\omega)$ is given by
	\[ \overline{\pi}_\omega(d\gamma)= \int_\Omega \tilde{\pi}_\eta(d\gamma)\,\pi_\omega(d\eta),\]
	{where, as indicated above, $\pi_\omega$ now denotes the disintegration of $\pi$ w.r.t.\ the first coordinate, that is  $\pi(d\omega,d\eta)=\pi_\omega(d\eta)\,\mathbb{P}(d\omega)$.
	We claim that, for every $A\in\mathcal{F}_t$, the mapping $\omega\mapsto \overline{\pi}_\omega(A)$ is $\mathcal{F}^\P_t$-measurable.
	Indeed, by bi-causality of $\tilde{\pi}$ one has that $\eta\mapsto\tilde{\pi}_\eta(A)$ is $\mathcal{F}^\Q_t$-measurable.
	Thus there is an $\mathcal{F}_t$-measurable function $X$ and a $\mathbb{Q}$-almost surely zero function $N$ such that $\tilde{\pi}_\eta(A)=X(\eta)+N(\eta)$ for all $\eta\in\Omega$.
	Then $\overline{\pi}_\omega(A)=\int_\Omega X(\eta)\,\pi_\omega(d\eta) + \int_\Omega N(\eta)\pi_\omega(d\eta)$ for all $\eta\in\Omega$.
	The first term is $\mathcal{F}^\P_t$-measurable (by bi-causality of $\pi$), and, as $\pi$ is a coupling between $\mathbb{P}$ and $\mathbb{Q}$, one has that $\int_\Omega N(\eta)\pi_\omega(d\eta)=0$ for $\mathbb{P}$-almost all $\omega\in\Omega$.}

	The argument for $\overline{\pi}=\overline{\pi}_\gamma(d\omega)\,\mathbb{R}(d\gamma)$ is similar and therefore $\overline{\pi}$ is a bi-causal coupling between $\mathbb{P}$ and $\mathbb{R}$. 
	Finally, it follows as in the proof of Lemma \ref{lem:semimartingales.remain.semimartingale} that, if $X=M^X+A^X$, $Y=M^Y+A^Y$, and $Z=M^Z+A^Z$ are the semimartingale decompositions under $\mathbb{P}$, $\mathbb{Q}$, and $\mathbb{R}$, then they remain the semimartingale decomposition under $\Pi$ on $\Omega^3$ endowed with the product filtration.

	To finish the proof of the triangle inequality, we observe that
	\begin{align*}
	\mathcal{AW}_p(\mathbb{P},\mathbb{R})
	&\leq \mathbb{E}_{\overline{\pi}}[ \langle M^X-M^Z \rangle_T^{p/2} + |A^X-A^Z|_{\text{1-var}}^p  ]^{1/p}\\
	&= \mathbb{E}_{\Pi}\Big[ \langle (M^X-M^Y)+(M^Y-M^Z) \rangle_T^{p/2} + \cdots\\
	&\quad \cdots + |(A^X-A^Y)+(A^Y-A^Z)|_{\text{1-var}}^p  \Big]^{1/p}.
	\end{align*}
	The function $M\mapsto \mathbb{E}_{\Pi}[ \langle M \rangle_T^{p/2}]^{1/p}$ is known to be a norm on the space $\mathcal M_p(\Pi)$ of $\Pi$-martingales started at zero whose supremum is $p$-integrable. 
	Likewise $A\mapsto \mathbb{E}_{\Pi}[ |A|_{\text{1-var}}^p ]^{1/p}$ is a norm on the space of finite variation processes with $p$-integrable variation. 
	Hence $$(M,A)\mapsto \|(M,A)\|:= \mathbb{E}_{\Pi}[\langle M \rangle_T^{p/2}+ |A|_{\text{1-var}}^p ]^{1/p}$$ is a norm on the product of these spaces. 
	We conclude the proof for the triangle inequality with
	\begin{align*}
	\mathcal{AW}_p(\mathbb{P},\mathbb{R})&
	\leq \|(M^X-M^Y,A^X-A^Y)+(M^Y-M^Z,A^Y-A^Z)\|
	\\ &\leq \|(M^X-M^Y,A^X-A^Y)\| + \|(M^X-M^Y,A^X-A^Y)\|
	\\&= \mathbb{E}_{\pi}[  \langle M^X-M^Y \rangle_T^{p/2}+ |A^X-A^Y|_{\text{1-var}}^p ]^{1/p} \\
		&\quad+ \mathbb{E}_{\tilde{\pi}} [  \langle M^Y-M^Z \rangle_T^{p/2}+ |A^Y-A^Z|_{\text{1-var}}^p ]^{1/p}
	\\&\leq 2\varepsilon+ \mathcal{AW}_p(\mathbb{P},\mathbb{Q}) + \mathcal{AW}_p(\mathbb{Q},\mathbb{R}),
	\end{align*}
	since the semimartingale decomposition of $X-Y$ under $\pi$ is $(M^X-M^Y)-(A^X-A^Y)$, with an analogous expression for $Y-Z$ under $\tilde{\pi}$.

	To conclude the proof, it remains to show that $\mathcal{AW}_p(\mathbb{P},\mathbb{Q})<\infty$ for all $\mathbb{P},\mathbb{Q}\in\mathcal{SM}_p(\Omega)$.
	By Lemma \ref{lem:semimartingales.remain.semimartingale}, we have $\mathcal{AW}_p(\mathbb{P},\delta_0)= \mathbb{E}_\mathbb{P}[\langle M\rangle_T^{p/2} + |A|_{\text{1-var}}^p]^{1/p}$ where $X=M+A$ is the semimartingale decomposition under $\mathbb{P}$.
	Therefore the triangle inequality implies that $\mathcal{AW}_p$ is real-valued on $\mathcal{SM}_p(\Omega)$.
\end{proof}

\subsection{Examples and explicit calculations}\label{subsec:RemarksAndExamples}

We start by a simple result which permits to give a closed-form expression of the adapted Wasserstein distance in given continuous-time situations: 

\begin{proposition}\label{lem synchron}
For $i\in\{1,2\}$ consider the SDEs with bounded progressive coefficients:
\begin{align}\label{eq two Brownians}
dX^i_t=\mu_i(t,\{X^i_s\}_{s\leq t})dt+\sigma_i(t,\{X^i_s\}_{s\leq t})dB^i_t.
\end{align}
Assume that each SDE admits a unique strong solution and denote by $\mathbb{P}^{\mu_i,\sigma_i}$ the respective laws. Further assume that
\begin{itemize}
\item $\mu_1$ is a function of time only (namely $\mu_1\colon[0,T]\to \mathbb R$)
\item $\sigma_1,\sigma_2\geq 0$ and at least one of them is a function of time only.
\end{itemize}
 Then the \textit{synchronous} coupling (namely $\pi^*=$ joint law of $(X^1,X^2)$, where $B^1=B^2$ in \eqref{eq two Brownians}), is optimal in the definition of $\mathcal{AW}_p(\mathbb{P}^{\mu_1,\sigma_1},\mathbb{P}^{\mu_2,\sigma_2})$. 
\end{proposition}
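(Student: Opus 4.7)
The plan is a matching upper/lower bound argument. For any bi-causal coupling $\pi$ I bound the cost from below by a quantity depending only on the (fixed) marginals, and then I verify that the synchronous coupling $\pi^*$ saturates this bound and lies in $\cplba(\mathbb{P}^{\mu_1,\sigma_1},\mathbb{P}^{\mu_2,\sigma_2})$. The hypothesis that $\mu_1$ and at least one of the $\sigma_i$ depend only on time is what makes the lower bound marginal-only, and is therefore essential.

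\textbf{Lower bound via Lemma \ref{lem:semimartingales.remain.semimartingale} and Kunita--Watanabe.} Let $\pi\in\cplba(\mathbb{P}^{\mu_1,\sigma_1},\mathbb{P}^{\mu_2,\sigma_2})$. By Lemma \ref{lem:semimartingales.remain.semimartingale}, the semimartingale decompositions of $X$ and $Y$ are preserved under $\pi$, so $A^X_t=\int_0^t \mu_1(s)\,ds$, $A^Y_t=\int_0^t \mu_2(s,Y_\cdot)\,ds$, $d\langle M^X\rangle_t=\sigma_1^2(t,X_\cdot)\,dt$ and $d\langle M^Y\rangle_t=\sigma_2^2(t,Y_\cdot)\,dt$. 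Since $\mu_1$ is deterministic, $|A^X-A^Y|_{\text{1-var}}^p=\bigl(\int_0^T|\mu_1(s)-\mu_2(s,Y_\cdot)|\,ds\bigr)^p$ is a functional of $Y$ alone, so its $\pi$-expectation is a constant $c_2$ independent of $\pi$. For the martingale term, Kunita--Watanabe gives
\[
d\langle M^X,M^Y\rangle_t\leq \sigma_1(t,X_\cdot)\sigma_2(t,Y_\cdot)\,dt,
\]
which together with $\sigma_1,\sigma_2\geq 0$ yields the pointwise estimate
\[
\langle M^X-M^Y\rangle_T\geq \int_0^T (\sigma_1(s,X_\cdot)-\sigma_2(s,Y_\cdot))^2\,ds.
\]
By hypothesis one of $\sigma_1,\sigma_2$ is a function of time only, so the integrand is a functional of only one of the two processes; raising to the $p/2$-th power (monotone on $[0,\infty)$) and taking $\pi$-expectation therefore yields a constant $c_1$ independent of $\pi$. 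Hence any bi-causal $\pi$ has cost $\geq c_1+c_2$.

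\textbf{The synchronous coupling saturates the bound.} On a probability space supporting a single Brownian motion $B$, construct both strong solutions $X^1,X^2$ using $B^1=B^2=B$ and let $\pi^*$ be their joint law. Then $d(M^{X^1}-M^{X^2})_t=(\sigma_1(t,X^1_\cdot)-\sigma_2(t,X^2_\cdot))\,dB_t$, so
\[
d\langle M^{X^1}-M^{X^2}\rangle_t=(\sigma_1(t,X^1_\cdot)-\sigma_2(t,X^2_\cdot))^2\,dt,
\]
turning the Kunita--Watanabe inequality above into an equality. Consequently $\mathbb{E}_{\pi^*}[\langle M^X-M^Y\rangle_T^{p/2}+|A^X-A^Y|_{\text{1-var}}^p]=c_1+c_2$, matching the lower bound.

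\textbf{Bi-causality of $\pi^*$ (the main obstacle).} Establishing $\pi^*\in\cplba$ is the most delicate step. I would argue via a path-dependent Markov property for strong solutions: conditionally on $X^i_{[0,t]}$, the future $X^i_{[t,T]}$ is a measurable functional of $X^i_{[0,t]}$ together with the Brownian increments $(B_s-B_t)_{s\in[t,T]}$, and these increments are independent of $\mathcal{F}^B_t\supset\sigma(X^1_{[0,t]},X^2_{[0,t]})$. This gives conditional independence of $X^1_{[t,T]}$ and $X^2_{[0,t]}$ given $X^1_{[0,t]}$, i.e.~for every $A\in\mathcal{F}_t$ the map $\omega\mapsto\pi^*(Y\in A\mid X=\omega)$ factors through $X^1_{[0,t]}$, which is exactly causality from $X^1$ to $X^2$; the symmetric argument gives causality in the other direction. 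Combining the three steps, $\pi^*$ is an admissible bi-causal coupling whose cost equals $c_1+c_2$, hence it is optimal.
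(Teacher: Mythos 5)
Your proof is correct and uses the same overall strategy as the paper: a coupling-independent lower bound on the cost functional, which the synchronous coupling is then shown to saturate. The one genuine technical difference is how the cross-variation $\langle M^X,M^Y\rangle$ is controlled. The paper invokes Doob's martingale representation on a possibly enlarged filtered probability space to write $M^1,M^2$ as stochastic integrals against two independent Brownian motions, and then applies Cauchy--Schwarz to the resulting integrand coefficients $\sigma_{ik}$; you instead deduce the density bound $|d\langle M^X,M^Y\rangle_t|\leq\sigma_1(t,X_\cdot)\,\sigma_2(t,Y_\cdot)\,dt$ directly from the Kunita--Watanabe inequality, without leaving the original space. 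Both routes give the same pointwise inequality $\langle M^X-M^Y\rangle_T\geq\int_0^T(\sigma_1-\sigma_2)^2\,dt$, but your route is more elementary and avoids the enlargement. For the drift part you appeal directly to Lemma~\ref{lem:semimartingales.remain.semimartingale} (preservation of the decomposition), whereas the paper reaches the same conclusion through the Lebesgue-differentiation identity for $\tfrac{d}{dt}A^i_t$; this is a matter of presentation only. Finally, you make explicit the verification that the synchronous coupling $\pi^*$ actually lies in $\cplba$, via conditional independence of the future Brownian increments from $\sigma(X^1_{[0,t]},X^2_{[0,t]})$; the paper's proof leaves this step tacit, so your addition is a useful completion rather than a deviation.
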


The discrete time version of the aforementioned synchronous coupling is given by the Knothe-Rosenblatt rearrangement \cite{BaBeLiZa16}, and a variant of the previous result can also be obtained in the discrete time framework.

\begin{proof}
Let $\pi$ be a feasible coupling for $\mathcal{AW}_p(\mathbb{P}^{\mu_1,\sigma_1},\mathbb{P}^{\mu_2,\sigma_2})$, leading to a finite cost. 
Naturally for this proof we denote the coordinate process on $\Omega\times \Omega$ by $(X^1,X^2)$. 
As before we let $X^i=A^i+M^i$ be the unique continuous semimartingale decomposition of $X^i$ under the $\mathbb{P}^{\mu_i,\sigma_i}$-completion of its right-continuous filtration. 
Observe that $\frac{d}{dt}A^1$ is a.s.\ deterministic, by the assumption on $\mu_1$, and that the law of $\frac{d}{dt}A^2$ is independent of the coupling $\pi$. 
Both facts can be derived easily from the identity
$$\frac{d}{dt}A^i_t= \lim_{\epsilon\searrow 0}  \frac{\mathbb{E}_\pi\big[ X^i_{t+\epsilon} | \mathcal F^{X^i}_t \big] -X^i_t}{\epsilon},$$
which by Lebesgue differentiation theorem holds $dt\otimes d\pi$-a.s. 
As a consequence, the term $\mathbb{E}_\pi[|A^1-A^2|^p_{1-var}]$ is independent of the coupling $\pi$ and so we may ignore it and only focus on the term $\mathbb{E}_\pi[\langle M^1-M^2\rangle_T^{p/2}]$.

By Doob's martingale representation \cite[Theorem 4.2]{KaSh12}, in a possibly enlarged filtered probability space $(\tilde \Omega,\tilde{\mathcal F},\tilde \pi)$ we may represent the martingale $(M^1,M^2)$ by
\begin{align*}
M^i_t=\int_0^t  \sigma_{i1}dW + \int_0^t  \sigma_{i2}d\hat W,
\end{align*}
where $W,\hat W$ are independent standard one-dimensional Brownian motions and $\{\sigma_{ik}:i,k\in\{1,2\}\}$ real-valued processes, both of them adapted in the enlarged filtered space. 
In the following we will omit the argument $\{X^i_s\}_{s\leq t}$ from $\sigma_i$. 
Necessarily $$\sigma_i^2=\frac{d}{dt}\langle M^i\rangle_t=\sigma_{i1}^2+\sigma_{i2}^2\,,\,\, (dt\otimes d\tilde \pi-a.s.).$$
By Cauchy-Schwarz inequality we deduce that almost surely
$$\langle M^1,M^2\rangle_T = \int_0^T[\sigma_{11}\sigma_{21}+\sigma_{12}\sigma_{22}]dt \leq \int_0^T \sigma_1\sigma_2 dt, $$ and accordingly we get the lower bound
$$\mathbb{E}_{\pi}[\langle M^1-M^2\rangle_T^{p/2}]
\geq \mathbb{E}_{\pi}\Big[\Big( \int_0^T (\sigma_1 -\sigma_2)^2 dt \Big)^{p/2}\Big].$$
As in the beginning of the proof, the right-hand side does not depend on the coupling $\pi$ thanks to either $\sigma^i$ being a function of time only. To conclude observe that for the synchronous coupling $\pi^*$ we have equality in the above equation.
\end{proof}

As an easy consequence we have

\begin{example}
\label{ex det diff}
	For bounded Lipschitz functions $\mu_1,\mu_2,\sigma_1,\sigma_2$ we denote by $\mathbb{P}^{\mu_i,\sigma_i}$ the law of the diffusion
	\[dX^i_t=\mu_i(t,X^i_t)dt+\sigma_i(t,X^i_t)dB_t.\]
	Assume that
	\begin{itemize}
	\item $\mu_{i}$ is independent of the $x$-variable, some $i\in\{1,2\}$, and
	\item $\sigma_k$ is independent of the $x$-variable, some $k\in\{1,2\}$.
	\end{itemize}
	Calling $j\in\{1,2\}\backslash\{i\}$ and $\ell\in\{1,2\}\backslash\{k\}$, we have
	\begin{align*}
	\mathcal{AW}_p(P^{\mu_1,\sigma_1},P^{\mu_2,\sigma_2})^p 
	&= \mathbb{E}\Big[ \Big( \int_0^T [\sigma_\ell(t,X_t^\ell)-\sigma_k(t)]^2 dt\Big)^{p/2} \Big]\\
	&\quad+ \mathbb{E}\Big[ \Big(\int_0^T |\mu_j(t,X^j_t)-\mu_i(t)|dt   \Big)^{p} \Big].
	\end{align*}
\end{example}

We now illustrate that in general it is not true that the straightforward synchronous coupling of Proposition \ref{lem synchron} is optimal. As a consequence, we do not expect a closed-form expression for the adapted Wasserstein distance. A discrete-time version of this observation is discussed in \cite[Section 7]{BaBeEdPi17}.

\begin{example}
	Consider $d=1$, $T=2$, and for each $c\in\mathbb R$ introduce
	\[\mu^c_t(\omega):=c 1_{[1,2]}(t) \mathop{\mathrm{sign}}(\omega_1)
	\quad\text{and}\quad
	\hat{\mu}^c_t(\omega):=-\mu^c_t(\omega).\]
	Assuming that $B$ is a Brownian motion, and for $\sigma\in\mathbb R_+$, we introduce the couplings
	\begin{align*}
	\pi_1&:= \mathop{\mathrm{Law}} \Big(\sigma B + \int \mu^c_t(B)dt\, ,\, \sigma B + \int \hat{\mu}^c_t(B)dt \Big),\\ 
	\pi_2&:= \mathop{\mathrm{Law}} \Big(\sigma B + \int \mu^c_t(B)dt\, ,\, -\sigma B + \int \hat{\mu}^c_t(-B)dt \Big).
	\end{align*}
	These couplings share the same marginals and each of them is bi-causal. 
	It is easy to compute
	\begin{align*}
	\mathbb{E}_{\pi_1}\big[ \langle M\rangle_T^{p/2} + |A|_{\text{1-var}}^p\big]  &= (2c)^p,\\
	\mathbb{E}_{\pi_2}\big[ \langle M\rangle_T^{p/2} + |A|_{\text{1-var}}^p\big]  &=(8\sigma^2)^{p/2}.
	\end{align*}
	We conclude that, for each $p$, there are plenty of pairs $(c,\sigma)$ such that the ``synchronous'' coupling $\pi_1$ is not optimal between its marginals for the metric $\mathcal{AW}_p$.
\end{example}

{
To close this section,  we estimate the distance between two geometric Brownian motions with different volatilities.
\begin{proposition}\label{geoBM} 
For $i=1,2$, let $\mathbb{P}^{\sigma_i}$ be the law of the solution to the SDE $dZ^i_t=\sigma_i Z^i_t dB_t^i$ with $Z^i_0=1$, where $B^i$ denotes Brownian motion and $\sigma^i\in \R_+$. Letting $R\sim N(0,T)$, we then have
$$\mathcal{AW}_2(\mathbb{P}^{\sigma_1},\mathbb{P}^{\sigma_2} )^2=\mathbb E\left[\left(e^{\sigma_1R-\frac{\sigma_1^2T}{2}}-e^{\sigma_2 R-\frac{\sigma_2^2T}{2}} \right)^2\right]=e^{\sigma_1^2T}-2e^{\sigma_1\sigma_2T} +e^{\sigma_2^2T}$$
and for $p>1$
$$\mathcal{AW}_p(\mathbb{P}^{\sigma_1},\mathbb{P}^{\sigma_2} )^p\leq c_p \mathbb E\left[\left(e^{\sigma_1R-\frac{\sigma_1^2T}{2}}-e^{\sigma_2R-\frac{\sigma_2^2T}{2}} \right)^p\right],$$
where $c_p$ is the constant in the BDG-inequality which allows to control quadratic variation by  terminal value. 
\end{proposition}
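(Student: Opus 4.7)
The plan is to reduce the problem to a classical comonotone-coupling question on the terminal marginals. First, since both $\mathbb{P}^{\sigma_1}$ and $\mathbb{P}^{\sigma_2}$ are martingale measures, their drift parts vanish, and by Lemma \ref{lem:semimartingales.remain.semimartingale} under any bi-causal coupling $\pi$ the processes $X$, $Y$, $X-Y$ are $\pi$-martingales. Hence $X-Y$ starts at $0$, has no finite-variation component, and
\[\mathcal{AW}_p(\mathbb{P}^{\sigma_1},\mathbb{P}^{\sigma_2})^p = \inf_{\pi\in\cplba(\mathbb{P}^{\sigma_1},\mathbb{P}^{\sigma_2})} \mathbb{E}_\pi\big[\langle X-Y\rangle_T^{p/2}\big].\]

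For $p=2$ I would then invoke the It\^o isometry to rewrite $\mathbb{E}_\pi[\langle X-Y\rangle_T]=\mathbb{E}_\pi[(X_T-Y_T)^2]$, which makes the objective depend on $\pi$ only through the joint law of $(X_T,Y_T)$. The marginal second moments $\mathbb{E}[X_T^2]=e^{\sigma_1^2T}$ and $\mathbb{E}[Y_T^2]=e^{\sigma_2^2T}$ are fixed by the marginals, so minimising reduces to \emph{maximising} $\mathbb{E}_\pi[X_TY_T]$ over arbitrary couplings of two log-normal laws. Since both terminals are strictly increasing functions of a single standard Gaussian variable, Hardy--Littlewood / Fr\'echet--Hoeffding shows that the maximum is attained by the comonotone coupling. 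This comonotone coupling is realised by the synchronous coupling $\pi^*$ obtained by driving both SDEs with the \emph{same} Brownian motion $B$, which is bi-causal because each $Z^i$ generates the natural filtration of $B$. A direct log-normal moment computation then yields $\mathbb{E}_{\pi^*}[X_TY_T]=e^{\sigma_1\sigma_2 T}$, which delivers both the explicit formula $e^{\sigma_1^2T}-2e^{\sigma_1\sigma_2T}+e^{\sigma_2^2T}$ and its representation as $\mathbb{E}[(e^{\sigma_1 R-\sigma_1^2T/2}-e^{\sigma_2 R-\sigma_2^2T/2})^2]$ with $R\sim N(0,T)$.

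For the upper bound when $p>1$, the plan is simply to evaluate the synchronous coupling $\pi^*$ and combine Doob's $L^p$-maximal inequality with the upper BDG inequality applied to the continuous $\pi^*$-martingale $X-Y$ (valid for $p>1$ once one checks that the terminal difference is in $L^p$, which is immediate from its explicit log-normal form). This produces
\[\mathbb{E}_{\pi^*}\big[\langle X-Y\rangle_T^{p/2}\big]\le c_p\,\mathbb{E}_{\pi^*}\big[|X_T-Y_T|^p\big] =c_p\,\mathbb{E}\big[\big|e^{\sigma_1 R-\sigma_1^2T/2}-e^{\sigma_2R-\sigma_2^2T/2}\big|^p\big].\]
The main conceptual step --- not so much an obstacle as the key insight --- is the It\^o-isometry reduction at $p=2$: it converts the bi-causal optimisation into a one-dimensional monotone-rearrangement problem and gives an exact answer. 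No such reduction is available for $p\neq 2$, which is precisely why the statement asserts only an upper bound in that regime.
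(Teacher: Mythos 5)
Your proposal is correct and follows essentially the same route as the paper: apply BDG (with Itô isometry giving $c_2=1$ in the $p=2$ case) to control the quadratic variation of the martingale difference by its terminal value, reduce to a one-dimensional transport problem between the log-normal terminal marginals, and observe that the synchronous coupling is bi-causal and realises the comonotone optimum. The paper is terser, asserting the resulting chain of (in)equalities directly, while you spell out the Hardy--Littlewood/Fr\'echet--Hoeffding step and the explicit log-normal moment computation; the mathematical content is identical.
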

\begin{proof} We have
	\begin{align*}
	& \mathcal{AW}_p(\mathbb{P}^{\sigma_1},\mathbb{P}^{\sigma_2} )^p \\ = & \inf\Big\{ \mathbb{E}_\pi\big[ \langle Z^1-Z^2\rangle_T^{p/2}\big]  : \pi \in \cplba(\P,\Q)\Big\}\\
	 \leq & c_p \inf\Big\{ \mathbb{E}_\pi[(Z_T^1-Z_T^2)^p]
	  : \pi \in \cplba(\P,\Q)\Big\} \\
	   = & c_p \inf\Big\{ \int \left(e^{\sigma_1 r_1 -\frac{\sigma_1^2T}{2}}-e^{\sigma_2r_2-\frac{\sigma_2^2T}{2}} \right)^p\, d\pi(r_1,r_2)	  : \pi \in \cpl(\gamma_T,\gamma_T)\Big\} \\
	   = & c_p \mathbb E\left[\left(e^{\sigma_1R-\frac{\sigma_1^2T}{2}}-e^{\sigma_2R-\frac{\sigma_2^2T}{2}} \right)^p\right],
	\end{align*}
	where $\gamma_T$  denotes a centered Gaussian with variance $T$.
For $p=2$ and $c_2=1$ we obtain equality.
\end{proof}
}

{\subsection{Choice of the `cost functional'}\label{sec:ChoiceOfCost}
Recall from Definition \ref{AWD} 
that  the adapted Wasserstein distance is given through
	\begin{align*} 
	\mathcal{AW}_p(\mathbb{P},\mathbb{Q}):=\inf\{ \Phi  : \pi \in \cplba(\P,\Q)\Big\},
	\end{align*}
	where the `cost functional'  \begin{align}\label{CostFunctional} \Phi=\mathbb{E}_\pi\big[ \langle M^X-M^Y\rangle_T^{p/2} + |A^X-A^Y|_{\text{1-var}}^p\big]^{1/p}\end{align}
is defined using the semimartingale decompositions	
	$X=M^X+A^X, Y=M^Y+A^Y$.
The distinctive property of this ``quadratic plus first variation'' functional is that it exhibits the proper scaling to interpret the discrete time case as approximation to the continuous time counterpart. 
To wit, consider $\Omega=C([0,1])$, and let $\mathbb{P}^\sigma$ be the law of $X$ where $X_t=\int_0^t \sigma_s\, dB_s$, $B$ Brownian motion and  $\sigma\in C([0,1]), \sigma\geq 0$. 
	For each $N$, denote by $\mathbb{P}_N^\sigma$ the law of a random walk on $\{0,1/N,2/N, \dots,1\}$ with independent increments from $n/N$ to $(n+1)/N$ distributed according to $\mathcal N(0,\sigma_{n/N}^2/N)$.
	Then one can compute that for $0\leq \sigma, \sigma'\in C([0,1])$
	\begin{align*} 
	\mathcal{AW}_2(\mathbb{P}^\sigma_N,{\mathbb{P}}^{\sigma'}_N)
	&= \Big( \sum_{n=0}^{N-1} \frac{1}{N}|\sigma_{n/N}-\sigma_{n/N}'|^2 \Big)^{1/2} \\
	&\to \Big( \int_0^1 |\sigma_t-\sigma_t'|^2\,dt)\Big)^{1/2}
	=\mathcal{AW}_2(\mathbb{P}^\sigma,\mathbb{P}^{\sigma'} ).
	\end{align*}
For comparison, consider the consequences of replacing $\Phi$ in \eqref{CostFunctional} with $\tilde \Phi= \E_\pi[\sum_{i=0}^N (X_i-Y_i)_i^2]^{1/2}$ corresponding to quadratic nested distance (in terms of Pflug and Pichler \cite{PfPi12}).
While $\widetilde{\mathcal{AW}}_2$ and ${\mathcal{AW}}_2$ are equivalent metrics for each \emph{fixed} $N$, $\widetilde{\mathcal{AW}}_2$ does not exhibit the appropriate scaling for large $N$.
 A straightforward computation shows $\widetilde{\mathcal{AW}}_2(\mathbb{P}_N^\sigma,\mathbb{P}_N^{\sigma'})\to\infty$ as $N\to\infty$ whenever $\sigma\neq\sigma'$. In consequence, 
bounds on the hedging error  in terms of $\widetilde{\mathcal{AW}}_2(\mathbb{P}_N^\sigma,\mathbb{P}_N^{\sigma'})$ become progressively weaker as $N\to \infty$. In particular they do not allow for a meaningful continuous time limit.}
	
{When restricting solely to martingale measures $\P, \Q$, a sensible alternative to \eqref{CostFunctional} would be to consider the maximum norm, i.e.\ $\Phi'= \E_\pi[\sup_{t}|X_t-Y_t|^p]^{1/p}$. In fact,  
by  the BDG-inequalities this is essentially equivalent our choice in \eqref{CostFunctional}.
However, when considering semimartingales, this cost is too coarse. 
	For example, let $(\omega_n)$ be a sequence in $\Omega$ which converges to zero in maximum norm but for which the first variation tends to infinity.
	Then $\mathbb{P}_n:=\delta_{\omega_n}$ converges to $\mathbb{P}:=\delta_0$ (when adapted distance is defined only with maximum norm as cost), however, none of our optimization problems converge (take a strategy $H\in\mathcal{H}_k$ for which $(H(X)\bigcdot X)_T \approx k|\omega_n|_{\text{1-var}}$ almost surely).}

\subsection{Stochastic integrals and a contraction principle}

{We present here the two technical results which underlie the proofs of the main theorems in the article. The first one is}
\begin{lemma}
\label{lem:projection.of.strategy}
	Let $\mathbb{P},\mathbb{Q}\in\mathcal{SM}_1(\Omega)$, $H\in\mathcal{H}_k$, and  $\pi$ be a bi-causal coupling between $\mathbb{P}$ and $\mathbb{Q}$.
	Then there exists a process $G\in\mathcal{H}_k$ such that $G_t(Y)=\mathbb{E}_\pi[H_t(X)|Y]$ { for every $t$, $\pi$-almost surely.}
	{Moreover, we have $ (G(Y)\bigcdot Y)_T=\mathbb{E}_\pi[ (H(X)\bigcdot Y)_T|Y]$, $\pi$-almost surely.}
\end{lemma}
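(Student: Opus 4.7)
The plan is to obtain $G$ as the $Y$-conditional expectation of $H$ and then exploit bi-causality to promote mere measurability to predictability, finally deriving the integral identity by approximation. Concretely, disintegrate $\pi(d\omega,d\eta)=\pi^Y_\eta(d\omega)\,\mathbb{Q}(d\eta)$ and set
\[ G_t(\eta):=\int_\Omega H_t(\omega)\,\pi^Y_\eta(d\omega).\]
The bound $|G_t|\le k$ is immediate from $|H_t|\le k$. By bi-causality, for every $t$ and every $A\in\mathcal{F}_t$ the map $\eta\mapsto\pi^Y_\eta(A)$ is $\mathcal{F}^\mathbb{Q}_t$-measurable; a monotone class argument (as in the proof of Lemma \ref{lem:semimartingales.remain.semimartingale}) extends this to all bounded $\mathcal{F}_t$-measurable integrands, and applied with $H_t$ it yields $\mathcal{F}^\mathbb{Q}_t$-measurability of $G_t$.

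In discrete time this already finishes the predictability claim, since $H_t$ is actually $\mathcal{F}_{t-1}$-measurable and the above argument then gives $\mathcal{F}^\mathbb{Q}_{t-1}$-measurability of $G_t$. In continuous time we approximate $H$ by simple predictable processes $H^n_t=\sum_i H^n_{t^n_i}\mathbf{1}_{(t^n_i,t^n_{i+1}]}(t)$ with $|H^n|\le k$, define $G^n$ by the same pointwise conditional expectation, and observe that each $G^n$ is left-continuous with $G^n_{t^n_i}$ being $\mathcal{F}^\mathbb{Q}_{t^n_i}$-measurable, hence predictable. Choosing the partitions so that $H^n\to H$ in the $dt\otimes d\mathbb{P}$ sense (up to a subsequence, $dt$-a.e.\ $\mathbb{P}$-a.s.), one takes a predictable version of $\limsup_n G^n$; this yields a predictable $G\in\mathcal{H}_k$ satisfying $G_t(Y)=\mathbb{E}_\pi[H_t(X)\,|\,Y]$ for $dt\otimes d\pi$-a.e.~$(t,\omega,\eta)$.

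For the integral identity, first verify it for simple $H^n$: since $Y_{t^n_{i+1}}-Y_{t^n_i}$ is $\sigma(Y)$-measurable,
\[\mathbb{E}_\pi\bigl[(H^n(X)\bigcdot Y)_T\,\big|\,Y\bigr]=\sum_i \mathbb{E}_\pi[H^n_{t^n_i}(X)\,|\,Y]\,(Y_{t^n_{i+1}}-Y_{t^n_i})=(G^n(Y)\bigcdot Y)_T.\]
Now pass to the limit. Writing $Y=M^Y+A^Y$ under $\pi$ (Lemma \ref{lem:semimartingales.remain.semimartingale}), the BDG inequality together with $\mathbb{Q}\in\mathcal{SM}_1(\Omega)$ and the uniform bound $|H^n|,|G^n|\le k$ gives, via dominated convergence, $(H^n(X)\bigcdot Y)_T\to (H(X)\bigcdot Y)_T$ and $(G^n(Y)\bigcdot Y)_T\to (G(Y)\bigcdot Y)_T$ in $L^1(\pi)$. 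Since $\sigma(Y)$-conditional expectation is an $L^1$-contraction, the identity for $H^n$ passes to the limit, yielding $(G(Y)\bigcdot Y)_T=\mathbb{E}_\pi[(H(X)\bigcdot Y)_T\,|\,Y]$, $\pi$-a.s.

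The main obstacle is the continuous-time predictability: the raw definition via disintegration only produces, for each fixed $t$, an $\mathcal{F}^\mathbb{Q}_t$-measurable random variable, and one must verify that these can be assembled into a jointly measurable, predictable process. The simple-integrand approximation circumvents this, but it has to be synchronised with the $L^1$-convergence of the stochastic integrals above, so that the same $G$ works in both parts of the conclusion. The discrete case needs none of this care and can be handled directly in one line.
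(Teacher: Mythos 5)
Your overall structure matches the paper's: define $G$ by $Y$-conditional expectation, use bi-causality for the measurability, establish the integral identity first for simple integrands and then pass to the limit. The discrete-time part is correct and agrees with the paper. In continuous time, however, your argument diverges from the paper's (which uses the abstract predictable projection and a separate $L^2$-isometry lemma), and this is where a genuine gap appears.

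The problem is in the passage to the limit for the integral identity. You approximate $H$ by simple predictable processes $H^n\to H$ in the $dt\otimes d\mathbb{P}$ sense, and then invoke BDG plus dominated convergence to conclude $(H^n(X)\bigcdot Y)_T\to (H(X)\bigcdot Y)_T$ in $L^1(\pi)$. But the BDG bound for the martingale part is
$\mathbb{E}_\pi\bigl[\,\bigl|\bigl((H^n-H)(X)\bigcdot M^Y\bigr)_T\bigr|\,\bigr]\le b_1\,\mathbb{E}_\pi\Bigl[\Bigl(\int_0^T |H^n_t-H_t|^2(X)\,d\langle M^Y\rangle_t\Bigr)^{1/2}\Bigr]$,
and $d\langle M^Y\rangle$ need not be absolutely continuous with respect to Lebesgue measure $dt$ (think of a time-changed Brownian motion with a singular time change). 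So $H^n\to H$ for $dt$-a.e.\ $t$, $\mathbb{P}$-a.s., gives you no control over the right-hand side; the "dominated convergence" step does not go through. To make your approximation work you would need $H^n\to H$ in $L^2(d\pi\otimes d\langle Y\rangle)$ — but this is exactly the norm the paper uses in its technical Lemma (where it then needs a projection inequality, the analogue of \eqref{eq aux Jensen proj}, to transfer the $L^2(d\langle Y\rangle)$-convergence of $H^n$ to that of $G^n$). Your plan needs either that same machinery, or an absolute-continuity assumption you have not made.

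A smaller point: you only obtain $G_t(Y)=\mathbb{E}_\pi[H_t(X)|Y]$ for $dt\otimes d\pi$-a.e.\ $(t,\omega,\eta)$, whereas the lemma asserts this for \emph{every} $t$, $\pi$-a.s. The paper gets the stronger pointwise-in-$t$ statement for free from the definition of the predictable projection; your $\limsup$-construction does not. This weakening is harmless for the downstream applications, but it is a discrepancy with the stated claim and you should flag it rather than silently proving less.
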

\begin{proof}
	In discrete time, write $H=\sum_{t=1}^{N} H_t 1_{\{t\}}$ for Borel functions $H_t\colon \mathbb{R}^{t-1}\to[-k,k]$.
	Let $\pi=\pi_\eta(d\omega)\,\mathbb{P}(d\omega)$ be a disintegration and define $$G'_t(\eta):=\int H_t(\omega)\pi_\eta(d\omega),$$ for every $t$ and $\eta\in\Omega$.
	By definition of bi-causal coupling  $G'_t$ is $\mathcal{F}^\Q_{t-1}$-measurable.
	It remains to pick functions $G_t$ which are $\mathcal{F}_{t-1}$ measurable such $G_t=G'_t$ $\mathbb{Q}$-almost surely.
	Since $\mathbb{E}_\pi[H_t(X)|Y]=G_t(Y)$ $\pi$-almost surely, it is clear that $ (G(Y)\bigcdot Y)_T=\mathbb{E}_\pi[ (H(X)\bigcdot Y)_T|Y]$ $\pi$-almost surely.

	In continuous time we
	{take $G$ to be the predictable projection of $H$, under the reference measure $\pi$, with respect to the $\pi$-completion of the filtration $\{\emptyset,\Omega\}\otimes \mathbb F^Y$. By \cite[Lemma C.1]{AcBaZa16} the result is $\pi$-indistinguishable from a predictable process under the $\mathbb{Q}$-completion of the filtration  $\mathbb F^Y$. The $t$-by-$t$, $\pi$-almost sure equality $G_t(Y)=\mathbb{E}_\pi[H_t(X)|Y]$, is then a consequence of the definition of predictable projection.} The $\pi$-almost sure equality $ (G(Y)\bigcdot Y)_T=\mathbb{E}_\pi[ (H\bigcdot Y)_T|Y]$ is established in Lemma \ref{lem technical proj} below, assuming that $\mathbb E_{\mathbb Q}[\langle Y \rangle_T]<\infty$. The general case follows by localization.
\end{proof}

\begin{lemma}
\label{lem technical proj} In the continuous-time context of Lemma \ref{lem:projection.of.strategy}, assume further that $\mathbb E_{\mathbb Q}[\langle Y \rangle_T]<\infty$. Then we have $ (G(Y)\bigcdot Y)_T=\mathbb{E}_\pi[ (H(X)\bigcdot Y)_T|Y]$, $\pi$-almost surely.
\end{lemma}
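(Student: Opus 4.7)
The plan is to first verify the identity for simple (elementary) integrands, and then extend to all bounded predictable $H$ by an approximation argument using It\^o-type isometries, exploiting the hypothesis $\mathbb{E}_{\mathbb{Q}}[\langle Y\rangle_T]<\infty$ and the fact that under bi-causal $\pi$ both $Y$ and $Y-A^Y$ remain semimartingales with the same decomposition (Lemma~\ref{lem:semimartingales.remain.semimartingale}).

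First I would treat the elementary case $H_u(\omega)=h(\omega)\,1_{(s,t]}(u)$, where $h\colon\Omega\to[-k,k]$ is $\mathcal{F}_s$-measurable. Then $(H(X)\bigcdot Y)_T=h(X)(Y_t-Y_s)$, so
\[
\mathbb{E}_\pi\bigl[(H(X)\bigcdot Y)_T \,\big|\, Y\bigr]
=(Y_t-Y_s)\,\mathbb{E}_\pi\bigl[h(X)\,\big|\,\mathcal{F}_T^Y\bigr].
\]
Using the $\mathbb{Q}$-disintegration $\pi(d\omega,d\eta)=\mathbb{Q}(d\eta)\tilde\pi_\eta(d\omega)$ and the bi-causality of $\pi$, the inner conditional expectation is $\mathcal{F}_s^{\mathbb{Q}}$-measurable and therefore agrees $\pi$-a.s.\ with $G_s(Y)$, the predictable projection of $H$ evaluated at time $s$. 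Hence the identity $(G(Y)\bigcdot Y)_T=\mathbb{E}_\pi[(H(X)\bigcdot Y)_T\,|\,Y]$ holds for this elementary $H$, and by linearity for all simple predictable processes bounded by $k$.

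Next I would split $Y=M^Y+A^Y$ and extend to a general $H\in\mathcal{H}_k$ by approximating it with a sequence $H^n$ of simple predictable processes uniformly bounded by $k$ with $H^n\to H$ in $L^2(\pi\otimes d\langle M^Y\rangle)$ and in $L^1(\pi\otimes d|A^Y|)$; such an approximation exists by the usual density results applied under $\pi$, since $H$ is bounded and $\mathbb{E}_\pi[\langle M^Y\rangle_T+|A^Y|_{\text{1-var}}]=\mathbb{E}_{\mathbb{Q}}[\langle Y\rangle_T+|A^Y|_{\text{1-var}}]<\infty$. The It\^o isometry applied to the martingale part and dominated convergence applied to the finite-variation part then give
\[
(H^n(X)\bigcdot Y)_T \longrightarrow (H(X)\bigcdot Y)_T \quad\text{in } L^1(\pi),
\]
so conditional expectations given $Y$ converge in $L^1(\pi)$ as well. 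On the other side, since the predictable projection is an $L^2$-contraction on $L^2(\pi\otimes d\langle M^Y\rangle)$ and an $L^1$-contraction on $L^1(\pi\otimes d|A^Y|)$ (being a conditional expectation at each time), the projections $G^n$ of $H^n$ satisfy the same convergence, and hence $(G^n(Y)\bigcdot Y)_T\to (G(Y)\bigcdot Y)_T$ in $L^1(\pi)$. Passing to the limit in the identity established for the simple $H^n$ yields the claim.

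The main obstacle is the approximation step: one must choose the simple processes $H^n$ so that \emph{both} $H^n\to H$ and the predictable projections $G^n\to G$ converge in compatible senses. This is where boundedness by $k$ is crucial (it prevents a loss of integrability under projection), and where the assumption $\mathbb{E}_{\mathbb{Q}}[\langle Y\rangle_T]<\infty$ is used to legitimize the It\^o isometry and the density of simple predictable processes in $L^2(\pi\otimes d\langle M^Y\rangle)$. The general case (without the integrability assumption on $\langle Y\rangle_T$) then follows by a standard localization with stopping times $\tau_n\uparrow T$ such that $\langle Y\rangle_{\tau_n}$ is bounded, as indicated in the statement of Lemma~\ref{lem:projection.of.strategy}.
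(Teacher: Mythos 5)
Your overall strategy matches the paper's: reduce to the martingale part of $Y$, verify the identity for simple/elementary integrands using bi-causality, then extend by approximating $H$ in $L^2(\pi\otimes d\langle Y\rangle)$ by simple processes, passing the approximation through the predictable projection, and applying It\^o isometry on both sides. The crucial point in this scheme — and the one the paper singles out and proves carefully — is the contraction inequality
\[
\mathbb{E}_\pi\Big[\int_0^T |g_t|^2\,d\langle Y\rangle_t\Big]
\le \mathbb{E}_\pi\Big[\int_0^T |h_t|^2\,d\langle Y\rangle_t\Big],
\]
where $g$ is the $Y$-predictable projection of $h$.

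Here your justification ``(being a conditional expectation at each time)'' does not hold up as written, and this is a genuine gap. The measure $d\langle Y\rangle_t$ is a \emph{random} predictable measure, so you cannot Fubini a pointwise Jensen inequality $\mathbb{E}_\pi[g_t^2]\le\mathbb{E}_\pi[h_t^2]$ through it: the random weight does not commute with the conditional expectation defining the projection. The inequality is true, but its proof requires using the \emph{defining} property of the predictable projection together with a duality/Cauchy--Schwarz argument. Concretely, for every $Y$-predictable $f$ one has $\mathbb{E}_\pi[\int f_t h_t\,d\langle Y\rangle_t]=\mathbb{E}_\pi[\int f_t g_t\,d\langle Y\rangle_t]$ (because both $f$ and $d\langle Y\rangle$ are $Y$-predictable), and then taking $f=g$ and applying Cauchy--Schwarz gives the contraction; equivalently, the paper expresses $\mathbb{E}_\pi[\int g_t^2 d\langle Y\rangle_t]^{1/2}$ as a supremum of $\mathbb{E}_\pi[\int f_t h_t d\langle Y\rangle_t]$ over $Y$-predictable $\|f\|\le1$ and compares it to the same supremum over $(X,Y)$-predictable $f$. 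That is exactly the content of inequality \eqref{eq aux Jensen proj} in the paper, and it should be stated and proved rather than asserted parenthetically. The same caveat applies to your $L^1$-contraction claim for the finite-variation part (though the paper sidesteps it entirely by observing that for Riemann--Stieltjes integrals the identity follows directly from the definition of predictable projection). Once this contraction step is supplied, the rest of your argument is correct and coincides with the paper's.
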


\begin{proof}
{The statement is true if instead of the stochastic integrals we considered the integrals w.r.t.\ the finite variation part of $Y$ (either by properties of Riemann-Stieltjes integrals, or directly from the definition of predictable projection). For this reason we may now assume that $Y$ is itself a martingale. }

We first take for granted the following result: if $h$ is bounded and predictable in the filtration of $(X,Y)$, and if $g$ denotes its predictable projection in the filtration of $Y$ under the measure $\pi$, then
\begin{align}\label{eq aux Jensen proj}
\mathbb E_\pi\Big[\int_0^T |g_t|^2d\langle Y\rangle_t \Big]\leq \mathbb E_\pi\Big[\int_0^T |h_t|^2d\langle Y\rangle_t \Big]. 
\end{align}
We know that there exist a sequence $(H^n)$ of predictable simple processes s.t. $$\lim_{n\to\infty}\mathbb E_\pi\Big[\int_0^T |H_t-H^n_t|^2d\langle Y\rangle_t \Big]=0.$$ 
By It\^o isometry the stochastic integrals $(H^n\bigcdot Y)_T$ converge in $L^2(\pi)$ to $(H\bigcdot Y)_T$. Denoting by $G^n$ the predictable projection of $H^n$ with respect to the $Y$-filtration, we deduce from \eqref{eq aux Jensen proj} that $$\lim_{n\to\infty}\mathbb E_\pi\Big[\int_0^T |G_t-G^n_t|^2d\langle Y\rangle_t \Big]=0,$$
so again by It\^o isometry $(G^n\bigcdot Y)_T$ converges in $L^2(\pi)$ to $(G\bigcdot Y)$. 
The $\pi$-almost sure equality $(G^n\bigcdot Y)_T=\mathbb{E}_\pi[(H^n\bigcdot Y)_T|Y]$ follows easily by the bi-causality of the coupling $\pi$, and by taking $L^2$ limits the desired conclusion is obtained.

To finish the proof we must establish \eqref{eq aux Jensen proj}. First we observe that
\begin{align*}\mathbb E_\pi\Big[\int_0^T |g_t|^2d\langle Y\rangle_t \Big]^{1/2}&=\sup_{\substack{f \text{ is $Y$-predictable}\\ \|f\|\leq 1} }\mathbb E_\pi\Big[\int_0^T f_tg_td\langle Y\rangle_t \Big]\\ &=\sup_{\substack{f \text{ is $Y$-predictable}\\ \|f\|\leq 1} }\mathbb E_\pi\Big[\int_0^T f_th_td\langle Y\rangle_t \Big],\end{align*}
as follows from predictable projection and upon taking $\|f\|^2:=\mathbb E_\pi[\int_0^1 |f_t|^2d\langle Y\rangle_t ]$. The result is a consequence of the equality
$$\mathbb E_\pi\Big[\int_0^T |h_t|^2d\langle Y\rangle_t \Big]^{1/2}=\sup_{\substack{f \text{ is $(X,Y)$-predictable}\\ \|f\|\leq 1} }\mathbb E_\pi\Big[\int_0^T f_th_td\langle Y\rangle_t \Big].$$
\end{proof}

Our next crucial technical result is given in Theorem \ref{thm:contraction} below. But first we need some preparation.

\begin{lemma}
\label{lem:estimating.integrals}
	Let $\mathbb{P},\mathbb{Q}\in\mathcal{SM}_p(\Omega)$, let $\pi$ be a bi-causal coupling between $\mathbb{P}$ and $\mathbb{Q}$, let $H\in\mathcal{H}_k$, and write $X-Y=M+A$ for the semimartingale decomposition under $\pi$.
	Then, for every $p\geq 1$, we have
	\begin{align*}
		\mathbb{E}_\pi[\|X-Y\|_\infty^p]
		&\leq  2^{p-1}b_p \cdot \mathbb{E}_\pi[\langle M \rangle_T^{p/2}+|A|_{\text{1-var}}^p],\\
		\mathbb{E}_\pi[|(H(X)\bigcdot X)_T - (H(X)\bigcdot Y)_T|^p]
		&\leq 2^{p-1}b_pk^p \cdot \mathbb{E}_\pi[\langle M\rangle_T^{p/2}+|A|_{\text{1-var}}^p],
	\end{align*}
	where $b_p$ is the upper constant in the BDG-inequality.
	If further $H_t\colon\Omega\to\mathbb{R}$ is $\tilde{L}$-Lipschitz continuous for every $t$, then we have
	\begin{align*}
	\mathbb{E}_\pi[|(H(X)\bigcdot X)_T - (H(Y)\bigcdot Y)_T|^p]
		&\leq 2^{2p-2}b_pk^p \cdot \mathbb{E}_\pi[\langle M\rangle_T^{p/2}+|A|_{\text{1-var}}^p]  \\
		&\quad+ \alpha \cdot \mathbb{E}_\pi[\langle M \rangle_T^{p}+|A|_{\text{1-var}}^{2p}]^{1/2}
	\end{align*}
	where $\alpha=2^{3p-2}\tilde{L}^pb_p b_{2p}^{1/2}\min\{\mathcal{AW}_{2p}(\mathbb{P},\delta_0)^p,\mathcal{AW}_{2p}(\mathbb{Q},\delta_0)^p\}$. 
\end{lemma}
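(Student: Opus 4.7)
The strategy is a direct chain of BDG, linearity of stochastic integrals, and Cauchy--Schwarz. Throughout, Lemma \ref{lem:semimartingales.remain.semimartingale} lets us treat $M=M^X-M^Y$ as a genuine $\pi$-martingale with quadratic variation $\langle M\rangle$, and $A=A^X-A^Y$ as a predictable process of integrable total variation, so that the BDG inequality applies on the product space $\Omega\times\Omega$ under $\pi$.

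For the first inequality, I would use $\|X-Y\|_\infty\leq\|M\|_\infty+|A|_{\text{1-var}}$, the convex bound $(a+b)^p\leq 2^{p-1}(a^p+b^p)$, and BDG in the form $\mathbb{E}_\pi[\|M\|_\infty^p]\leq b_p\,\mathbb{E}_\pi[\langle M\rangle_T^{p/2}]$; using $b_p\geq 1$ produces the stated bound.

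For the second, linearity of the stochastic integral yields $(H(X)\bigcdot X)_T-(H(X)\bigcdot Y)_T=(H(X)\bigcdot M)_T+(H(X)\bigcdot A)_T$. The martingale piece has quadratic variation $\int H(X)^2\,d\langle M\rangle\leq k^2\langle M\rangle_T$, so BDG controls its $p$-th moment by $b_p k^p\,\mathbb{E}_\pi[\langle M\rangle_T^{p/2}]$; the finite-variation piece is pathwise at most $k|A|_{\text{1-var}}$. Combining via $(a+b)^p\leq 2^{p-1}(a^p+b^p)$ and $b_p\geq 1$ gives the claim.

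For the third inequality, I would decompose
\[(H(X)\bigcdot X)_T-(H(Y)\bigcdot Y)_T = [(H(X)\bigcdot X)_T-(H(X)\bigcdot Y)_T] + ((H(X)-H(Y))\bigcdot Y)_T.\]
The first bracket contributes the $2^{2p-2}b_p k^p$ term by the previous inequality together with the outer $(a+b)^p\leq 2^{p-1}(a^p+b^p)$. For the second bracket, split $Y=M^Y+A^Y$ and use the Lipschitz estimate $|H_t(X)-H_t(Y)|\leq\tilde L\|X-Y\|_\infty$: BDG on the martingale piece yields $b_p\tilde L^p\,\mathbb{E}_\pi[\|X-Y\|_\infty^p\langle M^Y\rangle_T^{p/2}]$ and a pathwise estimate on the finite-variation piece yields $\tilde L^p\,\mathbb{E}_\pi[\|X-Y\|_\infty^p|A^Y|_{\text{1-var}}^p]$. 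A Cauchy--Schwarz step detaches the factor $\mathbb{E}_\pi[\|X-Y\|_\infty^{2p}]^{1/2}$, which the first inequality at exponent $2p$ controls by $2^{p-1/2}b_{2p}^{1/2}\,\mathbb{E}_\pi[\langle M\rangle_T^p+|A|_{\text{1-var}}^{2p}]^{1/2}$; the remaining $Y$-factors merge via $a^{1/2}+b^{1/2}\leq\sqrt 2(a+b)^{1/2}$ into $\sqrt 2\,\mathcal{AW}_{2p}(\mathbb{Q},\delta_0)^p$, using $\mathcal{AW}_{2p}(\mathbb{Q},\delta_0)^{2p}=\mathbb{E}_\mathbb{Q}[\langle M^Y\rangle_T^p+|A^Y|_{\text{1-var}}^{2p}]$ from Lemma \ref{lem:metric}. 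The symmetric split $((H(X)-H(Y))\bigcdot X)_T+(H(Y)\bigcdot(X-Y))_T$ produces the identical bound with $\mathcal{AW}_{2p}(\mathbb{P},\delta_0)$ in place of $\mathcal{AW}_{2p}(\mathbb{Q},\delta_0)$, which accounts for the $\min$ in $\alpha$. The only real difficulty is bookkeeping: aligning the constants $2^{p-1}$, $b_p$, $b_{2p}^{1/2}$ and the two Cauchy--Schwarz applications so that they combine to exactly $\alpha=2^{3p-2}\tilde L^p b_p b_{2p}^{1/2}\min\{\mathcal{AW}_{2p}(\mathbb{P},\delta_0)^p,\mathcal{AW}_{2p}(\mathbb{Q},\delta_0)^p\}$.
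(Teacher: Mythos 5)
Your proof is correct and follows essentially the same strategy as the paper's: split $\|X-Y\|$ and the integral differences into martingale and finite-variation parts, apply BDG under $\pi$ (justified by Lemma~\ref{lem:semimartingales.remain.semimartingale}), and for the third estimate decompose $(H(X)\bigcdot X)_T-(H(Y)\bigcdot Y)_T$ into two pieces, handling the Lipschitz difference via Cauchy--Schwarz against the first bound at exponent $2p$. The paper writes the decomposition as $((H(X)-H(Y))\bigcdot X)_T + [(H(Y)\bigcdot X)_T-(H(Y)\bigcdot Y)_T]$ and obtains $\mathcal{AW}_{2p}(\mathbb{P},\delta_0)$ first, then swaps $X$ and $Y$; you use the mirrored split and get $\mathcal{AW}_{2p}(\mathbb{Q},\delta_0)$ first, but the two are the same argument and yield the identical constant $\alpha$.
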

\begin{proof}
	The elementary inequality $(x+y)^p\leq 2^{p-1} x^p+ 2^{p-1} y^p$ for $x,y\geq 0$ together with BDG inequality and the fact that $\|\cdot\|_\infty\leq |\cdot|_{\text{1-var}}$ imply
	\begin{align*}
	\mathbb{E}_\pi[\|X-Y\|_\infty^p]
	&\leq 2^{p-1} \mathbb{E}_\pi[\|M\|_\infty^p] + 2^{p-1}\mathbb{E}_\pi[|A|_{\text{1-var}}^p]\\
	&\leq 2^{p-1} b_p  \mathbb{E}_\pi[\langle M \rangle_T^{p/2}+|A|_{\text{1-var}}^p].
	\end{align*}
	This proves the first part. 
	The same arguments imply 	
	\begin{align*}
	&\mathbb{E}_\pi[|(H(X)\bigcdot X)_T - (H(X)\bigcdot Y)_T|^p]\\
	&\leq 2^{p-1} \mathbb{E}_\pi[| (H(X)\bigcdot M)_T|^p] + 2^{p-1} \mathbb{E}_\pi[| (H(X)\bigcdot A)_T|^p]\\
	&\leq 2^{p-1} k^p b_p \mathbb{E}_\pi[\langle M\rangle_T^{p/2}+|A|_{\text{1-var}}^p]
	\end{align*}
	from which the second part follows.
	To prove the third claim, write
	\begin{align*}
	&\mathbb{E}_\pi[| (H(X)\bigcdot X)_T - (H(Y)\bigcdot Y)_T|^p]\\
	&\leq 2^{p-1} \mathbb{E}_\pi[|((H(X)-H(Y))\bigcdot X)_T|^p] 
	+2^{p-1} \mathbb{E}_\pi[| (H(Y)\bigcdot X)_T - (H(Y)\bigcdot Y)_T|^p].
	\end{align*}
	The second term is smaller than $2^{p-1}2^{p-1}k^p b_p  \mathbb{E}_\pi[\langle M\rangle_T^{p/2}+|A|_{\text{1-var}}^p]$ by the second part.
	It remains to estimate $\mathbb{E}_\pi[|((H(Y)-H(Y))\bigcdot X)_T|^p]$.
	Write $X=N+B$ for the semimartingale decomposition of $X$ under $\mathbb{P}$. 
	By Lemma \ref{lem:semimartingales.remain.semimartingale}, the semimartingale decomposition under $\pi$ is still $X=N+B$.
	Moreover, the BDG-inequality, the Lipschitz-continuity of $H$, and H\"older's inequality, imply that
	\begin{align*}
	&\mathbb{E}_\pi[|((H(X)-H(Y))\bigcdot X)_T|^p]\\
	&\leq 2^{p-1} \mathbb{E}_\pi[|((H(X)-H(Y))\bigcdot N)_T|^p + |((H(X)-H(Y))\bigcdot B)_T|^p]\\
	&\leq 2^{p-1} \mathbb{E}_\pi[\|H(X)-H(Y)\|_\infty^p ( b_p \langle N \rangle_T^{p/2} + |B|_{\text{1-var}}^p)]\\
	&\leq 2^{p-1} b_p \tilde{L}^p \mathbb{E}_\pi[\|X-Y\|_\infty^{2p}]^{1/2} \mathbb{E}_\pi[(\langle N\rangle_T^p + |B|_{\text{1-var}}^P)^{2}]^{1/2}.
	\end{align*}
	It now follows from the first part that
	\[\mathbb{E}_\pi[\|X-Y\|_\infty^{2p}]^{1/2} \leq (2^{2p-1} b_{2p})^{1/2} \mathbb{E}_\pi[\langle M \rangle_T^{p}+|A|_{\text{1-var}}^{2p}]^{1/2}\]
	and by Lemma \ref{lem:semimartingales.remain.semimartingale} we have 
	\[\mathbb{E}_\pi[(\langle N\rangle_T^{p/2} + |B|_{\text{1-var}}^p)^{2}]^{1/2}\leq 2^{1/2}\mathcal{AW}_{2p}(\mathbb{P},\delta_0)^p.\]
	Putting all estimates together and replacing $X$ and $Y$ yields the claim.
\end{proof}

Denote by $\mathcal{P}_p(\mathbb{R})$ the set of all Borel probability measures $\mu$ on $\mathbb{R}$ such that $\int |x|^p\,\mu(dx)<\infty$.
Moreover, let $d_p(\mu,\nu)$ be the usual $p$-Wasserstein distance, and let $d_p^{w}$ the weak $p$-Wasserstein cost, that is,
\begin{align*} 
d_p(\mu,\nu)&:=\inf\Big\{ \Big(\int |x-y|^p\,\gamma(dx,dy)\Big)^{1/p} : \gamma\text{ is a coupling of }\mu\text{ and }\nu  \Big\},\\
d_p^{w}(\mu,\nu)&:=\inf\Big\{ \Big(\int \Big|x-\int y\,\gamma^x(dy)\Big|^p\,\mu(dx)\Big)^{1/p} : \gamma\text{ is a coupling of }\mu\text{ and }\nu  \Big\}.
\end{align*}
Here $\gamma=\mu(dx)\gamma^x(dy)$ denotes the disintegration.
Note that $d_p^{w}$ is not symmetric and as a consequence of Jensen's inequality, we always have $d_p^{w}\leq d_p$. Problems akin to $d_p^{w}(\mu,\nu)$ go under the name of `weak optimal transport' and have been recently introduced by Gozlan et al.\ in \cite{GoRoSaTe17}, but see also \cite{AlCoJo17,AlBoCh18,BaBePa18,BaBeHuKa17,GoRoSaSh18}.
We have
\begin{theorem}[Contraction]
\label{thm:contraction}
	Let $\mathbb{P},\mathbb{Q}\in\mathcal{SM}_p(\Omega)$, let $\pi$ a bi-causal coupling between $\mathbb{P}$ and $\mathbb{Q}$, let $C\colon\Omega\to\mathbb{R}$ be Lipschitz with constant $L$, and let $H\in\mathcal{H}_k$.
	Further denote by $X-Y=M+A$ the semimartingale decomposition under $\pi$ and let $G\in\mathcal{H}_k$ such that $(G(Y)\bigcdot Y)_T=\mathbb{E}_\pi[(H(X)\bigcdot Y)_T|Y]$ $\pi$-almost surely. 
	Then 
	\begin{align}\label{eq weak Wass}\begin{split}
	&d_p^{w}\Big( (C(Y)+(G(Y)\bigcdot Y)_T )(\mathbb{Q}) \, ,\, ( C(X)+(H(X)\bigcdot X)_T ) (\mathbb{P})  \Big)\\
	\leq & 2^{(p-1)/p}b_p^{1/p}(k+L) \cdot  \mathbb{E}_\pi[\langle M\rangle_T^{p/2}+|A|_{\text{1-var}}^p]^{1/p}.
	\end{split}
	\end{align}
	
	Now assume in addition that $H_t\colon\Omega\to\mathbb{R}$ is $\tilde{L}$-Lipschitz continuous for every $t$, then 
	\begin{align*} 
	&d_p\Big(  ( C(Y)+(H(Y)\bigcdot Y)_T ) ( \mathbb{Q} ) \,, \,( C(X)+(H(X)\bigcdot X)_T ) ( \mathbb{P} )\Big)\\
	\leq & 2^{(3p-3)/p}b_p^{1/p}(k+L) \mathbb{E}_\pi[\langle M\rangle_T^{p/2}+|A|_{\text{1-var}}^p]^{1/p}
	+\alpha^{1/p} \mathbb{E}_\pi[\langle M\rangle_T^{p}+|A|_{\text{1-var}}^{2p}]^{1/2p},
	\end{align*}
	where $\alpha$ is the constant of Lemma \ref{lem:estimating.integrals}.
\end{theorem}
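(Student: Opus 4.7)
The plan is to use the joint distribution under the bi-causal coupling $\pi$ of the relevant terminal random variables as an admissible coupling in each of the two $p$-Wasserstein problems, and then to exploit the defining identity $(G(Y)\bigcdot Y)_T=\mathbb{E}_\pi[(H(X)\bigcdot Y)_T\,|\,Y]$ together with the estimates of Lemma \ref{lem:estimating.integrals}. Throughout, I write $\|\cdot\|_p$ for the $L^p(\pi)$-norm and track constants via Minkowski's inequality rather than the cruder $(a+b)^p\le 2^{p-1}(a^p+b^p)$.

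For the weak-Wasserstein inequality, set $V:=C(Y)+(G(Y)\bigcdot Y)_T$ and $W:=C(X)+(H(X)\bigcdot X)_T$. Since $G\in\mathcal{H}_k$ lives on the $Y$-factor, $V$ is $\sigma(Y)$-measurable; the joint law of $(V,W)$ under $\pi$ is thus an admissible coupling of the two pushforward measures on the left-hand side of \eqref{eq weak Wass}, and the definition of $d_p^{w}$ gives
\[ d_p^{w}(V(\pi),W(\pi))\;\le\; \|V-\mathbb{E}_\pi[W\,|\,V]\|_p. \]
Because $\sigma(V)\subset\sigma(Y)$, the tower property combined with conditional Jensen relaxes the right-hand side to $\|V-\mathbb{E}_\pi[W\,|\,Y]\|_p$. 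The defining equation for $G$ is now invoked to collapse $\mathbb{E}_\pi[(H(X)\bigcdot X)_T\,|\,Y]-(G(Y)\bigcdot Y)_T$ into $\mathbb{E}_\pi[(H(X)\bigcdot(X-Y))_T\,|\,Y]$, which yields
\[ V-\mathbb{E}_\pi[W\,|\,Y]\;=\;-\mathbb{E}_\pi\!\left[(C(X)-C(Y))+(H(X)\bigcdot(X-Y))_T \,\big|\,Y\right]. \]
A further conditional-Jensen step removes the outer conditional expectation, after which Minkowski, the Lipschitz bound $|C(X)-C(Y)|\le L\|X-Y\|_\infty$, and the first two inequalities of Lemma \ref{lem:estimating.integrals} combine to produce the announced constant $2^{(p-1)/p}b_p^{1/p}(k+L)$.

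For the strong-Wasserstein inequality, the joint law under $\pi$ of $(C(Y)+(H(Y)\bigcdot Y)_T,\,C(X)+(H(X)\bigcdot X)_T)$ is itself an admissible coupling for $d_p$, so that $d_p\le \|[C(X)-C(Y)]+[(H(X)\bigcdot X)_T-(H(Y)\bigcdot Y)_T]\|_p$. A Minkowski split, followed by the first inequality of Lemma \ref{lem:estimating.integrals} applied to the Lipschitz-payoff part and the third inequality (where the Lipschitz hypothesis on each $H_t$ is essential) applied to the trading part, produces the announced $\mathcal{AW}_p$-contribution with prefactor $2^{(3p-3)/p}b_p^{1/p}(k+L)$ together with the extra $\alpha^{1/p}$-term against the $\mathcal{AW}_{2p}$-like quantity.

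The delicate part is the weak bound: the two successive applications of conditional Jensen appear to discard information, but they are exactly what enables the bi-causal projection identity defining $G$ to be invoked, collapsing the $X$-driven stochastic integral onto the $Y$-driven one. Once that structural step is in place, everything reduces to BDG and Minkowski bookkeeping already packaged in Lemma \ref{lem:estimating.integrals}.
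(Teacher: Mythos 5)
Your argument is correct and follows essentially the same route as the paper's own proof: in both cases one pushes $\pi$ forward to the pair of terminal random variables to obtain an admissible coupling, invokes the tower property and conditional Jensen to reduce the conditioning in the weak case to conditioning on $Y$, collapses the difference of stochastic integrals via the defining identity for $G$, and finishes with Minkowski together with the three estimates of Lemma~\ref{lem:estimating.integrals}. The only cosmetic difference is that you track the Minkowski constants a touch more tightly before relaxing to the stated prefactors, whereas the paper passes directly to the constants $2^{(p-1)/p}$ and $2^{(3p-3)/p}$.
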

\begin{proof}
	We start by proving the first claim.
	Let $\pi$ be as stated, and define $a(X):=C(X)+(H(X)\bigcdot X)_T$ as well as $b(Y):=C(Y)+ (G(Y)\bigcdot Y)_T$.
	Now let $\gamma:=(b(Y),a(X))(\pi)$ so that $\gamma$ is trivially a coupling between $b(Y)(\mathbb{Q})$ and $a(X)(\mathbb{P})$. 
	Therefore
	\begin{align*}
	d_p^{w}\Big( b(Y)(\mathbb{Q}) \, ,\, a(X) ( \mathbb{P} ) \Big)
	&\leq  \mathbb{E}_\pi[ \,|\,b(Y) - \mathbb{E}_\pi[ a(X)|b(Y)]\,|^p\,]^{1/p}.
	\end{align*}
	By assumption it holds that
	\[ \mathbb{E}_\pi[(G(Y)\bigcdot Y)_T- (H(X)\bigcdot X)_T |Y]
	= \mathbb{E}_\pi[ (H(X)\bigcdot Y)_T -(H(X)\bigcdot X)_T |Y]. \]
	Thus, using the tower property and Jensen's inequality, it follows that
	\begin{align*}
	&\mathbb{E}_\pi[ |b(Y) - \mathbb{E}_\pi[ a(X)|b(Y)]|^p]^{1/p}\\
		&\leq \mathbb{E}_\pi\big[\, \big| \,\mathbb{E}_\pi[C(Y)-C(X)|Y] + \mathbb{E}_\pi[(G(Y)\bigcdot Y)_T- (H(X)\bigcdot X)_T |Y] \,\big|^p\,\big]^{1/p}\\ 
	&\leq \mathbb{E}_\pi[ | C(Y)-C(X)|^p]^{1/p}+ \mathbb{E}_\pi[ \,|(H(X)\bigcdot Y)_T- (H(X)\bigcdot X)_T\,|^p\,]^{1/p}
	\end{align*}
	The claim now follows from the first and second estimates in Lemma \ref{lem:estimating.integrals}.

	In the second case where $H$ is additionally Lipschitz, let $d(X):=C(X)+(H(X)\bigcdot X)_T$ as well as $e(Y):=C(Y)+ (H(Y)\bigcdot Y)_T$ and $\gamma:=(e(Y),d(Y))(\pi)$.
	Then, similarly as before, 
	\begin{align*}
	&d_p\Big( e(Y)(\mathbb{Q}) \, ,\, d(X) ( \mathbb{P} ) \Big)
	\leq \mathbb{E}_\pi[ |e(Y) - d(Y)|^p]^{1/p} \\
	&\leq \mathbb{E}_\pi[ |C(Y) - C(X)|^p]^{1/p} + \mathbb{E}_\pi[ |(H(Y)\bigcdot Y)_T - (H(X)\bigcdot X)_T|^p]^{1/p}
	\end{align*}
	and the claim follows from the first and third estimates of Lemma \ref{lem:estimating.integrals}.
\end{proof}

\begin{remark}
	An evident question is whether an estimate for the usual Wasserstein distance holds true without the (Lipschitz-) continuity assumption on $H$. 
	Namely if \eqref{eq weak Wass} holds for $d_p$ instead of $d_p^w$.
	The following example shows that this is not true.
	In a two-period discrete time model $(T=2)$, let 
	\[\mathbb{P}:=\delta_{0}\otimes((\delta_1+\delta_{-1})/2) \quad\text{and}\quad 
	\mathbb{P}_\varepsilon:=((\delta_\varepsilon+\delta_{-\varepsilon})/2)\otimes((\delta_1+\delta_{-1})/2)\]
	so that $\mathcal{AW}_p(\mathbb{P}_\varepsilon,\mathbb{P})\to 0$ as $\varepsilon\to 0$ for every $p$.
	Then, set $H_1:=0$ and $H_2:=1_{(0,\infty)}-1_{(-\infty,0)}$.
	For the projection under any bi-causal coupling between $\mathbb{P}_\varepsilon$ and $\mathbb{P}$ of $H$ onto $Y$ one computes $G_1=0$ and $G_2=0$.
	In particular $(G(Y)\bigcdot Y)_T=0$ $\mathbb{P}$-almost surely.
	However, for every $\varepsilon>0$ one has $\mathbb{P}_\varepsilon( (H(X)\bigcdot X)_T\geq 1-\varepsilon )\geq 1/4$ which implies that the respective laws cannot converge.
\end{remark}

\begin{remark}
\label{BDGremark} 
By $b_p$ we denote
 the smallest real number such that 
\begin{align}\E[ \|M\|^p_\infty] \leq b_p \E[ [M]^{p/2}]  
\end{align}
 for every martingale $M$. For $p\geq 2$ it was established by Burkholder \cite{Bu91} that $b_p=p$  but the value of $b_p$ is unknown for $p\in [1,2)$ according to  \cite{Os10}, \cite[page 427]{Os12}. By \cite{BeSi13}, $b_1 \leq 6$. (The optimal constant in the reverse inequality is known for the trivial case $p=2$ and for $p=1$. In the latter instance one obtains $\sqrt 3$ 
\cite{Bu02} and $1.2727\ldots$ \cite{ScSt15} for continuous martingales, resp.)

\end{remark}

\section{Proofs of the results stated in the introduction and extensions}\label{sec proofs intro}

Thanks to work done in the previous section, the strategy for the proofs boils down into two parts. In a first step, one forgets about the space $\Omega$ and only focuses on continuity of the problem at hand with respect to $d_p$ or $d_p^{w}$ when image measures on $\mathbb{R}$ are plugged in: e.g.~in utility maximization this means to study continuity of $\mu\mapsto \int U(x)\,\mu(dx)$.
In a second step, one uses the obtained continuity and the contraction theorem in the previous section.

\subsection{Proof of Theorem \ref{thm:hedging.with.loss.linear}}
We will need the elementary estimate
\begin{lemma}
\label{lem:trivial.lemma}
	Let $\mu,\nu\in\mathcal{P}_1(\mathbb{R})$ and let $f\colon\mathbb{R}\to\mathbb{R}$ be convex and Lipschitz.
	\begin{align}\label{ineq lip conv}
		\int f(x)\mu(dx)-\int f(y)\,\nu(dy)\leq L\, d^{w}_1(\mu,\nu),
		\end{align}
	where $L$ is Lipschitz constant of $f$.
\end{lemma}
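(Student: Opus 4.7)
The plan is to unpack the weak Wasserstein cost $d_1^{w}$ in terms of an arbitrary coupling $\gamma$ of $\mu$ and $\nu$, then use convexity (via Jensen) and the Lipschitz bound on $f$ in the two natural places. Fix any coupling $\gamma$ of $\mu$ and $\nu$, and disintegrate $\gamma(dx,dy)=\mu(dx)\gamma^x(dy)$. The starting identity is
\[
\int f\,d\mu-\int f\,d\nu=\int f(x)\,\mu(dx)-\int\!\!\int f(y)\,\gamma^x(dy)\,\mu(dx),
\]
since $\nu$ is the second marginal of $\gamma$. So the task reduces to bounding the inner expression $f(x)-\int f(y)\,\gamma^x(dy)$ pointwise in $x$.

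Next I would apply Jensen's inequality in the variable $y$ for each fixed $x$: since $f$ is convex and $\gamma^x$ is a probability measure on $\mathbb{R}$ (with finite first moment, inherited from $\nu\in\mathcal{P}_1(\mathbb{R})$), one has
\[
\int f(y)\,\gamma^x(dy)\ \geq\ f\!\left(\int y\,\gamma^x(dy)\right)=:f(b(x)),
\]
where $b(x)$ denotes the barycenter of $\gamma^x$. Substituting this bound produces
\[
\int f\,d\mu-\int f\,d\nu\ \leq\ \int\bigl[f(x)-f(b(x))\bigr]\,\mu(dx).
\]

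Then I would use the Lipschitz hypothesis $|f(x)-f(b(x))|\leq L|x-b(x)|$ (without absolute value on the left, since we only need an upper bound) to get
\[
\int f\,d\mu-\int f\,d\nu\ \leq\ L\int \left|x-\int y\,\gamma^x(dy)\right|\,\mu(dx).
\]
Taking the infimum over all couplings $\gamma$ yields $\int f\,d\mu-\int f\,d\nu\leq L\,d_1^{w}(\mu,\nu)$, which is exactly \eqref{ineq lip conv}.

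There is no real obstacle here: the only mild thing to note is measurability and integrability of $x\mapsto b(x)$, which is standard for a regular disintegration of a probability on $\mathbb{R}^2$ with finite first moments. The essential point to highlight in the write-up is that convexity is used precisely to pass from $\int f\,d\gamma^x$ to $f$ evaluated at the barycenter, and it is this asymmetric step which is responsible for the appearance of the (asymmetric) weak Wasserstein cost $d_1^{w}$ rather than the usual $d_1$ on the right-hand side.
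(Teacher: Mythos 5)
Your proof is correct and matches the paper's argument essentially verbatim: an arbitrary coupling $\gamma$, Jensen's inequality applied to the convex $f$ and the conditional law $\gamma^x$ to drop to the barycenter, the Lipschitz bound, and an infimum over $\gamma$. Only the notation $b(x)$ for the barycenter is an addition; no substantive difference.
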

\begin{proof}
	Let $\gamma$ be a coupling of $\mu$ and $\nu$.
	Applying Jensen's inequality we obtain
	\begin{align*}
	&\int f(x)\,\mu(dx)-\int f(y)\,\nu(dy)
	=\int f(x)-f(y)\,\gamma(dx,dy)\\
	=&\int \Big  (f(x) -\int f(y)\,\gamma^x(dy)\Big )\,\mu(dx)
	\leq \int \Big ( f(x) - f\Big (\int y \,\gamma^x(dy)\Big )\Big)\,\mu(dx)\\
	\leq & L\int \Big| x- \int y\,\gamma^x(dy)\Big|\,\mu(dx).
	\end{align*}
	As $\gamma$ was arbitrary, this implies the claim.
\end{proof}
In fact there is equality in the previous lemma, if one takes supremum in the l.h.s.\ of \eqref{ineq lip conv} over all $L$-Lipschitz convex function, as shown in \cite[Proposition 3.2]{GoRoSaTe17}.

{	We now turn to the proof of Theorem \ref{thm:hedging.with.loss.linear}.	
	{For $n>0$ let $\pi$ be } a bi-causal coupling which attains the infimum in the definition of $\mathcal{AW}_{1}(\mathbb{P},\mathbb{Q})$ modulo a $1/n$-margin.
	By Lemma \ref{lem:projection.of.strategy} there is $G^n\in\mathcal{H}_k$ such that $(G^n(Y)\bigcdot Y)_T=\mathbb{E}_\pi[(H(X)\bigcdot Y)_T|Y]$ $\pi$-almost surely. 
	Define
	\[\mu^n:=(C(Y)+(G^n(Y)\bigcdot Y)_T)(\mathbb{Q})\quad\text{and}\quad \nu:=(C(X)+(H(X)\bigcdot X)_T)(\mathbb{P}).\]
	(Note that $\mu^n,\nu\in\mathcal{P}_1(\mathbb{R})$ as $\mathbb{P},\mathbb{Q}\in\mathcal{SM}_1(\Omega)$.)
	By Lemma \ref{lem:trivial.lemma} we have
	\begin{align*}
	&\mathbb{E}_\mathbb{Q}\big[ (C(Y)- m - (G^n(Y)\bigcdot Y)_T)^+ \big] -	\mathbb{E}_\mathbb{P}\big [(C(X) - m -(H(X)\bigcdot X)_T)^+ \big]\\ 	\leq &  d^{w}_1(\mu^n,\nu).
	\end{align*}
	From Theorem \ref{thm:contraction} we obtain
		\begin{align}\label{eq almost superhedge n}
	\begin{split}
	\mathbb{E}_\mathbb{Q}[(C(Y)-m-(G^n(Y) \bigcdot Y)_T))^+]
	&\leq \mathbb{E}_\mathbb{P}[(C(X)-m-(H(X)\bigcdot X)_T)^+]\\
	&\quad+b_1(k+L) \,(\mathcal{AW}_1(\mathbb{P},\mathbb{Q})+1/n).
	\end{split}
	\end{align}

	Assume first that $\mathbb{E}_\mathbb{Q}[[Y]_T]<\infty$ and denote by $A$ the finite variation process associated to $Y$.
	Then, as $(G^n)$ is uniformly bounded by $k$, there exists a predictable $G$ and a sequence of forward-convex combinations of $(G^n)$ which converge in $L^2(d\mathbb{Q}\otimes d([Y]+A))$ to $G$. 
	This, \eqref{eq almost superhedge n}, and the convexity of $(\cdot)^+$ lead to the desired conclusion.}
	{The general case follows by a simple but notationally heavy localization argument.}
 
	The proof in case that $G=H$ and $H$ is Lipschitz follows analogously from the second part of Theorem \ref{thm:contraction}.

\subsection{Proof of Theorem \ref{thm:avar}}

	In a first step notice that for all $\mathbb{P},\mathbb{P}'$ and random variables $Z,Z'$, it follows as in Lemma \ref{lem:trivial.lemma} that 
	\[\mathrm{AVaR}_\alpha^\mathbb{P}(Z)-\mathrm{AVaR}_\alpha^{\mathbb{P}'}(Z')\leq d_1^{w}(Z(\mathbb{P}),Z'(\mathbb{P}'))/\alpha.\]
{Indeed, if $\gamma$ is a coupling from $\mu:=Z(\mathbb P)$ to $\nu:=Z'(\mathbb P')$ then 
	\begin{align*}
	&\mathrm{AVaR}_\alpha^\mathbb{P}(Z)-\mathrm{AVaR}_\alpha^{\mathbb{P}'}(Z')\\ 
	= &
	\inf_m\int  \frac{1}{\alpha}(x-m)^+-m  \,\mu(dx) - \inf_m \frac{1}{\alpha}\int\int (y-m)^+\gamma^x(dy)-m \,\mu(dy)\\
	\leq &
	\sup_m  \frac{1}{\alpha}\int (x-m)^+-(y-m)^+ \,\gamma(dx,dy)\\
	\leq &
	\sup_m \frac{1}{\alpha}\int (x-m)^+-\Big(\int y\,\gamma^x(dy)-m\Big)^+ \,\mu(dx) \\
	\leq & \frac{1}{\alpha}\int \Big |x- \int y \, \gamma^x(dy) \Big |\,\mu(dx),
	\end{align*}
	so minimizing over $\gamma$ yields the claim.}

	The rest of the proof now follows the line of argumentation as in the proof for Theorem \ref{thm:hedging.with.loss.linear}. 
	Fix $\mathbb{P},\mathbb{Q}\in\mathcal{SM}_1(\Omega)$. 
	Assume only for notational simplicity that there exists a bi-causal coupling $\pi$ which attains the infimum in the definition of $\mathcal{AW}_{1}(\mathbb{P},\mathbb{Q})$, and that there exist	$H^\ast\in\mathcal{H}_k$ such that 
	\[\mathrm{AVaR}_\alpha^\mathbb{P}(C(X)-(H^\ast(X)\bigcdot X)_T)
	=\inf_{H\in\mathcal{H}_k}\mathrm{AVaR}_\alpha^\mathbb{P}(C(X)-(H(X)\bigcdot X)_T).\]
	By Lemma \ref{lem:projection.of.strategy} there is $G^\ast\in\mathcal{H}_k$ such that $(G^\ast(Y)\bigcdot Y)_T=\mathbb{E}_\pi[(H^\ast(X)\bigcdot Y)_T|Y]$ $\pi$-almost surely. 
	Therefore   
	\begin{align*}  
	& \inf_{G\in\mathcal{H}_k}\mathrm{AVaR}_\alpha^\mathbb{Q}(C(Y)-(G(Y)\bigcdot Y)_T) -\inf_{H\in\mathcal{H}_k}\mathrm{AVaR}_\alpha^\mathbb{P}(C(X)-(H(X)\bigcdot X)_T)  \\
	&\leq \mathrm{AVaR}_\alpha^\mathbb{Q}(C(Y)-(G^\ast(Y)\bigcdot Y)_T) -\mathrm{AVaR}_\alpha^\mathbb{P}(C(X)-(H^\ast(X)\bigcdot X)_T) \\
	&\leq \frac{1}{\alpha} d_1^{w}\Big( (C(Y)-(G^\ast(Y)\bigcdot Y)_T) ( \mathbb{Q} ) \, , \, (C(X)-(H^\ast(X)\bigcdot X)_T) ( \mathbb{P} ) \Big) \\
	&\leq \frac{ b_1(k+L)}{\alpha} \mathcal{AW}_1(\mathbb{P},\mathbb{Q}) ,
	\end{align*} 
	where the last inequality is due to Theorem \ref{thm:contraction}.
	Interchanging the role of $\mathbb{P}$ and $\mathbb{Q}$ yields the desired conclusion. The proof for the second estimate follows analogously.

\subsection{Proof of Example \ref{ex:hedge.BM}}
	First note that $\mathrm{AVaR}^{\mathbb{P}}_\alpha(Z)\geq \mathbb{E}_\mathbb{P}[Z]$ for every integrable random variable $Z$.
	Indeed, this follows from integrating the pointwise inequality $x=x+m-m\leq (x+m)^+/\alpha-m$.
	Therefore, as the Brownian stochastic integral has expectation zero, we conclude that $\inf_{H\in\mathcal{H}_k}\mathrm{AVaR}^{\mathbb{P}}_\alpha(C(X)- (H(X)\bigcdot X)_T)\geq \mathbb{E}_{\mathbb{P}}[C(X)]$.
	On the other hand, define 
	\[f(t,x):=\int c(x+y) \, N(0,\sigma^2(T-t))(dy) \quad\text{for } (t,x)\in[0,T]\times\mathbb{R},\]
	where $N(0,\sigma^2(T-t))$ stands for the normal distribution with mean 0 and variance $\sigma^2(T-t)$.
	Then $C(X)=f(T,X_T)$ and $\mathbb{E}_{\mathbb{P}}[f(t,X_t)|\mathcal{F}_s]= f(s,X_s)$ for every $0\leq s\leq t\leq T$.
	Thus, by It\^o's formula and fact that the martingale property implies that the finite variation part vanishes, one has $f(t,X_t)=f(0,0)+ (H^\ast(X)\cdot X)_T$ for the predictable trading strategy $H^\ast_t:=\partial_x f(t,X_t)$.
	As further $|H^\ast_t|\leq 1$ for every $t$ and $f(0,0)=\sigma/\sqrt{2\pi}$, one has 
	\[\inf_{H\in\mathcal{H}_1}\mathrm{AVaR}^{\mathbb{P}}_\alpha(C(X)- (H(X)\bigcdot X)_T)
	\leq \mathrm{AVaR}^\mathbb{P}_\alpha(C(X)-(H^\ast(X)\cdot X)_T)
	=\frac{\sigma}{\sqrt{2\pi}}.\]
	The proof now follows from the explicit formula for the adapted Wasserstein distance derived in Example \ref{ex det diff} and the fact that $\mathbb{E}_\mathbb{P}[C(X)]=\sigma/\sqrt{2\pi}$.

\subsection{Proof of Theorem \ref{thm:utility}}

	Recall that $U'(x)\leq c(1+|x|^{p-1})$ for all $x\in\mathbb{R}$ and some constant $c$.
	Let $\mathbb{P},\mathbb{Q}\in\mathcal{SM}_p(\Omega)$ be arbitrary and assume only for notational simplicity that there is $H^\ast\in\mathcal{H}_k$ such that \[\mathbb{E}_\mathbb{P}[U(C(X)+(H^\ast(X)\bigcdot X)_T)]=\sup_{H\in\mathcal{H}_k} \mathbb{E}_\mathbb{P}[U(C(X)+(H(X)\bigcdot X)_T)]\] 
	and that there is a bi-causal coupling $\pi$ coupling between $\mathbb{P}$ and $\mathbb{Q}$ which is optimal for $\mathcal{AW}_p(\mathbb{P},\mathbb{Q})$.
	By Lemma \ref{lem:projection.of.strategy} there is $G^\ast\in\mathcal{H}_k$ such that $(G^\ast(Y)\bigcdot Y)_T=\mathbb{E}_\pi[(H^\ast(X)\bigcdot Y)_T|Y]$ $\pi$-almost surely.
	Let $$\mu:=(C(Y)+(G^\ast(Y)\bigcdot Y)_T)(\mathbb{Q})\text{  and  }\nu:= (C(X)+(H^\ast(X)\bigcdot X)_T)(\mathbb{P}),$$ and let $\gamma$ be an (almost) optimal coupling for $d_p^{w}(\mu,\nu)$.
	As $U$ is concave and increasing, we have $U(y)-U(x)\leq U'(\min\{x,y\})|x-y|$. 
	Using Jensen's inequality for the concave function $U$ we have
	\begin{align*}
	&\sup_{H\in\mathcal{H}_k} \mathbb{E}_\mathbb{P}[U(C(X)+(H(X)\bigcdot X)_T)] - \sup_{G\in\mathcal{H}_k} \mathbb{E}_\mathbb{Q}[U(C(Y)+(G(Y)\bigcdot Y)_T)]\\
	&\leq \mathbb{E}_\mathbb{P}[U(C(X)+(H^\ast(X)\bigcdot X)_T)] - \mathbb{E}_\mathbb{Q}[U(C(Y)+(G^\ast(Y)\bigcdot Y)_T)] \\
	&=\int U(y)-U(x)\,\gamma(dx,dy)
	\leq \int U\Big(\int y\,\gamma^x(dy)\Big) -U(x)\,\mu(dx)\\
	&\leq \Big( \int  \Big|U'\Big(\min\Big\{x,\int y\,\gamma^x(dy)\Big\}\Big)\Big|^q \,\mu(dx)\Big)^{1/q} d_p^{w}(\mu,\nu),
	\end{align*}
	where we used H\"older's inequality in the last line and $q$ denotes the conjugate H\"older exponent of $p$ (that is, $1/p+1/q=1$).
	As $q(p-1)=p$, the growth assumption on $U'$ implies that $|U'(\min\{x,y\})|^q\leq c(1+|x|^p+|y|^p)$ for some (new) constant $c$.
	Then, by Lemma \ref{lem:estimating.integrals}, we have
	\begin{align*}
	&\int  \Big|U'\Big(\min\Big\{x,\int y\,\gamma^x(dy)\Big\}\Big)\Big|^q \,\mu(dx)\\
	&\leq c\Big( 1+ \int |x|^p \,\mu(dx)+ \int \Big|\int y\,\gamma^x(dy)\,\Big|^p\,\mu(dy)\Big)\\
	&\leq c\Big( 1+ \int |x|^p \,\mu(dx)+ \int |y|^p\,\nu(dy)\Big)\\
	&\leq \tilde{c} \big(1+ \mathcal{AW}_p(\mathbb{Q},\delta_0)^p + \mathcal{AW}_p(\mathbb{P},\delta_0)^p \big)
	=:e
	\end{align*}
	for $e:=\tilde{c}(1+R^p+R^p$).
	Exchanging the roles of $\mathbb{P}$ and $\mathbb{Q}$ and using Theorem \ref{thm:contraction} completes the proof.

\subsection{The proof of Theorem \ref{thm:utility.indifference}}

{
In a first step, we claim that $v(\mathbb{P})$ is uniformly bounded over all $\mathbb{P}$ with $\mathcal{AW}_p(\mathbb{P},\delta_0)\leq R$.
	Indeed, using the growth assumption on $U$, the fact that $U$ is stricly increasing, and the BDG-inequality to control the $p$-th moment of $(H\bigcdot X)_T$, it follows that there exist $a,A\in\mathbb{R}$ such that 
	\begin{align}
	\label{eq:utility.indiff.prelimiary}
	\inf U
	< a
	\leq \sup_{H\in\mathcal{H}_k}\mathbb{E}_\mathbb{P}[U((H\bigcdot X)_T)]
	\leq A
	<\sup U
	\end{align}
	for all $\mathbb{P}$ with $\mathcal{AW}_p(\mathbb{P},\delta_0)\leq R$.
	Now assume that there exists a sequence $\mathbb{P}_n$ with $\mathcal{AW}_p(\mathbb{P}_n,\delta_0)\leq R$ but $v(\mathbb{P}_n)\to\infty$.
	Then, using the  BDG-inequality once more, it follows that 
	\[\sup_{H\in\mathcal{H}_k} \mathbb{E}_{\mathbb{P}_n}[U(C-v(\mathbb{P}_n)+(H\bigcdot X)_T)]
	\to \inf U, \]
	a contradiction to \eqref{eq:utility.indiff.prelimiary}.
	The case $v(\mathbb{P}_n)\to-\infty$ is excluded analogously.
	}
	
{
	At this point, using the definition of $v(\mathbb{P})$, a twofold application of Theorem \ref{thm:utility} 
	yields
	\[ \Big| \sup_{H\in\mathcal{H}_k} \mathbb{E}_\mathbb{Q}[U(C-v(\mathbb{P})+(H\bigcdot X)_T)] -\sup_{H\in\mathcal{H}_k} \mathbb{E}_\mathbb{Q}[U( (H\bigcdot X)_T)]\Big|\leq 2K \cdot \mathcal{AW}_p(\mathbb{P},\mathbb{Q}).\]
	Indeed, while a direct application of the theorem would give a constant $K$ which depends on $v(\mathbb{P})$, an inspection of its proof shows that the constant $K$ depends only on the size of $v(\mathbb{P})$.
	By the first step this is bounded unifomly over $\mathbb{P}$ with $\mathcal{AW}_p(\mathbb{P},\delta_0)\leq R$.
	}
	
{
	Now let $\varepsilon>0$ and $H\in\mathcal{H}_k$ be arbitrary, and set $Y:=Y^H:=C-v(\mathbb{P})+(H\bigcdot X)_T$.
	Then, it follows that there is some constant $c>0$ (depending on $R\geq\mathcal{AW}_p(\mathbb{Q},\delta_0)$ and $U$) such that
	\[\mathbb{E}_\mathbb{Q}[U(Y+\varepsilon)]
	= \mathbb{E}_\mathbb{Q}[U(Y)] +\mathbb{E}_\mathbb{Q}\Big[ \int_{Y}^{Y+\varepsilon} U'(z)\,dz \Big] 
	\geq \mathbb{E}_\mathbb{Q}[U(Y)] +\varepsilon c.\]
	Indeed, this would follow directly if $Y$ were bounded by a fixed constant but readily extends to the present setting as $\mathbb{E}_\mathbb{Q}[|Y|^p]\leq C$ for some constant $C>0$, independent of $H$ and $\mathbb{Q}$ as long as $\mathcal{AW}_p(\mathbb{Q},\delta_0)\leq R$.
	In a similar manner $\mathbb{E}_\mathbb{Q}[U(Y-\varepsilon)]\leq \mathbb{E}_\mathbb{Q}[U(Y)] -\varepsilon c$.
	}
	
{
	Putting everything together reveals
	\begin{align*}
	&\sup_{H\in\mathcal{H}_k} \mathbb{E}_\mathbb{Q}[U(C-(v(\mathbb{P})+\varepsilon)+(H\bigcdot X)_T)]
	< \sup_{H\in\mathcal{H}_k} \mathbb{E}_\mathbb{Q}[U( (H\bigcdot X)_T)]\\
	&< \sup_{H\in\mathcal{H}_k} \mathbb{E}_\mathbb{Q}[U(C-(v(\mathbb{P})-\varepsilon)+(H\bigcdot X)_T)] 
	\end{align*}
	for some $\varepsilon< \tilde{C} \mathcal{AW}_p(\mathbb{P},\mathbb{Q})$ (where $\tilde{C}$ is a new constant emerging from $K$ and $c$).
	Thus $|v(\mathbb{Q})-v(\mathbb{P})|\leq \varepsilon\leq \tilde{C} \mathcal{AW}_p(\mathbb{P},\mathbb{Q})$ which completes the proof.
}
\subsection{Two generalizations}

The following two results can be proved using almost the same arguments as used in the proofs of Theorem \ref{thm:avar} and Theorem \ref{thm:utility}.
In particular the proofs boil down to establishing convergence for image measures with respect to $d_p$ and give no new insight on adapted Wasserstein distances, so we shall skip them.

\begin{proposition}
\label{prop:hedging.loss}
	Let $\ell\colon\mathbb{R}\to\mathbb{R}_+$ be a convex and strictly increasing function and let $\delta>0$.
	Assume that $p\geq 1$ is such that $\ell'(x)\leq c(1+|x|^{p-1})$ for some constant $c$. 
	Then, for every Lipschitz continuous function $C\colon\Omega\to\mathbb{R}$, the function
	\[\mathbb{P}\mapsto
	\inf\Big\{ m\in\mathbb{R} : \exists H\in\mathcal{H}_k \text{ such that }
		\mathbb{E}_\mathbb{P}[\ell(C(X)-(H(X)\bigcdot X)_T-m)]\leq\delta  \Big\} \] 
	is continuous on $(\mathcal{SM}_p(\Omega),\mathcal{AW}_p)$.
\end{proposition}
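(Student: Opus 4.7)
The plan is to mirror the proof of Theorem \ref{thm:utility.indifference} closely. Define
\[J(\mathbb{P}, m) := \inf_{H \in \mathcal{H}_k} \mathbb{E}_\mathbb{P}[\ell(C(X) - (H(X) \bigcdot X)_T - m)],\]
so that $v(\mathbb{P}) = \inf\{m \in \mathbb{R} : J(\mathbb{P}, m) \leq \delta\}$. Continuity of $v$ will follow from three ingredients, all uniform on balls $B_R := \{\mathbb{P} \in \mathcal{SM}_p(\Omega) : \mathcal{AW}_p(\mathbb{P}, \delta_0) \leq R\}$: (a) uniform boundedness of $v$ on $B_R$; (b) Lipschitz continuity $|J(\mathbb{P}, m) - J(\mathbb{Q}, m)| \leq K(R, m) \mathcal{AW}_p(\mathbb{P}, \mathbb{Q})$ at each $m$ in a bounded range; and (c) a uniform strict-decrease estimate $J(\mathbb{P}, m - \eta) \geq J(\mathbb{P}, m) + \eta c$ for some $c > 0$, valid for all $\mathbb{P} \in B_R$ and $m$ in a bounded range. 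Given (a)--(c), the sandwich argument from the last paragraph of the proof of Theorem \ref{thm:utility.indifference} yields continuity of $v$.

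The core technical step is (b) and it parallels the proof of Theorem \ref{thm:utility}. Fix $\mathbb{P}, \mathbb{Q} \in B_R$, pick $H^\ast \in \mathcal{H}_k$ nearly optimal for $J(\mathbb{P}, m)$, a near-optimal bi-causal coupling $\pi$, and the projection $G^\ast \in \mathcal{H}_k$ of $H^\ast$ from Lemma \ref{lem:projection.of.strategy}. Setting $\mu := (C(Y) - (G^\ast(Y) \bigcdot Y)_T - m)(\mathbb{Q})$ and $\nu := (C(X) - (H^\ast(X) \bigcdot X)_T - m)(\mathbb{P})$, the first part of Theorem \ref{thm:contraction} (applied to $-H^\ast$) gives
\[d_p^{w}(\mu, \nu) \leq 2^{(p-1)/p} b_p^{1/p}(k+L) \, \mathcal{AW}_p(\mathbb{P}, \mathbb{Q}).\]
Let $\gamma(dx, dy) = \mu(dx) \gamma^x(dy) \in \cpl(\mu, \nu)$ be near-optimal for $d_p^{w}$ and set $\bar y(x) := \int y \, \gamma^x(dy)$. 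Since $\ell$ is convex, Jensen's inequality yields $\int \ell(y) \gamma^x(dy) \geq \ell(\bar y(x))$; combining this with the local Lipschitz bound $|\ell(a) - \ell(b)| \leq c(1 + |a|^{p-1} + |b|^{p-1}) |a - b|$ (from the growth hypothesis on $\ell'$) and H\"older's inequality at conjugate exponent $q = p/(p-1)$ gives
\[J(\mathbb{Q}, m) - J(\mathbb{P}, m) \leq \int [\ell(x) - \ell(\bar y(x))] \, \mu(dx) + \varepsilon \leq K(R, m) \cdot d_p^{w}(\mu, \nu) + \varepsilon.\]
The moment factor $K(R, m)$ is controlled by $\int |y|^p \, \nu(dy)$ together with the Jensen bound $\int |\bar y(x)|^p \, \mu(dx) \leq \int |y|^p \, \nu(dy)$; both are uniformly bounded on $B_R$ via Lemma \ref{lem:estimating.integrals} and the BDG inequality. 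Exchanging $\mathbb{P}$ and $\mathbb{Q}$ completes (b).

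Items (a) and (c) are adapted from the first and third paragraphs of the proof of Theorem \ref{thm:utility.indifference}, with $\ell$ in place of $U$. For (a): choosing the trivial strategy $H \equiv 0$ gives $J(\mathbb{P}, m) \leq \mathbb{E}_\mathbb{P}[\ell(C(X) - m)] \to \inf \ell < \delta$ as $m \to \infty$, uniformly on $B_R$ (by dominated convergence using the $R$-uniform bound $\mathbb{E}_\mathbb{P}[|C(X)|^p] \leq K(R)$), while conversely $J(\mathbb{P}, m) > \delta$ for all sufficiently negative $m$, since then the argument of $\ell$ is forced to be large on a set of uniform mass. For (c), writing $Y := C(X) - (H(X) \bigcdot X)_T - m$, we have
\[\mathbb{E}_\mathbb{P}[\ell(Y) - \ell(Y - \eta)] = \mathbb{E}_\mathbb{P}\Big[\int_{Y-\eta}^Y \ell'(z) \, dz\Big] \geq \eta \cdot \mathbb{E}_\mathbb{P}[\ell'(Y - \eta)].\]
Strict monotonicity plus convexity of $\ell$ forces $\ell' > 0$ everywhere, and the $R$-uniform bound $\mathbb{E}_\mathbb{P}[|Y|^p] \leq \tilde K(R)$ gives $\mathbb{P}(Y \geq -M) \geq 1/2$ for some $M = M(R)$; hence $\mathbb{E}_\mathbb{P}[\ell'(Y - \eta)] \geq \ell'(-M - 1)/2 =: c > 0$ uniformly. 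The main obstacle will be tracking uniformity of constants as $m$ varies over a bounded interval; but since all bounds depend on $m$ only polynomially in $|m|$, once (a) restricts $m$ to a bounded set the rest is routine bookkeeping.
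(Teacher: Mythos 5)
The paper gives no explicit proof of this proposition, remarking only that it can be established by ``almost the same arguments as used in the proofs of Theorem~\ref{thm:avar} and Theorem~\ref{thm:utility}.'' Your plan instead mirrors the proof of Theorem~\ref{thm:utility.indifference}, which --- since the quantity in question is defined implicitly as the level at which the monotone map $m\mapsto J(\mathbb{P},m)$ crosses $\delta$ --- is the natural template, and the contraction argument in your step (b) is precisely the Theorem~\ref{thm:contraction}-plus-Lemma~\ref{lem:projection.of.strategy} estimate that the paper's hint alludes to. In (b) the Jensen inequality goes the useful way because $\ell$ is convex, the passage from $d_p^w$ to the difference of expectations via $\ell'(x)\leq c(1+|x|^{p-1})$ and H\"older is the same manipulation as in the proof of Theorem~\ref{thm:utility}, and the moment factor is indeed uniformly controlled on balls $B_R$ with at most polynomial dependence on $m$. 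Step (c) is also correct: the uniform lower bound on $\mathbb{E}_\mathbb{P}[\ell'(Y-\eta)]$ holds for each fixed $H$ and survives the infimum over $H$. Together (b) and (c) in fact yield local Lipschitz continuity, which is stronger than the stated continuity.

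Step (a) is overstated, and the gap is genuine though benign. You claim $\mathbb{E}_\mathbb{P}[\ell(C(X)-m)]\to\inf\ell$ \emph{uniformly} over $\mathbb{P}\in B_R$ from the moment bound $\mathbb{E}_\mathbb{P}[|C(X)|^p]\leq K(R)$ and dominated convergence. Since $\ell$ is convex with growth of order $|x|^p$, the tail term $\mathbb{E}_\mathbb{P}\big[|C(X)|^p\mathbf{1}_{\{C(X)>m\}}\big]$ must vanish uniformly over $B_R$, which requires uniform integrability of $|C(X)|^p$ --- not something a uniform $L^p$ bound provides. This is exactly where your problem differs from Theorem~\ref{thm:utility.indifference}: there $U$ is \emph{concave}, hence dominated from above by an affine function, so an $L^1$-type control (which H\"older does give from the $L^p$ bound plus a vanishing probability) suffices; that argument does not carry over to convex $\ell$. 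The fix is easy, though: global boundedness of $v$ on $B_R$ is not needed, only local boundedness near any $\mathbb{P}_0$ with $v(\mathbb{P}_0)<\infty$, and this follows from (b) and (c) alone --- by strict decrease choose $m_0>v(\mathbb{P}_0)$ with $J(\mathbb{P}_0,m_0)<\delta$, and then by (b) $J(\mathbb{Q},m_0)<\delta$, hence $v(\mathbb{Q})\leq m_0$, for all $\mathbb{Q}$ with $\mathcal{AW}_p(\mathbb{P}_0,\mathbb{Q})$ small. Replacing the global claim in (a) by this local observation closes the proof.
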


Let $\rho$ be a law-invariant risk measure which we directly view as a functional from $\mathcal{P}_p(\mathbb{R})$ to the reals.
For $\mathbb{P}\in\mathcal{SM}_p(\Omega)$ and a random variable $Z\colon\Omega\to\mathbb{R}$ (such that $Z( \mathbb{P})\in\mathcal{P}_p(\mathbb{R})$) we write $\rho^\mathbb{P}(Z)=\rho(Z(\mathbb{P}))$.
A typical example of a law invariant risk measure which satisfies $\rho(\mu)-\rho(\nu)\leq L d^w(\mu,\nu)$ for some constant $L$ depending on the $p$-the moment of $\mu$ and $\nu$ is the optimized certainty equivalent, introduced to the mathematical finance community in \cite{BeTe07}.
For a convex, increasing function $\ell\colon\mathbb{R}\to\mathbb{R}$ which is bounded from below and satisfies $\ell(x)/x\to\infty$ as $x\to\infty$, the optimized certainty equivalent is defined via
\[ \rho^\mathbb{P}(Z)
:=\inf_{m\in\mathbb{R}} \big( \mathbb{E}_\mathbb{P}[\ell(Z-m)]+m \big)
=\inf_{m\in\mathbb{R}} \Big( \int \ell(x-m)\,(Z( \mathbb{P}))(dx)+m \Big). \]
If $\ell'(x)\leq c(1+|x|^{p-1})$, then it follows that the infimum over $m$ can be taken in some compact set depending on the $p$-th moments.
Due to cash additivity of $\rho$, the following proposition has the same interpretation as Theorem \ref{thm:avar}.

\begin{proposition}
\label{prop:heding.law.invariant}
	Assume that $\rho\colon \mathcal{P}_p(\mathbb{R})\to\mathbb{R}$ satisfies $\rho(\mu)-\rho(\nu)\leq L d^w(\mu,\nu)$ for some constant $L$ depending on the $p$-the moment of $\mu$ and $\nu$.
	Then, for every Lipschitz function $C\colon\Omega\to\mathbb{R}$, the mapping
	\[ \mathbb{P} \mapsto \inf_{H\in\mathcal{H}_k} \rho^\mathbb{P}( C(X) - (H(X)\bigcdot X)_T ) \]
	is locally Lipschitz continuous on $(\mathcal{SM}_p(\Omega), \mathcal{AW}_p)$.
\end{proposition}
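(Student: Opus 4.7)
The plan is to mirror the proofs of Theorem \ref{thm:avar} and Theorem \ref{thm:utility}, which both split neatly into a continuity step in the space of one-dimensional image laws and an application of the contraction principle (Theorem \ref{thm:contraction}). Concretely, fix $R\geq 0$ and $\mathbb{P},\mathbb{Q}\in\mathcal{SM}_p(\Omega)$ with $\mathcal{AW}_p(\mathbb{P},\delta_0),\mathcal{AW}_p(\mathbb{Q},\delta_0)\leq R$. Assume (after the usual approximation arguments) that the infimum in $\mathcal{AW}_p(\mathbb{P},\mathbb{Q})$ is attained by a bi-causal coupling $\pi$, and that there exists $H^\ast\in\mathcal{H}_k$ which attains $\inf_{H\in\mathcal{H}_k}\rho^\mathbb{P}(C(X)-(H(X)\bigcdot X)_T)$. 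Apply Lemma \ref{lem:projection.of.strategy} to obtain $G^\ast\in\mathcal{H}_k$ such that $(G^\ast(Y)\bigcdot Y)_T=\mathbb{E}_\pi[(H^\ast(X)\bigcdot Y)_T \mid Y]$, $\pi$-a.s.

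Next, set
\[\mu:=(C(Y)-(G^\ast(Y)\bigcdot Y)_T)(\mathbb{Q}),\qquad \nu:=(C(X)-(H^\ast(X)\bigcdot X)_T)(\mathbb{P}).\]
By law-invariance of $\rho$, the law-invariant comparison reads
\begin{align*}
&\inf_{G\in\mathcal{H}_k}\rho^\mathbb{Q}(C(Y)-(G(Y)\bigcdot Y)_T)-\inf_{H\in\mathcal{H}_k}\rho^\mathbb{P}(C(X)-(H(X)\bigcdot X)_T)\\
&\leq \rho^\mathbb{Q}(C(Y)-(G^\ast(Y)\bigcdot Y)_T)-\rho^\mathbb{P}(C(X)-(H^\ast(X)\bigcdot X)_T)
= \rho(\mu)-\rho(\nu).
\end{align*}
By hypothesis on $\rho$ this is at most $L\cdot d_p^{w}(\mu,\nu)$, where $L$ depends only on the $p$-th moments of $\mu$ and $\nu$. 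The first part of Theorem \ref{thm:contraction} (applied with $C$ replaced by $-C$ and $H$ by $-H^\ast$, which only changes signs) then yields
\[d_p^{w}(\mu,\nu)\leq 2^{(p-1)/p}b_p^{1/p}(k+L_C)\,\mathcal{AW}_p(\mathbb{P},\mathbb{Q}),\]
where $L_C$ denotes the Lipschitz constant of $C$.

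The main obstacle, and the reason this is a \emph{local} rather than global Lipschitz statement, is controlling the constant $L$ uniformly in $\mathbb{P},\mathbb{Q}$. For this I would use Lemma \ref{lem:estimating.integrals} (with $\pi=\mathbb{P}\otimes\delta_0$ or $\mathbb{Q}\otimes\delta_0$, so that $M=M^X$ and $A=A^X$) to bound
\[\int|x|^p\,\nu(dx)\leq c_p\bigl(\mathbb{E}_\mathbb{P}[C(X)^p]+k^p b_p\,\mathbb{E}_\mathbb{P}[\langle M^X\rangle_T^{p/2}+|A^X|_{\text{1-var}}^p]\bigr),\]
which (using the Lipschitz property of $C$ together with the estimate $\mathbb{E}_\mathbb{P}[\|X\|_\infty^p]\lesssim \mathcal{AW}_p(\mathbb{P},\delta_0)^p$ from the same lemma) is dominated by a function of $R$ only; the analogous estimate for $\mu$ under $\mathbb{Q}$ goes through verbatim. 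Consequently $L$ can be bounded by a constant $L(R)$, giving
\[\rho(\mu)-\rho(\nu)\leq L(R)\cdot 2^{(p-1)/p}b_p^{1/p}(k+L_C)\,\mathcal{AW}_p(\mathbb{P},\mathbb{Q}).\]
Interchanging the roles of $\mathbb{P}$ and $\mathbb{Q}$ finishes the proof. The standard removal of the ``optimizer exists'' assumption is achieved by $\varepsilon$-approximate minimizers and $1/n$-approximate couplings as in the proofs of Theorems \ref{thm:hedging.with.loss.linear} and \ref{thm:avar}.
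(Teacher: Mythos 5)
Your proof is correct and matches the paper's intended approach: the paper explicitly skips this proof, stating it follows "the same arguments as used in the proofs of Theorem~\ref{thm:avar} and Theorem~\ref{thm:utility}," which is precisely the template you reproduce (projection of the optimizer via Lemma~\ref{lem:projection.of.strategy}, the $d_p^w$-Lipschitz hypothesis on $\rho$, the contraction Theorem~\ref{thm:contraction}, uniform moment control giving the \emph{local} Lipschitz constant, and interchange of $\mathbb{P},\mathbb{Q}$). One minor quibble: to match the sign in Theorem~\ref{thm:contraction} you only need to replace $H^\ast$ by $-H^\ast$ (hence $G^\ast$ by $-G^\ast$); flipping $C$ as well is unnecessary and would change which measures appear inside $d_p^w$.
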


Finally, let us point out that (though not a convex risk measure) the Value-at-Risk ($\mathrm{VaR}$) would be another natural candidate to study continuity.
However, as $\mathrm{VaR}$ is not continuous w.r.t.~weak convergence, already in a one period model continuity of 
$\mathbb{P}\mapsto \inf\{ m\in\mathbb{R} : \text{there is } H\in\mathcal{H}_k \text{ with } \mathrm{VaR}^\mathbb{P}(C(X)-m-(H(X)\bigcdot X)_T)\leq 0\}$ does not hold.

\section{Final remarks}\label{sec the end}

\begin{remark}[Usual Wasserstein does not work I]
\label{ex:wasserstein.does.not.work}
	We note that convergence in the usual Wasserstein distance is not sufficient to obtain continuity in any of the problems we study in this paper. 
	Consider a two period market with 
	\begin{align*}
	\mathbb{P}_n&=\frac{1}{4}\Big( \delta_{(1/n,1)} +\delta_{(1/n,0)} + \delta_{(-1/n,0)} + \delta_{(-1/n,-1)}\Big),\\
	\mathbb{P}&=\frac{1}{4}\Big( \delta_{(0,1)} +2\delta_{(0,0)} +\delta_{(0,-1)}\Big).
	\end{align*}
	Then $\mathbb{P}$ and each $\mathbb{P}_n$ satisfy the classical no-arbitrage condition, unlike the situation described in Figure \ref{fig:usual.wasserstein}.
	While $\mathbb{P}_n$ converges to $\mathbb{P}$ in usual Wasserstein distance, one can verify that convergence in nested distance does not hold.
	For example in utility maximization of the trivial claim $C=0$, we have $\sup_{H\in\mathcal{H}_k} \mathbb{E}_\mathbb{P}[U(C(X)+(H(X)\bigcdot X)_T)]=U(0)$ by Jensen's inequality (as $X$ is a martingale under $\mathbb{P}$).
	For $\mathbb{P}_n$  taking the strategy $H^\ast$ consisting of $H^\ast_0=0$ and $H^\ast_1(x)=k\mathop{\mathrm{sign}}(x)$, one gets
	\[ \sup_{H\in\mathcal{H}_k} \mathbb{E}_{\mathbb{P}_n}[U(C(X)+(H(X)\bigcdot X)_T)]
	\geq \mathbb{E}_{\mathbb{P}_n}[U(C(X)+(H^\ast(X)\bigcdot X)_T)]
	\to U(k), \] 
	showing the lack of continuity.
\end{remark}

\begin{remark}[Usual Wasserstein does not work II]
\label{rem:superhedging.does.not.work}
	As explained in the introduction, the objective in Theorem \ref{thm:hedging.with.loss.linear} can be seen as a relaxed version of the superhedging problem.
	The reason to consider this relaxation is not a technical simplification but necessary to to obtain continuity without further assumptions.
	Indeed, the problem of superhedging 
	\[ \inf\Big\{ m\in\mathbb{R} : \text{there is }  H\in\mathcal{H}_k \text{ such that } m+ (H\bigcdot X)_T \geq C(X),  \,\mathbb{P}\text{-almost surely}\big\} \]
	is not continuous in $\mathbb{P}$ w.r.t.~adapted distance for any $k\in[0,\infty]$.
	In fact, this already happens in one period, where adapted and the usual Wasserstein distances coincide.
	Consider a sequence of measures $\mathbb{P}_n$ with full support which converge weakly to a measure $\mathbb{P}$.
	Then the superhedging price w.r.t.~$\mathbb{P}_n$ equals the concave envelope of $C$, while the superhedging price w.r.t.~$\mathbb{P}$ equals the concave envelope of	$C$ restricted to the support of $\mathbb{P}$.
	For a recent paper on this problem in one period, see the work of Ob{\l}{\'o}j and Wiesel \cite{ObWi18}.
\end{remark}

\begin{remark}[Uniformly bounded strategies are necessary]
\label{rem:bounded.strategies}
	Similar as in Remark  \ref{rem:superhedging.does.not.work} the restriction to trading strategies in $\mathcal{H}_k$ (i.e.~ uniformly bounded strategies) is also no technical simplification. 
	For example, in a one-period framework, the measures $\mathbb{P}^\varepsilon:=(1-\varepsilon)\delta_{(0,\varepsilon)}+\varepsilon\delta_{(0,-\varepsilon)}$ converges to $\mathbb{P}:=\delta_{(0,0)}$ in every (adapted) Wasserstein distance. 
	However, we have for small $\epsilon>0$
	\[ \inf_{H\in\mathcal{H}_\infty } \mathrm{AVaR}_\alpha^{\mathbb{P}^\varepsilon}((H\bigcdot X)_T)
	=-\infty
	\quad\text{while}\,\,\,
	\inf_{H\in\mathcal{H}_\infty } \mathrm{AVaR}_\alpha^{\mathbb{P}}((H\bigcdot X)_T)=0,\]
	where $H\in\mathcal{H}_\infty:=\bigcup_{k\in\mathbb{N}}\mathcal{H}_k$ is the set of all bounded trading strategies.
\end{remark}

 \medskip

\paragraph{\textbf{Acknowledgements}}
All authors are  grateful to the anonymous referees whose insightful comments had a significant impact on this article. J.\ Backhoff gratefully acknowledges financial support by the FWF through grant P30750 and by the Vienna University of Technology. D.\ Bartl has been funded by the Austrian Science Fund (FWF) under Project P28661.
M.\ Beiglboeck and M.\ Eder gratefully acknowledge financial support by the FWF through grant Y782.

%
%


\bibliographystyle{abbrv}
\bibliography{joint_biblio}

\begin{thebibliography}{10}

\bibitem{AcBaZa16}
B.~{Acciaio}, J.~{Backhoff-Veraguas}, and A.~{Zalashko}.
\newblock {Causal optimal transport and its links to enlargement of filtrations
  and continuous-time stochastic optimization}.
\newblock {\em Forthcoming at Stoch. Processes and their Applications}, 2016.

\bibitem{Al81}
D.~J. Aldous.
\newblock Weak convergence and general theory of processes.
\newblock Unpublished monograph; Department of Statistics, University of
  California, Berkeley, CA 94720, July 1981.

\bibitem{AlCoJo17}
A.~Alfonsi, J.~Corbetta, and B.~Jourdain.
\newblock Sampling of one-dimensional probability measures in the convex order
  and computation of robust option price bounds.
\newblock {\em International Journal of Theoretical and Applied Finance},
  22(03):1950002, 2019.

\bibitem{AlBoCh18}
J.-J. Alibert, G.~Bouchitte, and T.~Champion.
\newblock A new class of cost for optimal transport planning.
\newblock {\em hal-preprint}, 2018.

\bibitem{AvLePa95}
M.~Avellaneda, A.~Levy, and A.~Par{\`a}s.
\newblock Pricing and hedging derivative securities in markets with uncertain
  volatilities.
\newblock {\em Appl. Math. Finance}, 2(2):73--88, 1995.

\bibitem{BaBaBeEd19b}
J.~{Backhoff-Veraguas}, D.~{Bartl}, M.~{Beiglb{\"o}ck}, and M.~{Eder}.
\newblock {All Adapted Topologies are Equal}.
\newblock {\em arXiv e-prints}, page arXiv:1905.00368, May 2019.

\bibitem{BaBaBeWi20}
J.~{Backhoff-Veraguas}, D.~Bartl, M.~Beiglb{\"o}ck, and J.~Wiesel.
\newblock Estimating processes in adapted wasserstein distance.
\newblock {\em arXiv e-prints}, 2020.

\bibitem{BaBeEdPi17}
J.~{Backhoff-Veraguas}, M.~{Beiglb\"ock}, M.~Eder, and A.~Pichler.
\newblock {Fundamental properties of process distances}.
\newblock {\em ArXiv e-prints}, 2017.

\bibitem{BaBeHuKa17}
J.~{Backhoff-Veraguas}, M.~{Beiglb{\"o}ck}, M.~{Huesmann}, and
  S.~{K{\"a}llblad}.
\newblock {Martingale Benamou--Brenier: a probabilistic perspective}.
\newblock {\em To appear in Annals of Probability}, Aug. 2018.

\bibitem{BaBeLiZa16}
J.~Backhoff-Veraguas, M.~Beiglb{\"o}ck, Y.~Lin, and A.~Zalashko.
\newblock Causal transport in discrete time and applications.
\newblock {\em SIAM Journal on Optimization}, 27(4):2528--2562, 2017.

\bibitem{BaBePa18}
J.~Backhoff-Veraguas, M.~Beiglb{\"o}ck, and G.~Pammer.
\newblock Existence, duality, and cyclical monotonicity for weak transport
  costs.
\newblock {\em Calculus of Variations and Partial Differential Equations},
  58(6):203, 2019.

\bibitem{BaPa19}
J.~{Backhoff-Veraguas} and G.~{Pammer}.
\newblock {Stability of martingale optimal transport and weak optimal
  transport}.
\newblock {\em arXiv e-prints}, page arXiv:1904.04171, Apr 2019.

\bibitem{BaDrObWi19}
D.~Bartl, S.~Drapeau, J.~{Ob{\l}{\'o}j}, and J.~Wiesel.
\newblock {\em 2019}, private communication.

\bibitem{BeKe17}
D.~Becherer and K.~Kentia.
\newblock Good deal hedging and valuation under combined uncertainty about
  drift and volatility.
\newblock {\em Probab. Uncertain. Quant. Risk}, 2:Paper No. 13, 40, 2017.

\bibitem{BeHePe12}
M.~Beiglb{\"o}ck, P.~{Henry-Labord{\`e}re}, and F.~Penkner.
\newblock Model-independent bounds for option prices: A mass transport
  approach.
\newblock {\em Finance Stoch.}, 17(3):477--501, 2013.

\bibitem{BeJu16}
M.~Beiglb{\"o}ck and N.~Juillet.
\newblock On a problem of optimal transport under marginal martingale
  constraints.
\newblock {\em Ann. Probab.}, 44(1):42--106, 2016.

\bibitem{BeSi13}
M.~{Beiglb{\"o}ck} and P.~Siorpaes.
\newblock Pathwise versions of the {B}urkholder-{D}avis-{G}undy inequality.
\newblock {\em Bernoulli}, 21(1):360--373, 2015.

\bibitem{BeCoHu16}
M.~{Beiglboeck}, A.~{Cox}, and M.~{Huesmann}.
\newblock {The geometry of multi-marginal Skorokhod Embedding}.
\newblock {\em PTRF, to appear}, page arXiv:1705.09505, May 2019.

\bibitem{BeTe07}
A.~Ben-Tal and M.~Teboulle.
\newblock An old-new concept of convex risk measures: The optimized certainty
  equivalent.
\newblock {\em Mathematical Finance}, 17(3):449--476, 2007.

\bibitem{BiTa19}
J.~Bion-Nadal and D.~Talay.
\newblock On a {W}asserstein-type distance between solutions to stochastic
  differential equations.
\newblock {\em Ann. Appl. Probab.}, 29(3):1609--1639, 2019.

\bibitem{BoNu13}
B.~Bouchard and M.~Nutz.
\newblock Arbitrage and duality in nondominated discrete-time models.
\newblock {\em The Annals of Applied Probability}, 25(2):823--859, 2015.

\bibitem{Bu91}
D.~L. Burkholder.
\newblock Explorations in martingale theory and its applications.
\newblock In {\em \'{E}cole d'\'{E}t\'e de {P}robabilit\'es de {S}aint-{F}lour
  {XIX}---1989}, volume 1464 of {\em Lecture Notes in Math.}, pages 1--66.
  Springer, Berlin, 1991.

\bibitem{Bu02}
D.~L. Burkholder.
\newblock The best constant in the {D}avis inequality for the expectation of
  the martingale square function.
\newblock {\em Trans. Amer. Math. Soc.}, 354(1):91--105 (electronic), 2002.

\bibitem{CaLaMa14}
L.~{Campi}, I.~{Laachir}, and C.~{Martini}.
\newblock {Change of numeraire in the two-marginals martingale transport
  problem}.
\newblock {\em Finance Stoch.}, 21(2):471--486, June 2017.

\bibitem{Co06}
R.~Cont.
\newblock Model uncertainty and its impact on the pricing of derivative
  instruments.
\newblock {\em Mathematical finance}, 16(3):519--547, 2006.

\bibitem{DoSo12}
Y.~Dolinsky and H.~M. Soner.
\newblock Martingale optimal transport and robust hedging in continuous time.
\newblock {\em Probab. Theory Relat. Fields}, 160(1-2):391--427, 2014.

\bibitem{Ed19}
M.~{Eder}.
\newblock {Compactness in Adapted Weak Topologies}.
\newblock {\em arXiv e-prints}, page arXiv:1905.00856, May 2019.

\bibitem{ElJeSh98}
N.~{El Karoui}, M.~Jeanblanc, and S.~Shreve.
\newblock Robustness of the {B}lack and {S}choles formula.
\newblock {\em Math. Finance}, 8(2):93--126, 1998.

\bibitem{GaHeTo13}
A.~Galichon, P.~{Henry-Labord{\`e}re}, and N.~Touzi.
\newblock A stochastic control approach to no-arbitrage bounds given marginals,
  with an application to lookback options.
\newblock {\em Ann. Appl. Probab.}, 24(1):312--336, 2014.

\bibitem{GlPfPi17}
M.~Glanzer, G.~C. Pflug, and A.~Pichler.
\newblock Incorporating statistical model error into the calculation of
  acceptability prices of contingent claims.
\newblock {\em Mathematical Programming}, 174(1-2):499--524, 2019.

\bibitem{GoRoSaSh18}
N.~Gozlan, C.~Roberto, P.-M. Samson, Y.~Shu, and P.~Tetali.
\newblock Characterization of a class of weak transport-entropy inequalities on
  the line.
\newblock {\em Ann. Inst. Henri Poincar\'e Probab. Stat.}, 54(3):1667--1693,
  2018.

\bibitem{GoRoSaTe17}
N.~Gozlan, C.~Roberto, P.-M. Samson, and P.~Tetali.
\newblock Kantorovich duality for general transport costs and applications.
\newblock {\em J. Funct. Anal.}, 273(11):3327--3405, 2017.

\bibitem{He96}
M.~F. Hellwig.
\newblock Sequential decisions under uncertainty and the maximum theorem.
\newblock {\em J. Math. Econom.}, 25(4):443--464, 1996.

\bibitem{HeMu17}
S.~Herrmann and J.~Muhle-Karbe.
\newblock Model uncertainty, recalibration, and the emergence of delta--vega
  hedging.
\newblock {\em Finance and Stochastics}, 21(4):873--930, Oct 2017.

\bibitem{HeMuSe17}
S.~Herrmann, J.~Muhle-Karbe, and F.~T. Seifried.
\newblock Hedging with small uncertainty aversion.
\newblock {\em Finance and Stochastics}, 21(1):1--64, Jan 2017.

\bibitem{Ho98a}
D.~Hobson.
\newblock Robust hedging of the lookback option.
\newblock {\em Finance and Stochastics}, 2:329--347, 1998.

\bibitem{Ho11}
D.~Hobson.
\newblock The {S}korokhod embedding problem and model-independent bounds for
  option prices.
\newblock In {\em Paris-{P}rinceton {L}ectures on {M}athematical {F}inance
  2010}, volume 2003 of {\em Lecture Notes in Math.}, pages 267--318. Springer,
  Berlin, 2011.

\bibitem{HoNe12}
D.~Hobson and A.~Neuberger.
\newblock Robust bounds for forward start options.
\newblock {\em Math. Finance}, 22(1):31--56, 2012.

\bibitem{Ho98c}
D.~G. Hobson.
\newblock Volatility misspecification, option pricing and superreplication via
  coupling.
\newblock {\em Ann. Appl. Probab.}, 8(1):193--205, 1998.

\bibitem{KaSh12}
I.~Karatzas and S.~Shreve.
\newblock {\em Brownian motion and stochastic calculus}, volume 113.
\newblock Springer Science \& Business Media, 2012.

\bibitem{ZitKar}
C.~Kardaras and G.~{\v{Z}}itkovi{\'c}.
\newblock Stability of the utility maximization problem with random endowment
  in incomplete markets.
\newblock {\em Math. Finance}, 21(2):313--333, 2011.

\bibitem{La18}
D.~{Lacker}.
\newblock {Dense sets of joint distributions appearing in filtration
  enlargements, stochastic control, and causal optimal transport}.
\newblock {\em ArXiv e-prints}, 2018.

\bibitem{Larsenpref}
K.~Larsen.
\newblock Continuity of utility-maximization with respect to preferences.
\newblock {\em Math. Finance}, 19(2):237--250, 2009.

\bibitem{ZitLar}
K.~Larsen and G.~{\v{Z}}itkovi{\'c}.
\newblock Stability of utility-maximization in incomplete markets.
\newblock {\em Stochastic Process. Appl.}, 117(11):1642--1662, 2007.

\bibitem{Las18}
R.~{Lassalle}.
\newblock {Causal transference plans and their Monge-Kantorovich problems}.
\newblock {\em Stochastic Analysis and Applications}, 36(3):452--484, 2018.

\bibitem{Ly95}
T.~J. Lyons.
\newblock Uncertain volatility and the risk-free synthesis of derivatives.
\newblock {\em Applied Mathematical Finance}, 2(2):117--133, 1995.

\bibitem{MWstability}
M.~Mocha and N.~Westray.
\newblock The stability of the constrained utility maximization problem: a
  {BSDE} approach.
\newblock {\em SIAM J. Financial Math.}, 4(1):117--150, 2013.

\bibitem{Ob04}
J.~Ob{\l}{\'o}j.
\newblock The {S}korokhod embedding problem and its offspring.
\newblock {\em Probab. Surv.}, 1:321--390, 2004.

\bibitem{ObWi18}
J.~Ob{\l}{\'o}j and J.~Wiesel.
\newblock Statistical estimation of superhedging prices.
\newblock {\em ArXiv e-prints}, 2018.

\bibitem{Os10}
A.~Osekowski.
\newblock Sharp maximal inequalities for the martingale square bracket.
\newblock {\em Stochastics: An International Journal of Probability and
  Stochastics Processes}, 82(06):589--605, 2010.

\bibitem{Os12}
A.~Osekowski.
\newblock {\em Sharp martingale and semimartingale inequalities}, volume~72 of
  {\em Instytut Matematyczny Polskiej Akademii Nauk. Monografie Matematyczne
  (New Series) [Mathematics Institute of the Polish Academy of Sciences.
  Mathematical Monographs (New Series)]}.
\newblock Birkh\"{a}user/Springer Basel AG, Basel, 2012.

\bibitem{Pf09}
G.~C. Pflug.
\newblock Version-independence and nested distributions in multistage
  stochastic optimization.
\newblock {\em SIAM Journal on Optimization}, 20(3):1406--1420, 2009.

\bibitem{PfPi12}
G.~C. Pflug and A.~Pichler.
\newblock A distance for multistage stochastic optimization models.
\newblock {\em SIAM J. Optim.}, 22(1):1--23, 2012.

\bibitem{PfPi14}
G.~C. Pflug and A.~Pichler.
\newblock {\em Multistage stochastic optimization}.
\newblock Springer Series in Operations Research and Financial Engineering.
  Springer, Cham, 2014.

\bibitem{PfPi16}
G.~C. Pflug and A.~Pichler.
\newblock From empirical observations to tree models for stochastic
  optimization: convergence properties.
\newblock {\em SIAM J. Optim.}, 26(3):1715--1740, 2016.

\bibitem{Pr07b}
A.~Pratelli.
\newblock On the equality between {M}onge's infimum and {K}antorovich's minimum
  in optimal mass transportation.
\newblock {\em Ann. Inst. H. Poincar\'{e} Probab. Statist.}, 43(1):1--13, 2007.

\bibitem{prigent2003weak}
J.-L. Prigent.
\newblock Weak convergence of financial markets.
\newblock In {\em Weak Convergence of Financial Markets}, pages 129--265.
  Springer, 2003.

\bibitem{Rueschendorf}
L.~R{\"u}schendorf.
\newblock The {W}asserstein distance and approximation theorems.
\newblock {\em Z. Wahrsch. Verw. Gebiete}, 70(1):117--129, 1985.

\bibitem{ScSt15}
W.~{Schachermayer} and F.~{Stebegg}.
\newblock {The Sharp Constant for the Burkholder-Davis-Gundy Inequality and
  Non-Smooth Pasting}.
\newblock {\em Bernoulli, to appear}, July 2017.

\bibitem{weston2016stability}
K.~Weston.
\newblock Stability of utility maximization in nonequivalent markets.
\newblock {\em Finance and Stochastics}, 20(2):511--541, 2016.

\bibitem{Wi19}
J.~{Wiesel}.
\newblock {Continuity of the martingale optimal transport problem on the real
  line}.
\newblock {\em arXiv e-prints}, page arXiv:1905.04574, May 2019.

\bibitem{YW}
T.~Yamada and S.~Watanabe.
\newblock On the uniqueness of solutions of stochastic differential equations.
\newblock {\em Journal of Mathematics of Kyoto University}, 11(1):155--167,
  1971.

\end{thebibliography}

\end{document}